\documentclass{llncs}

\usepackage{amsmath}
 \usepackage{amssymb}
 \usepackage{float}

 \usepackage{geometry}
\usepackage{color}
\usepackage{ulem} 

\usepackage{mathrsfs}
\usepackage{algorithmicx}
 \usepackage{algorithm}
 \usepackage{algpseudocode}
\usepackage{microtype}
\usepackage{graphicx}
\usepackage{adjustbox}
\newtheorem{observation}{Observation} 
\spnewtheorem{clm}{Claim}{\bfseries}{\rmfamily}
\newcommand{\evc}[0]{\operatorname{evc}}

\newcommand{\mvc}[0]{\operatorname{mvc}}



\bibliographystyle{splncs}

\title{Eternal vertex cover number of maximal outerplanar graphs}
\titlerunning{Eternal vertex cover number of maximal outerplanar graphs} 

\author{Jasine Babu\inst{1} \and K. Murali Krishnan \inst{2} \and Veena Prabhakaran \inst{1} \and Nandini J. Warrier \inst{2}}
\institute{Indian Institute of Technology Palakkad, India \email{jasine@iitpkd.ac.in,veenaprabhakaran7@gmail.com}
\and National Institute of Technology Calicut, India \email{kmurali@nitc.ac.in, nandini.wj@gmail.com}}

\authorrunning{J.\,Babu, K.\,Murali Krishnan, V.\.Prabhakaran, D.\.Rajendraprasad  and Nandini.\.J.\.Warrier} 


\begin{document}
\maketitle
\begin{abstract}
Eternal vertex cover problem is a variant of the classical vertex cover problem modeled as a two player attacker-defender game. Computing eternal vertex cover number of graphs is known to be NP-hard in general and the complexity status of the problem for bipartite graphs is open. There is a quadratic complexity algorithm known for this problem for chordal graphs. Maximal outerplanar graphs forms a subclass of chordal graphs, for which no algorithm of sub-quadratic time complexity is known. In this paper, we obtain a recursive algorithm of linear time for computing eternal vertex cover number of maximal outerplanar graphs.   
\end{abstract}
\keywords{Eternal Vertex Cover, Maximal Outerplanar Graph, Linear Time Algorithm}
\section{Introduction}
A set $S$ of vertices in a graph $G$ forms a vertex cover of $G$ if
every edge in $G$ has at least one end point in $S$.   The size of a minimum vertex cover in $G$, called the vertex cover number of $G$,  will be denoted by $\mvc(G)$. The eternal vertex cover problem of graphs is motivated by the following dynamic network security / fault-tolerance model. A network can be modeled by a graph $G(V,E)$, where the nodes of the network are represented by the vertices of $G$ and the links by the edges of $G$.  The problem is to deploy a minimum set of guards at the nodes, so that if there is an attack (or fault) on a single link at any time, a guard is available at the end of the link, who can move across the link to defend (or repair) the attack (or fault). Simultaneously, the remaining guards need to reconfigure themselves, possibly by repositioning themselves to one of their adjacent nodes, so that any attack (or fault) on a single edge at any future instant of time can also be protected in the same manner. Thus, the model requires guaranteeing protection against single link attacks/failures ad-infinitum.  It is immediate that lowest number of guards needed to achieve this goal in a graph $G$ is at least $\mvc(G)$.  If $k$ guards are sufficient to achieve the goal, then we say that $G$ is \textit{$k$-defendable}.

Formally, guards are initially placed on a vertex cover $C_0$ of $G$, forming an initial configuration.  Suppose at instant $i$, guards are placed in a configuration $C_i$ and an attack occurs on an arbitrary edge $uv\in E(G)$, then a guard at either $u$ or $v$ (or both) must move across the edge.  Other guards may or not move across one of their neighboring edges simultaneously. At the end of these movements, the next configuration $C_{i+1}$ is reached.  To ensure that all edges remain guarded at the instant $i+1$, we require $C_{i+1}$ to be a vertex cover of $G$.  The minimum number of guards necessary for a graph $G$ to be protected against any infinite sequence of single edge attacks is called the eternal vertex cover number of $G$, denoted by $\evc(G)$. The eternal vertex cover problem has two models. The first one allows only at most one guard on a vertex in any configuration, while in the second model this constraint is absent. The results in this paper work in both the models. 

Computing $\evc(G)$ for an arbitrary graph $G$ is NP-hard, though in PSPACE \cite{Fomin2010}, and   is fixed parameter tractable with $\evc(G)$ as parameter\cite{Fomin2010}.  The problem appears significantly harder than $\mvc$ computation.  For instance, $\mvc$ computation is polynomial time for bipartite graphs,  but the complexity status of the problem is open for bipartite graphs\cite{Fomin2010}.   
Consequent to the hardness results, work on solving the problem on various graph classes have been attempted in the literature.  It is known that the problem is NP-complete even for biconnected internally triangulated planar graphs\cite{BABU2021}. Polynomial time algorithms for computing $\evc(G)$ were known exactly only for very elementary graph classes such as an $O(n)$ algorithm for trees \cite{Klostermeyer2009}, a polynomial time algorithm for a tree-like graph class \cite{GeneralizedTrees} and a linear time algorithm for cactus graphs \cite{BPS2021}.   
A recent structure theorem developed in \cite{BP2021} has resulted in a quadratic time algorithm for chordal graphs. Similar graph protection problems defined on parameters like dominating sets~\cite{rinemberg2019,GOLDWASSER2008,KLOSTERMEYER2012,goddard2005,Klostermeyer2011,Anderson2014} and  independent sets~\cite{Hartnell2014,Caro2016EternalIS}  are also studied in literature.

An outerplanar graph is a planar graph that admits a planar embedding with all its vertices lying on the exterior face and it is maximal outerplanar if addition of any more edges between existing vertices will make the graph not outerplanar. Since maximal outerplanar graphs are chordal, it follows from \cite{BP2021} that its $\evc$ number can be computed in quadratic time.  
We improve the complexity and show that the $\evc$ number of maximal outerplanar graphs can be computed in linear time.  

Our algorithm takes a maximal outerplanar graph $G$ on at least three vertices and an edge $uv$ on the outer face of $G$ and recursively computes $\evc(G)$, $\mvc(G)$ along with some related parameters. If both $u$ and $v$ are of degree greater than two, then the recursion is applied on the two induced subgraphs $G_u$ and $G_v$ of $G$ as shown in Fig.~\ref{fig:vc_computation_b}. Otherwise, the recursion works on the graph obtained by deleting the degree two end point of the edge $uv$ from $G$. Since $\evc(G)$ happens to be not merely a function of the eternal vertex cover number and vertex cover number of these associated subgraphs, the algorithm has to recursively compute this larger set of parameters. The linear time complexity of the algorithm is derived from Theorems \ref{thm:vc_evc_deg2-1} to \ref{thm:vc_evc_general-2}, which assert that $\evc(G)$  can be determined in constant time from this larger set of parameters of the associated subgraphs. Apart from the algorithmic result, these theorems serve to demonstrate the structural connection between the $MVC$ problem and the $EVC$ problem for maximal outerplanar graphs.
\section{Preliminaries}
Let $G$ be a graph with $S \subseteq V$. The parameter $\evc_S(G)$ denotes the minimum number $k$
such that $G$ is $k$-defendable with all vertices of
$S$ occupied in all configurations. Similarly,
$\mvc_S(G)$ denote 
the size of the smallest cardinality vertex cover of 
$G$ containing all vertices of $S$. When 
$S=\{v_1 \cdots v_i\}$ for $1\le i \le 3$, we shorten the notation $\evc_{\{v_1 \cdots v_i\}}(G)$ and $\mvc_{\{v_1 \cdots v_i\}}(G)$ as $\evc_{v_1 \cdots v_i}(G)$ and $\mvc_{v_1 \cdots v_i}(G)$ respectively.\\
Proposition~\ref{prop:evc-subset} is an easy adaptation of a result from known literature.
\begin{proposition}\label{prop:evc-subset} \cite{BABU2021}
Let $G(V, E)$ be a maximal outerplanar graph with at least two vertices and $S \subseteq V$. 
 If for every vertex $v \in V \setminus S$,
 $\mvc_{_{S \cup \{v\}}}(G)=\mvc_{_S}(G)$, then $\evc_{_{S}}(G)=\mvc_{_{S}}(G)$. Otherwise, 
 $\evc_{_{S}}(G)=\mvc_{_{S}}(G)+1$. Hence,  $\evc_{_{S}}(G)=\max_{v \in V(G)}\mvc_{_{S \cup \{v\}}}(G)$.
\end{proposition}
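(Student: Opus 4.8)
The plan is to prove the single identity $\evc_{S}(G)=\max_{v\in V}\mvc_{S\cup\{v\}}(G)$ and then read off the two displayed cases from it. This last step is pure bookkeeping: for every $v$ we have $\mvc_{S}(G)\le \mvc_{S\cup\{v\}}(G)\le \mvc_{S}(G)+1$, the left inequality because $S\subseteq S\cup\{v\}$ and the right one because adjoining $v$ to a minimum $S$-cover that omits it raises the size by at most one. Hence $\max_{v}\mvc_{S\cup\{v\}}(G)$ equals $\mvc_{S}(G)$ precisely when $\mvc_{S\cup\{v\}}(G)=\mvc_{S}(G)$ for every $v\in V\setminus S$, and equals $\mvc_{S}(G)+1$ otherwise, which is exactly the stated dichotomy. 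So it suffices to establish the identity, and I would prove its two inequalities separately.

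For the lower bound $\evc_{S}(G)\ge \mvc_{S\cup\{v\}}(G)$, fix any strategy that defends $G$ with $k=\evc_{S}(G)$ guards while keeping $S$ occupied, and exhibit a single reachable configuration containing $S\cup\{v\}$. If $v\in S$ there is nothing to do, since the initial configuration is already an $S$-cover and $k\ge\mvc_{S}(G)$. If $v\in V\setminus S$, look at the initial configuration $C_{0}$: either $v\in C_{0}$, and then $C_{0}$ is an $(S\cup\{v\})$-cover occupied by at most $k$ guards; or $v\notin C_{0}$, in which case I attack an edge $vw$ (one exists because $G$ is connected, being maximal outerplanar on at least two vertices). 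As $C_{0}$ is a vertex cover avoiding $v$, its other endpoint $w$ is occupied, and since no guard sits on $v$ the attack can only be answered by moving a guard $w\to v$; the next configuration then contains $S\cup\{v\}$. In either case a configuration occupied by at most $k$ guards contains $S\cup\{v\}$, so $k\ge\mvc_{S\cup\{v\}}(G)$. This argument never counts guards per vertex, so it is valid in both models.

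The upper bound $\evc_{S}(G)\le \max_{v}\mvc_{S\cup\{v\}}(G)=:k$ is the substantive half, and here I would construct an explicit strategy that uses $k$ guards and keeps every configuration an $S$-vertex cover realizable by at most $k$ guards. Such covers exist because $k\ge \mvc_{S}(G)$, and the slack guaranteed by $k\ge \mvc_{S\cup\{v\}}(G)$ for all $v$ is what makes it affordable to push a guard onto any single vertex on demand. The crux is the response step: given the current cover $C$ and an attack on an edge $uv$ with $u\in C$, I must produce an $S$-cover $C'$ with $v\in C'$, of size at most $k$, reachable from $C$ by one synchronous move of the guards in which the edge $uv$ is traversed. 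I expect this reconfiguration guarantee to be the main obstacle: it is false for general graphs (for instance it fails for any graph with $\evc(G)=2\,\mvc(G)$ and $\mvc(G)\ge 2$, since there $\evc(G)>\mvc(G)+1\ge\max_{v}\mvc_{v}(G)$), so it cannot follow from a generic matching or shifting argument and must instead exploit the triangulated, $2$-connected polygon structure of maximal outerplanar graphs. I would obtain it by adapting the reconfiguration analysis of \cite{BABU2021} to the $S$-constrained setting, the only new ingredient being that every cover used along the way is required to contain $S$, which is consistent because all covers produced have size at most $k\ge\mvc_{S}(G)$.

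Combining the two inequalities gives the identity $\evc_{S}(G)=\max_{v}\mvc_{S\cup\{v\}}(G)$, and the bookkeeping of the first paragraph converts it into the two cases of the statement. Since neither the lower-bound argument nor the budget accounting in the strategy refers to the number of guards stacked on a single vertex, the conclusion holds in both models.
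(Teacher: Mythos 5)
The paper never proves this proposition: it is imported from \cite{BABU2021} with the remark that it is ``an easy adaptation of a result from known literature,'' so there is no in-paper argument to compare yours against. Within that constraint, your proposal is sound where it is explicit. The reduction of the two-case statement to the single identity $\evc_{S}(G)=\max_{v}\mvc_{S\cup\{v\}}(G)$ via $\mvc_{S}(G)\le\mvc_{S\cup\{v\}}(G)\le\mvc_{S}(G)+1$ is correct, and your lower-bound argument (if $v$ is unoccupied in the initial $S$-cover, attack an edge $vw$ to force a guard from $w$ onto $v$, then read off a vertex cover containing $S\cup\{v\}$ of size at most $\evc_{S}(G)$) is the standard one and, as you note, is insensitive to whether guards may stack. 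The gap is exactly where you flag it: the upper bound $\evc_{S}(G)\le\max_{v}\mvc_{S\cup\{v\}}(G)$ is not proved but only described as ``adapting the reconfiguration analysis of \cite{BABU2021},'' and that reconfiguration step --- producing, after an attack on $uv$, an $S$-cover of size at most $k$ containing $v$ that is reachable in one synchronous move traversing $uv$ --- is precisely the content the paper is citing. So your proposal lands in the same place as the paper (the crux is outsourced to \cite{BABU2021}) and should not be mistaken for a self-contained proof; your counterexample class showing the identity fails when $\evc(G)=2\,\mvc(G)$ with $\mvc(G)\ge 2$ correctly explains why the maximal outerplanar hypothesis is essential and why no generic argument can fill that step.
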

\begin{proposition}\label{prop:occ_edge_vertices}
Let $G$ be a maximal outerplanar graph with an edge $uv$.  Then, 
$\evc_v(G) \le \evc_{uv}(G) \le \evc(G)+1$.
\end{proposition}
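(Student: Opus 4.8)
The plan is to reduce both inequalities to statements about minimum vertex covers by invoking the characterization $\evc_S(G)=\max_{x\in V(G)}\mvc_{S\cup\{x\}}(G)$ from Proposition~\ref{prop:evc-subset}, applied with $S=\emptyset$, $S=\{v\}$ and $S=\{u,v\}$ (all legitimate, since $G$ carries the edge $uv$ and hence has at least two vertices). This turns the claim into the two vertex-cover inequalities $\max_{x}\mvc_{vx}(G)\le\max_{x}\mvc_{uvx}(G)$ and $\max_{x}\mvc_{uvx}(G)\le\max_{x}\mvc_{x}(G)+1$, each of which I would establish by comparing the two maxima term by term over the quantified vertex $x$.

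For the first inequality I would argue monotonicity: for every $x$ the required set $\{v,x\}$ is contained in $\{u,v,x\}$, so every vertex cover witnessing $\mvc_{uvx}(G)$ also contains $\{v,x\}$, whence $\mvc_{vx}(G)\le\mvc_{uvx}(G)$. Taking the maximum over $x$ yields $\evc_v(G)\le\evc_{uv}(G)$. Equivalently, one can bypass the formula here and observe directly from the definition that any defence strategy keeping both $u$ and $v$ occupied in every configuration in particular keeps $v$ occupied, so it is a valid strategy for the $\{v\}$-constrained game.

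For the second inequality, fix an arbitrary vertex $x$ and let $C$ be a minimum vertex cover of $G$ containing $x$, so $|C|=\mvc_{x}(G)$. Since $C$ covers the edge $uv$, it contains at least one of $u,v$; adding the (at most one) missing endpoint produces a vertex cover containing $\{u,v,x\}$ of size at most $\mvc_{x}(G)+1$. Hence $\mvc_{uvx}(G)\le\mvc_{x}(G)+1$ for every $x$, and taking the maximum over $x$ gives $\evc_{uv}(G)=\max_{x}\mvc_{uvx}(G)\le\max_{x}\mvc_{x}(G)+1=\evc(G)+1$.

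The only real content is the second inequality, whose crux is the observation that $uv$ being an edge forces any vertex cover to already pay for one of its two endpoints, so promoting \emph{both} endpoints to ``required'' costs at most one extra vertex rather than two. The main point to be careful about is the boundary cases $x\in\{u,v\}$, where $\mvc_{uvx}(G)$ collapses to $\mvc_{uv}(G)$; the same one-endpoint argument still applies, so no separate treatment is needed. Finally I would verify that Proposition~\ref{prop:evc-subset} is applicable for each of the three sets $S$ used, which holds because the presence of the edge $uv$ guarantees $|V(G)|\ge 2$.
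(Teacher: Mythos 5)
Your proof is correct and follows essentially the same route as the paper: both reduce the second inequality via Proposition~\ref{prop:evc-subset} to showing $\mvc_{xuv}(G)\le \evc(G)+1$ for every $x$, using the fact that a minimum vertex cover containing $x$ already contains one endpoint of the edge $uv$, so at most one extra vertex is needed. Your treatment of the first inequality and the boundary cases is a slightly more explicit version of what the paper dismisses as obvious.
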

\begin{proof}
Let $\evc(G)=k$. The first part of the inequality is 
obvious. So, we prove the second part. By 
Proposition~\ref{prop:evc-subset}, for
each vertex $x \in V(G)$, $\mvc_x(G) \le k$.
Since $uv \in E(G)$, this implies that $\mvc_{xu}(G)\le k$ or 
$\mvc_{xv}(G)\le k$. Therefore, for any vertex 
$x \in V(G)$, we have $\mvc_{xuv}(G) \le k+1$. Hence by 
Proposition~\ref{prop:evc-subset}, $\evc_{uv}(G) \le k+1= \evc(G)+1$.\qed
\end{proof}
From now on, whenever we say a graph is maximal outerplanar, we assume a fixed outerplanar embedding of the graph.

For a maximal outerplanar graph $G$ on at least three vertices and any edge $uv$ on the outer face of $G$, we use the notation $\Delta(uv)$ to denote the unique common neighbor of $u$ and $v$.

  \begin{definition}[$uv$-segments]\label{def:uwsegment}
Let  $G$ be a maximal outerplanar graph with at least three vertices.  Let $uv$ be an edge on the outer face of $G$. Let $w=\Delta(uv)$. We define the graph $G_{u}(uv)$ (respectively, $G_{v}(uv)$)  to be the maximal biconnected outerplanar subgraph of $G$ that satisfies the following two properties (see Fig.~\ref{fig:vc_computation_b}):
\begin{enumerate}
     \item The edge $uw$ (respectively, $vw$) is on the outer face of $G_u(uv)$ (respectively, $G_v(uv)$).
     \item   $G_u(uv)$ (respectively, $G_v(uv)$) does not contain the vertex $v$ (respectively $u$). 
\end{enumerate}
$G_u(uv)$ and $G_v(uv) $ will be called the $uv$ segments of $G$.  
\end{definition}  
\begin{note}
  We will write $G_{u}$ and $G_{v}$  instead of $G_u(uv)$ and $G_v(uv)$ when there is no scope for confusion. Observe that $G_u$ (respectively, $G_v$) will be a single edge, if $deg_G(u)=2$ (respectively, $deg_G(v)=2$).
\end{note} 
\begin{figure}
 \centering
 \includegraphics[scale=0.7]{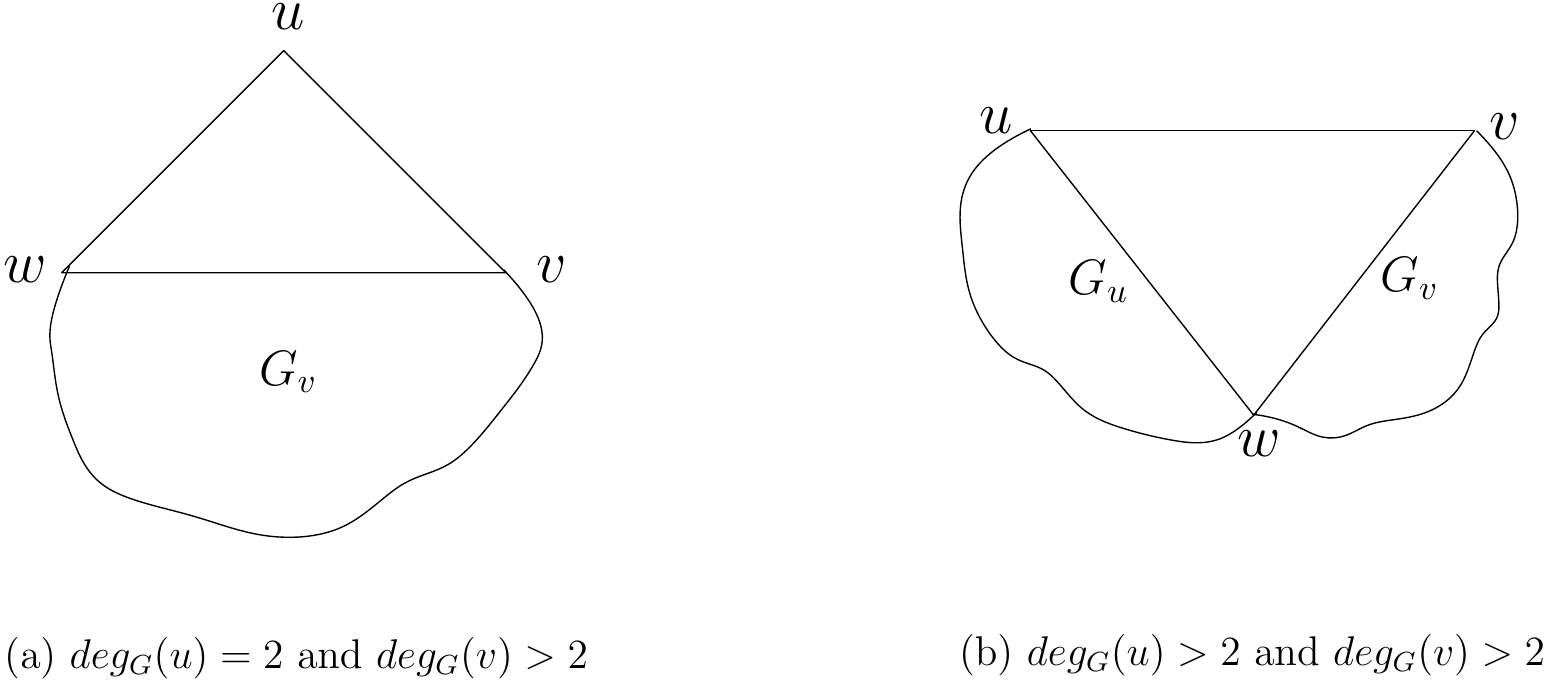}
 \caption{$G_u$ is the $uv$ segment of the graph that contains $u$ and $G_v$ is the $uv$ segment that contains $v$. In (a), $G_u$ is a single edge. }
 \label{fig:vc_computation_b}
\end{figure}
\begin{definition}[mvc and evc parameters]
 For a maximal outerplanar graph $G$ and an edge $uv$, the set of parameters  $\mathscr{M}(G,uv)=\{\mvc(G)$, $\mvc_{u}(G)$, $\mvc_{v}(G)$, $\mvc_{uv}(G)\}$
is called the (set of) $\textit{mvc parameters}$ of $G$ with respect to $uv$ and the set $\mathscr{E}(G,uv)=\{\evc(G)$, $\evc_{u}(G)$, $\evc_{v}(G)$, $\evc_{uv}(G)\}$ is called the (set of) $\textit{evc parameters}$ of $G$ with respect to $uv$.
\end{definition}

The following observation, which gives the mvc and evc parameters of a triangle, is the base case of the recursive computation of these parameters for larger graphs described in subsequent sections.

\begin{observation}\label{obs:basecase}
 Let $G$ be a triangle $uvw$. Consider the edge $uv$.  Then, (1) $\mvc(G)=\mvc_u(G)= \mvc_v(G)=\mvc_{uv}(G)=2$ (2) $\evc(G)=\evc_u(G)= \evc_v(G)=2$ and $\evc_{uv}(G)=3$.
 \end{observation}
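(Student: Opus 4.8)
The plan is to settle the mvc parameters by direct inspection and then obtain the evc parameters by invoking Proposition~\ref{prop:evc-subset}, which expresses each eternal parameter as a maximum of ordinary $\mvc$-type parameters. This keeps the entire argument at the level of elementary counting on the three-edge graph $G$.

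First I would handle part~(1). Since a triangle has three edges and a single vertex is incident to only two of them, no single vertex covers all edges, so $\mvc(G) \ge 2$; and any two of $u,v,w$ cover all three edges, so $\mvc(G) = 2$. The same two-element covers witness $\mvc_u(G) = \mvc_v(G) = \mvc_{uv}(G) = 2$: the set $\{u,v\}$ is itself a vertex cover containing $u$, containing $v$, and containing both $u$ and $v$. The matching lower bound of $2$ carries over from $\mvc(G)=2$, since constraining a cover to contain prescribed vertices cannot decrease its size.

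Next I would turn to part~(2) via the identity $\evc_S(G) = \max_{x \in V(G)} \mvc_{S \cup \{x\}}(G)$ from Proposition~\ref{prop:evc-subset}, so that it suffices to evaluate a few forced-vertex-cover sizes. For $S = \emptyset$ every $\mvc_x(G) = 2$ (using the computation above together with $\mvc_w(G)=2$ by symmetry), hence $\evc(G) = 2$. For $S = \{u\}$ the relevant quantities are $\mvc_u(G)$, $\mvc_{uv}(G)$, $\mvc_{uw}(G)$, each equal to $2$ because every pair of vertices covers the triangle; thus $\evc_u(G) = 2$, and symmetrically $\evc_v(G) = 2$.

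The only value that deviates is $\evc_{uv}(G)$, and I expect this to be the one point requiring a moment's care. Taking $S = \{u,v\}$, the maximum ranges over $\mvc_{uv}(G) = 2$ and $\mvc_{uvw}(G)$; the latter forces all three vertices of $G$ into the cover and therefore equals $3$. Hence $\evc_{uv}(G) = \max\{2, 3\} = 3$. This is consistent with Proposition~\ref{prop:occ_edge_vertices}, which predicts $\evc_{uv}(G) \le \evc(G) + 1 = 3$, and exhibits the triangle as a case where that bound is tight. The whole proof is routine enumeration; the only subtlety is remembering to include the choice $x = w$ in the formula for $\evc_{uv}$, since it is precisely this term that raises the value to $3$.
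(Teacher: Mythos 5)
Your proof is correct and follows exactly the route the paper intends: the paper states this as an unproved base-case observation, and its standard tool throughout is precisely the identity $\evc_S(G)=\max_{x\in V(G)}\mvc_{S\cup\{x\}}(G)$ from Proposition~\ref{prop:evc-subset}, which you apply after the routine check that every pair of vertices covers the triangle. Your identification of the term $\mvc_{uvw}(G)=3$ as the one that forces $\evc_{uv}(G)=3$ is exactly the relevant point.
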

 
\section{Computation of the mvc parameters}

Throughout this section, we assume that $G$ is a maximal outerplanar graph of at least four vertices,  $uv$ is an edge on the outer face of $G$ and  $w=\Delta(uv)$.  The results in this section show that the mvc parameters of $G$ with respect to the edge $uv$ - viz., $\mathscr{M}(G,uv)$,  can be computed in constant time if the mvc parameters of $G_u$ with respect to $uw$ - viz., $\mathscr{M}(G_u,uw)$ and the mvc parameters of $G_v$ with respect to $vw$ - viz., $\mathscr{M}(G_v,vw)$ are given. 

 The following observation is easy to see.
 
  \begin{observation}\label{obs:computemvcfix_deg2}
  If $deg_G(u)=2$ and $deg_G(v) >2$, then, 
  (1)~$\mvc(G)=\mvc_{vw}(G_v)$ (2)~$\mvc_u(G) = \mvc(G_v) +1$, (3)~$\mvc_v(G)=\mvc(G)$ and (4)~$\mvc_{uv}(G)=\mvc_v(G_v)+1$.
   \end{observation}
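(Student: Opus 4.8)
The plan is to exploit the hypothesis that $u$ has degree two with neighbours exactly $v$ and $w$, so that everything in $G$ apart from $u$ is captured by the segment $G_v$. First I would record the structural fact that, since $deg_G(u)=2$, deleting $u$ from $G$ leaves a maximal outerplanar graph on at least three vertices which is precisely $G_v$, with the edge $vw$ on its outer face. Consequently, vertex covers of $G$ and of $G_v$ can differ only in how they treat $u$ and the two edges $uv$ and $uw$ incident to it, together with the triangle edge $vw \in E(G_v)$. The whole proof is then a decomposition of a vertex cover of $G$ according to whether it contains the degree-two vertex $u$.

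For parts (1) and (3) I would show, by an exchange argument, that $G$ has a minimum vertex cover that avoids $u$ but contains both $v$ and $w$. Take any minimum vertex cover $C$ of $G$. Since $vw$ must be covered, at least one of $v,w$ lies in $C$. If $u \in C$ and both $v,w \in C$, then $u$ is redundant, contradicting minimality; if $u \in C$ and exactly one of $v,w$ lies in $C$, say $v \in C$ and $w \notin C$, then replacing $u$ by $w$ yields a vertex cover of the same size containing both $v$ and $w$ (symmetrically if $w \in C$, $v\notin C$). Hence I may assume $v,w \in C$ and $u \notin C$. Such a $C$ is exactly a vertex cover of $G_v$ containing $\{v,w\}$, and conversely every vertex cover of $G_v$ containing $\{v,w\}$ is a vertex cover of $G$, since $uv$ and $uw$ are then covered by $v$ and $w$. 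This yields $\mvc(G)=\mvc_{vw}(G_v)$, and because the witnessing cover contains $v$, it also gives $\mvc_v(G)=\mvc(G)$.

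For parts (2) and (4) I would instead force $u$ into the cover. If $u \in C$, then $uv$ and $uw$ are already covered by $u$, so $C \setminus \{u\}$ must cover every remaining edge of $G$, i.e.\ it must be a vertex cover of $G_v$ (the edge $vw$ being handled inside $G_v$). Minimizing over such $C$, the set $C \setminus \{u\}$ may be taken to be a minimum vertex cover of $G_v$, giving $\mvc_u(G)=\mvc(G_v)+1$; additionally imposing $v \in C$ forces $C \setminus \{u\}$ to be a minimum vertex cover of $G_v$ containing $v$, giving $\mvc_{uv}(G)=\mvc_v(G_v)+1$. In each case the map $C \leftrightarrow C \setminus \{u\}$ is a size-preserving (up to the additive $1$) correspondence between the relevant families of covers, which establishes the two equalities in both directions.

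The only delicate point, and the main thing to get right, is the exchange step for (1)/(3): I must verify that swapping $u$ for the missing triangle vertex never uncovers an edge. This is exactly where the assumption $deg_G(u)=2$ is essential, since it guarantees that $u$'s only incident edges are $uv$ and $uw$. Everything else is a routine case analysis on membership of the degree-two vertex $u$, so I expect the argument to be short once the identification $G_v = G - u$ has been made.
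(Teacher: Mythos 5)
Your proof is correct. The paper states this as an observation with no proof ("easy to see"), and your case analysis on whether a vertex cover contains the degree-two vertex $u$ — together with the identification $G_v = G - u$ and the exchange argument swapping $u$ for the missing vertex of the triangle $uvw$ — is exactly the routine argument the authors intend, with all details correctly supplied.
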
   
   The following theorem is a consequence of Observation~\ref{obs:computemvcfix_deg2}.
\begin{theorem}
 \label{thm:vc_evc_deg2-1}
 Let $G$ be a maximal outerplanar graph and $uv$ be an edge on the outer face of $G$ such that $deg_G(u)=2$.
 Let $w=\Delta(uv)$. 
 \begin{enumerate}
  \item Given $\mathscr{M}(G_v, vw)$, it is possible to compute $\mvc(G)$ in constant time.
  \item Given $\mvc(G)$ and $\mathscr{M}(G_v, vw)$, it is possible to compute the remaining mvc parameters of $G$ with respect to $uv$ in constant time.
 \end{enumerate}
\end{theorem}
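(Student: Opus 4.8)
The plan is to derive the whole statement from Observation~\ref{obs:computemvcfix_deg2}, which already expresses each of $\mvc(G)$, $\mvc_u(G)$, $\mvc_v(G)$ and $\mvc_{uv}(G)$ in terms of members of $\mathscr{M}(G_v,vw)$. The one gap to close first is that the observation assumes both $deg_G(u)=2$ and $deg_G(v)>2$, whereas the theorem hypothesizes only $deg_G(u)=2$. So my first step is to show that, under the standing assumption $|V(G)|\ge 4$, the condition $deg_G(u)=2$ already forces $deg_G(v)>2$; once this is in hand the observation applies verbatim and the rest is bookkeeping.

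For that structural step I would reason as follows. Since $uv$ lies on the outer face and $w=\Delta(uv)$ is the unique common neighbour of $u$ and $v$, the degree-$2$ assumption on $u$ gives $N_G(u)=\{v,w\}$. Suppose for contradiction that $deg_G(v)=2$ as well; then similarly $N_G(v)=\{u,w\}$. Because a maximal outerplanar graph on at least three vertices has its outer boundary as a cycle through all of its vertices, each of $u$ and $v$ must take both of its cycle-neighbours from its graph-neighbourhood, so the cycle meets $u$ only between $v$ and $w$ and meets $v$ only between $u$ and $w$. This forces the boundary cycle to close up as the triangle $uvw$, whence $|V(G)|=3$, contradicting $|V(G)|\ge 4$. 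Therefore $deg_G(v)>2$.

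It then remains only to read off the parameters. For Part~1, item~(1) of Observation~\ref{obs:computemvcfix_deg2} gives $\mvc(G)=\mvc_{vw}(G_v)$, and $\mvc_{vw}(G_v)\in\mathscr{M}(G_v,vw)$, so $\mvc(G)$ is obtained by a single lookup. For Part~2, item~(3) gives $\mvc_v(G)=\mvc(G)$, which is exactly the supplied value; item~(2) gives $\mvc_u(G)=\mvc(G_v)+1$; and item~(4) gives $\mvc_{uv}(G)=\mvc_v(G_v)+1$. Since $\mvc(G_v)$ and $\mvc_v(G_v)$ also lie in $\mathscr{M}(G_v,vw)$, each remaining parameter costs a constant number of lookups and at most one addition, establishing both constant-time claims.

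The only point requiring genuine care is the degree dichotomy in the second paragraph; everything else is direct substitution into the precomputed observation. I therefore expect the main (and minor) obstacle to be the clean justification that two adjacent outer-face vertices of a maximal outerplanar graph on at least four vertices cannot both have degree two.
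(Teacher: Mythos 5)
Your proposal is correct and follows essentially the same route as the paper, which simply derives the theorem as an immediate consequence of Observation~\ref{obs:computemvcfix_deg2}. The only addition is your explicit verification that $deg_G(u)=2$ together with the standing assumption $|V(G)|\ge 4$ forces $deg_G(v)>2$ (via the Hamiltonian outer cycle), a point the paper leaves implicit; your argument for it is sound.
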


   Now, we look at the computation of $\mathscr{M}(G, uv)$ when $deg_G(u)>2$ and $deg_G(v)>2$. \\
   The following observation is easy to obtain. 
   
\begin{observation}\label{obs:computemvc1}
If $deg_G(u)>2$ and $deg_G(v) >2$, then\\ \indent $\mvc(G) \in \{\mvc(G_u)+\mvc(G_v)-1, \mvc(G_u)+\mvc(G_v)\}$.
\end{observation}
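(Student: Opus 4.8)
The plan is to sandwich $\mvc(G)$ between $\mvc(G_u)+\mvc(G_v)-1$ and $\mvc(G_u)+\mvc(G_v)$ by translating between vertex covers of $G$ and vertex covers of its two $uv$-segments. First I would record the structural decomposition implicit in Definition~\ref{def:uwsegment}. Since $G$ is a triangulated polygon and $uv$ lies on the outer face with $w=\Delta(uv)$, the edges $uw$ and $vw$ cut $G$ into the triangle $uvw$ and the two segments, so that $V(G_u)\cap V(G_v)=\{w\}$, $V(G_u)\cup V(G_v)=V(G)$, and $E(G)=E(G_u)\sqcup E(G_v)\sqcup\{uv\}$, with $uw\in E(G_u)$ and $vw\in E(G_v)$ while $uv$ belongs to neither segment. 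The hypotheses $\deg_G(u),\deg_G(v)>2$ ensure both segments are genuine biconnected subgraphs rather than single edges, which is the regime this observation is meant to cover.

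For the lower bound I would take a minimum vertex cover $S$ of $G$ and restrict it. Every edge of $G_u$ (resp.\ $G_v$) is an edge of $G$, so $S\cap V(G_u)$ and $S\cap V(G_v)$ are vertex covers of $G_u$ and $G_v$, whence $|S\cap V(G_u)|\ge\mvc(G_u)$ and $|S\cap V(G_v)|\ge\mvc(G_v)$. Inclusion--exclusion on the single shared vertex then gives $|S|=|S\cap V(G_u)|+|S\cap V(G_v)|-|S\cap\{w\}|\ge\mvc(G_u)+\mvc(G_v)-1$.

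For the upper bound I would start from minimum vertex covers $C_u$ of $G_u$ and $C_v$ of $G_v$ and form $C_u\cup C_v$, which covers every edge of $G$ except possibly $uv$. If $u\in C_u$ or $v\in C_v$, this union already covers $uv$ and has size at most $\mvc(G_u)+\mvc(G_v)$, so we are done. The only remaining possibility is $u\notin C_u$ and $v\notin C_v$; here coverage of $uw\in E(G_u)$ forces $w\in C_u$ and coverage of $vw\in E(G_v)$ forces $w\in C_v$, so $w$ is double-counted in $|C_u|+|C_v|$. Consequently $C_u\cup C_v\cup\{u\}$ covers $uv$ as well and has size at most $|C_u|+|C_v|-1+1=\mvc(G_u)+\mvc(G_v)$. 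In either case $\mvc(G)\le\mvc(G_u)+\mvc(G_v)$, and combined with the lower bound this yields the claim.

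The step I expect to be the main obstacle is not either inequality on its own but the case split in the upper bound: the naive repair of simply adding a vertex to cover $uv$ threatens to overshoot the target by one, and the argument only closes once one notices that the ``bad'' case $u\notin C_u,\,v\notin C_v$ automatically places $w$ in both covers, so the extra vertex is absorbed by the double-counted $w$. Verifying the structural decomposition (segments meeting exactly in $w$ and partitioning the edge set) is routine given maximal outerplanarity, but I would state it explicitly, since both bounds rest squarely on it.
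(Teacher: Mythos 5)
Your proof is correct and is exactly the routine decomposition argument the paper has in mind: the paper states this observation without proof ("easy to obtain"), and your restriction/union argument via the shared vertex $w$ and the edge partition $E(G)=E(G_u)\sqcup E(G_v)\sqcup\{uv\}$ is the standard justification it implicitly relies on (the same reasoning appears in the paper's proof of Lemma~\ref{lem:computemvc2}). Your attention to the case $u\notin C_u$, $v\notin C_v$, where the doubly counted $w$ absorbs the extra vertex needed to cover $uv$, is the one point worth being careful about, and you handle it correctly.
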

The next lemma gives a method to compute $\mvc(G)$ using $\mathscr{M}(G_u, uw)$ and $\mathscr{M}(G_v, vw)$. 
\begin{lemma}\label{lem:computemvc2}
 If $deg_G(u)>2$ and $deg_G(v) >2$, then \\ $\mvc(G)=\min\{\mvc_w(G_u)+\mvc_{vw}(G_v)-1, \mvc_{uw}(G_u)+\mvc_w(G_v)-1, \mvc(G_u)+\mvc(G_v)\}$. 
  \end{lemma}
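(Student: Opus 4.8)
The plan is to exploit the fact that the triangle $uvw$ splits $G$ into its two $uv$-segments, which overlap only at $w$. Concretely, from Definition~\ref{def:uwsegment} one has $V(G)=V(G_u)\cup V(G_v)$ with $V(G_u)\cap V(G_v)=\{w\}$, while $E(G)$ is the disjoint union $E(G_u)\cup E(G_v)\cup\{uv\}$. I would first record the resulting characterisation of vertex covers: a set $C\subseteq V(G)$ is a vertex cover of $G$ if and only if $C_u:=C\cap V(G_u)$ and $C_v:=C\cap V(G_v)$ are vertex covers of $G_u$ and $G_v$ respectively and at least one of $u,v$ lies in $C$; moreover $|C|=|C_u|+|C_v|-[w\in C]$, since $w$ is the only vertex shared by the two segments. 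I would also note the elementary fact that in any vertex cover of $G_u$ that avoids $w$ the vertex $u$ must be present, in order to cover the edge $uw$, and symmetrically for $v$ in $G_v$.

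For the lower bound $\mvc(G)\ge\min\{\cdots\}$, I take a minimum vertex cover $C$ of $G$ and split on whether $w\in C$. If $w\notin C$, then both $u$ and $v$ must belong to $C$ to cover $uw$ and $vw$, so $|C|=|C_u|+|C_v|\ge\mvc(G_u)+\mvc(G_v)$, which is the third term. If $w\in C$, then $|C|=|C_u|+|C_v|-1$ and the edge $uv$ is covered by $u$ or by $v$. When $u\in C$, the cover $C_u$ contains both $u$ and $w$ while $C_v$ contains $w$, giving $|C|\ge\mvc_{uw}(G_u)+\mvc_w(G_v)-1$, the second term; the case $v\in C$ symmetrically yields the first term. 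Hence $|C|$ dominates one of the three quantities, so $\mvc(G)$ is at least their minimum.

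For the upper bound I would exhibit, for each of the three terms, a vertex cover of $G$ realising it. Terms one and two are immediate: the union of a minimum vertex cover of $G_u$ containing $w$ with a minimum vertex cover of $G_v$ containing $\{v,w\}$ (respectively, a minimum vertex cover of $G_u$ containing $\{u,w\}$ with a minimum vertex cover of $G_v$ containing $w$) covers $uv$ through $v$ (respectively $u$), and the shared vertex $w$ accounts for the $-1$. The delicate point, and the step I expect to require the most care, is the third term: the union of arbitrary minimum covers $C_u,C_v$ of $G_u,G_v$ need not cover $uv$. But if it fails, then $u\notin C_u$ and $v\notin C_v$, which by the elementary fact above forces $w\in C_u\cap C_v$; hence $|C_u\cup C_v|=\mvc(G_u)+\mvc(G_v)-1$, and adjoining $u$ restores coverage of $uv$ at total cost exactly $\mvc(G_u)+\mvc(G_v)$.

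Combining the two bounds gives the claimed equality. As a byproduct, Observation~\ref{obs:computemvc1} can be read off, since each of the three terms collapses either to $\mvc(G_u)+\mvc(G_v)$ or to $\mvc(G_u)+\mvc(G_v)-1$; the main content of the lemma is that the correct value among these two is selected precisely by the minimum of the three segment-constrained expressions.
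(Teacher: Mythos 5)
Your proof is correct and rests on the same decomposition the paper uses (the two $uv$-segments overlapping only at $w$, with $uv$ the only edge outside them); the paper merely organizes the argument differently, taking the upper bound as immediate and deducing the lower bound from the dichotomy $\mvc(G)\in\{\mvc(G_u)+\mvc(G_v)-1,\ \mvc(G_u)+\mvc(G_v)\}$ of Observation~\ref{obs:computemvc1}, whereas you prove both bounds directly from an arbitrary minimum cover and explicit constructions. Your version is somewhat more self-contained (it also recovers Observation~\ref{obs:computemvc1} as a byproduct), but the mathematical content is the same.
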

  
  \begin{proof}
   It can be easily seen that 
   $\mvc(G)\le \min\{\mvc_w(G_u)+\mvc_{vw}(G_v)-1, \mvc_{uw}(G_u)+\mvc_w(G_v)-1, \mvc(G_u)+\mvc(G_v)\}$. By Observation~\ref{obs:computemvc1}, we have the following two cases to consider.
   
   If $\mvc(G)=\mvc(G_u)+\mvc(G_v)-1$, then both $G_u$ and $G_v$ have a minimum vertex cover with $w$. Further, either $\mvc_{uw}(G_u)=\mvc(G_u)$ or $\mvc_{vw}(G_v)=\mvc(G_v)$. Hence, it is clear that 
   $\mvc(G)=\mvc(G_u)+\mvc(G_v)-1 = \min\{\mvc_w (G_u ) + \mvc_{vw} (G_v )-1, \mvc_{uw} (G_u) + \mvc_w (G_v)-1\}$. From this, the result follows.
   
   If $\mvc(G) = \mvc(G_u ) + \mvc(G_v )$, then by the upper bound shown in the beginning of this proof, $\mvc(G_u ) + \mvc(G_v )=\mvc(G) \le \min\{\mvc_w (G_u ) + mvc_{vw}(G_v )-1, \mvc_{uw}(G_u ) + \mvc_w(G_v)-1\}$. From this, the result follows.\qed

  \end{proof}
  
  The next lemma shows that given $\mvc(G)$,  $\mathscr{M}(G_u, uw)$ and $\mathscr{M}(G_v, vw)$, the values of  $\mvc_u(G)$, $\mvc_v(G)$ and $\mvc_{uv}(G)$ are computable in constant time. 
  
 \begin{lemma}\label{lem:computemvcfix_gen-1}
Let $deg_G(u)>2$ and $deg_G(v)>2$. If $\mvc(G)=\mvc(G_u)+\mvc(G_v)-1$, then  \\
  $\mvc_u(G)=\mvc_{uw}(G_u)+\mvc(G_v)-1$, \\$\mvc_v(G)=\mvc(G_u)+\mvc_{vw}(G_v)-1$ and \\
  $\mvc_{uv}(G)=\mvc_{uw}(G_u)+\mvc_{vw}(G_v)-1$. 
\end{lemma}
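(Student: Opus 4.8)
The plan is to establish each of the three equalities by proving a matching upper and lower bound, exploiting the way $G$ decomposes across its $uv$-segments. First I would record the structural fact that $V(G_u)\cap V(G_v)=\{w\}$ and that every edge of $G$ other than $uv$ lies wholly inside $G_u$ or wholly inside $G_v$; consequently, for any vertex cover $C$ of $G$, the sets $C\cap V(G_u)$ and $C\cap V(G_v)$ are vertex covers of $G_u$ and $G_v$ respectively, and
$$|C|=|C\cap V(G_u)|+|C\cap V(G_v)|-[\,w\in C\,].$$
Conversely, gluing a vertex cover of $G_u$ and one of $G_v$ that agree on whether they contain $w$ yields a vertex cover of $G$, and it also covers $uv$ as soon as it contains $u$ or $v$. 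These two directions are exactly what turn the three claimed identities into bound-matching arguments. I would also extract from the hypothesis $\mvc(G)=\mvc(G_u)+\mvc(G_v)-1$ the two facts already isolated in the proof of Lemma~\ref{lem:computemvc2}: (i)~$\mvc_w(G_u)=\mvc(G_u)$ and $\mvc_w(G_v)=\mvc(G_v)$, and (ii)~$\mvc_{uw}(G_u)=\mvc(G_u)$ or $\mvc_{vw}(G_v)=\mvc(G_v)$.

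For $\mvc_u(G)$ the upper bound comes from gluing a minimum vertex cover of $G_u$ containing $u$ and $w$ (of size $\mvc_{uw}(G_u)$) to a minimum vertex cover of $G_v$ containing $w$, which by (i) has size $\mvc(G_v)$; their union contains $u$, covers $uv$, and has size $\mvc_{uw}(G_u)+\mvc(G_v)-1$. For the lower bound I take any vertex cover $C\ni u$ and split on whether $w\in C$. If $w\in C$, then $C\cap V(G_u)$ is a cover of $G_u$ containing $u,w$ and $C\cap V(G_v)$ is a cover of $G_v$ containing $w$, giving the bound directly. If $w\notin C$, then $(C\cap V(G_u))\cup\{w\}$ is a cover of $G_u$ containing $u,w$, so $|C\cap V(G_u)|\ge \mvc_{uw}(G_u)-1$, while $|C\cap V(G_v)|\ge \mvc(G_v)$ and no vertex is shared; the single $-1$ lands in the right place. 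The identity for $\mvc_v(G)$ is the mirror image under $u\leftrightarrow v$, $G_u\leftrightarrow G_v$.

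The equality for $\mvc_{uv}(G)$ is the delicate one and is where I expect the real work. The upper bound is again immediate: glue minimum covers of $G_u$ and $G_v$ containing $\{u,w\}$ and $\{v,w\}$ respectively. For the lower bound, let $C$ be a minimum cover with $u,v\in C$. The case $w\in C$ is as before and gives $|C|\ge\mvc_{uw}(G_u)+\mvc_{vw}(G_v)-1$ immediately. The obstacle is the case $w\notin C$: here both segment contributions may a priori drop by one, since $|C\cap V(G_u)|\ge\mvc_{uw}(G_u)-1$ and $|C\cap V(G_v)|\ge\mvc_{vw}(G_v)-1$, which would only yield $|C|\ge\mvc_{uw}(G_u)+\mvc_{vw}(G_v)-2$. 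The key observation that closes this gap is that both drops cannot occur simultaneously: if, say, $\mvc_{uw}(G_u)=\mvc(G_u)$ (one of the alternatives in (ii)), then a cover of $G_u$ that contains $u$ and excludes $w$ is still a cover of $G_u$ and hence has size at least $\mvc(G_u)=\mvc_{uw}(G_u)$, so the $u$-side contribution does not drop at all, and symmetrically when the other alternative of (ii) holds. Hence $|C|\ge\mvc_{uw}(G_u)+\mvc_{vw}(G_v)-1$ in this case too, matching the upper bound. Thus (ii), the genuinely global consequence of the hypothesis, is exactly the ingredient that separates the $\mvc_{uv}$ computation from the easier $\mvc_u$ and $\mvc_v$ ones.
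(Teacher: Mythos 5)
Your proof is correct and follows essentially the same route as the paper's: decompose any cover of $G$ across the two $uv$-segments, invoke the two consequences of the hypothesis $\mvc(G)=\mvc(G_u)+\mvc(G_v)-1$ (both segments have minimum covers through $w$, and $\mvc_{uw}(G_u)=\mvc(G_u)$ or $\mvc_{vw}(G_v)=\mvc(G_v)$), and match upper and lower bounds. If anything, your handling of the $w\notin C$ case for $\mvc_{uv}(G)$ spells out precisely the point the paper compresses into ``similarly, it is easy to see.''
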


\begin{proof}
Let $\mvc(G_u)=k_u$ and $\mvc(G_v)=k_v$.
Since $\mvc(G)=k_u+k_v-1$, both $G_u$ and $G_v$ have a minimum vertex cover with $w$. Let $k=\mvc_u(G)$. Since $\mvc_w(G_v)=k_v$, we can observe that $G$ has a vertex cover $C$ of size $k$ such that $u,w \in C$. Now, it can be easily seen that $C$ is the union of a vertex cover of $G_u$ containing 
 both $u$ and $w$ and a minimum vertex cover of $G_v$ with $w$. Therefore, 
 $\mvc_u(G)=\mvc_{uw}(G_u)+\mvc(G_v)-1$. Similarly, it is easy to see that $\mvc_v(G)=\mvc(G_u)+\mvc_{vw}(G_v)-1$ and 
 $\mvc_{uv}(G)=\mvc_{uw}(G_u)+\mvc_{vw}(G_v)-1$.
\qed
\end{proof}
 \begin{lemma}\label{lem:computemvcfix_gen-2}
Let $deg_G(u)>2$ and $deg_G(v)>2$. If $\mvc(G)=\mvc(G_u)+\mvc(G_v)$, then \\
      $\mvc_u(G)= \min\{\mvc_u(G_u)+\mvc(G_v), \mvc(G_u)+\mvc_w(G_v), \mvc(G)+1\}$,\\ 
    $\mvc_v(G)=\min\{ \mvc(G_u)+\mvc_v(G_v), \mvc_w(G_u)+\mvc(G_v), \mvc(G)+1\}$ and \\
    $\mvc_{uv}(G)=\min\{\mvc_u(G_u)+\mvc_{v}(G_v), \mvc(G)+1\}$.
\end{lemma}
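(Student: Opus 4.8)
The plan is to prove all three identities by the same two-sided argument, exploiting that $V(G_u)\cap V(G_v)=\{w\}$ and $E(G)=E(G_u)\cup E(G_v)\cup\{uv\}$. First I would record the basic decomposition: for any vertex cover $C$ of $G$, the restrictions $C_u=C\cap V(G_u)$ and $C_v=C\cap V(G_v)$ are vertex covers of $G_u$ and $G_v$, and since $w$ is the only shared vertex, $|C|=|C_u|+|C_v|$ when $w\notin C$ and $|C|=|C_u|+|C_v|-1$ when $w\in C$. Conversely, the union of a vertex cover of $G_u$ and one of $G_v$ is a vertex cover of $G$ as soon as the edge $uv$ is covered. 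Because each of $\mvc_u$, $\mvc_v$, $\mvc_{uv}$ forces $u$ and/or $v$ into $C$, the edge $uv$ is automatically covered, so each parameter reduces to minimizing $|C_u|+|C_v|-[\,w\in C\,]$ over covers of the two segments with the prescribed occupied endpoints, and I would split the analysis according to whether $w$ is used.

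For the upper bounds I would just exhibit covers realizing each listed term; none of these require the case hypothesis. The union terms $\mvc_u(G_u)+\mvc(G_v)$, $\mvc(G_u)+\mvc_v(G_v)$, and $\mvc_u(G_u)+\mvc_v(G_v)$ come from taking the indicated minimum covers of $G_u$ and $G_v$ and unioning them. The terms $\mvc(G)+1$ follow from adding to a minimum cover of $G$ the missing endpoint of $uv$ (covered already by one endpoint), which yields covers containing $u$, $v$, or both. The only delicate term is $\mvc(G_u)+\mvc_w(G_v)$ for $\mvc_u(G)$: here I take a minimum cover $A$ of $G_u$ and a minimum cover $B\ni w$ of $G_v$; if $u\in A$ the union works, while if $u\notin A$ then necessarily $w\in A$ (to cover $uw$), so $(A\cup\{u\})\cup B$ shares $w$ and still has size $\mvc(G_u)+\mvc_w(G_v)$ while containing $u$.

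For the lower bounds I would split on whether $w\in C$. When $w\notin C$ the two segments pay separately, giving $|C|\ge\mvc_u(G_u)+\mvc(G_v)$ for $\mvc_u$ (and analogously for the others), which is already dominated by the corresponding union term, with no appeal to the hypothesis. The $w\in C$ branch is where $\mvc(G)=\mvc(G_u)+\mvc(G_v)$ is used: there $|C|\ge\mvc_{uw}(G_u)+\mvc_w(G_v)-1$ (respectively $\mvc_{uw}(G_u)+\mvc_{vw}(G_v)-1$ for $\mvc_{uv}$), and I must show this is at least the claimed minimum. Since $\mvc(G)$ equals the third term of Lemma~\ref{lem:computemvc2}, the other two terms give $\mvc_w(G_u)+\mvc_{vw}(G_v)\ge\mvc(G)+1$ and $\mvc_{uw}(G_u)+\mvc_w(G_v)\ge\mvc(G)+1$. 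Combined with the elementary facts $\mvc_{uw}(G_u)\ge\mvc_u(G_u)$, $\mvc_{vw}(G_v)\ge\mvc_v(G_v)$, these inequalities bound the $w\in C$ value below by the stated minimum; for instance, for $\mvc_u(G)$ the branch is both $\ge\mvc(G)$ and $\ge\mvc_u(G_u)+\mvc_w(G_v)-1$, and the larger of these is exactly $\min\{\mvc_u(G_u)+\mvc(G_v),\mvc(G_u)+\mvc_w(G_v),\mvc(G)+1\}$.

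I expect the main obstacle to be the $w\in C$ branch of the lower bound for $\mvc_{uv}(G)$, where $\mvc_{uw}(G_u)+\mvc_{vw}(G_v)-1$ must be shown to be at least $\min\{\mvc_u(G_u)+\mvc_v(G_v),\,\mvc(G)+1\}$. Unlike the $\mvc_u$ and $\mvc_v$ cases, this seems to need \emph{both} inequalities coming from $\mvc(G)=\mvc(G_u)+\mvc(G_v)$ simultaneously, together with the observation that an edge cannot have both endpoints absent from a minimum vertex cover (applied to $uw$ in $G_u$ and to $vw$ in $G_v$) in order to rule out the configuration in which both segments save at $w$ at once. I anticipate closing this with a short case analysis on whether $u$ lies in a minimum cover of $G_u$ and whether $v$ lies in a minimum cover of $G_v$: in the subcase where $u$ is forced out, the edge fact gives $w$ in every minimum cover of $G_u$, and then the Lemma~\ref{lem:computemvc2} inequality forces $\mvc_{vw}(G_v)$ up by one, and symmetrically; the remaining subcase is handled by noting that $\mvc_{uw}(G_u)=\mvc(G_u)$ and $\mvc_{vw}(G_v)=\mvc(G_v)$ together would contradict the hypothesis. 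The $\mvc_u$ and $\mvc_v$ identities then follow by the same template (the latter by symmetry), and the upper and lower bounds match termwise.
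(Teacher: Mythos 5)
Your proposal is correct and matches the paper's argument in all essentials: the same segment decomposition with $w$ counted once, the same use of Lemma~\ref{lem:computemvc2} (equivalently, that $\mvc_{uw}(G_u)+\mvc_w(G_v)\ge \mvc(G)+1$ and its mirror under the hypothesis) to control the branch where $w$ lies in the cover, and the same edge-based contradiction ruling out $\mvc_u(G_u)=\mvc(G_u)+1$ together with $\mvc_{vw}(G_v)=\mvc(G_v)$ in the $\mvc_{uv}$ case. The only difference is organizational: you split on whether $w\in C$ and compare the resulting bounds termwise, whereas the paper starts from $\mvc_u(G)\in\{\mvc(G),\mvc(G)+1\}$ and argues by contradiction, but the underlying counting is identical.
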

\begin{proof}
Let $\mvc(G_u)=k_u$ and $\mvc(G_v)=k_v$.\\ 
First we prove that  $\mvc_u(G)= \min\{\mvc_u(G_u)+\mvc(G_v), \mvc(G_u)+\mvc_w(G_v), \mvc(G)+1\}$.  It can be easily seen that 
  $\mvc_u(G) \le \min\{\mvc_u(G_u)+\mvc(G_v), \mvc(G_u)+\mvc_w(G_v), \mvc(G)\\+1\}$. 
  Now, we need to show that 
  $\mvc_u(G) \ge \min\{\mvc_u(G_u)+\mvc(G_v), \mvc(G_u)+\mvc_w(G_v),\\ \mvc(G)+1\}$. 
  We know that $\mvc_u(G) \in \{\mvc(G), \mvc(G) +1\}$. If $\mvc_u(G) = \mvc(G) +1$, then the proof is immediate, because $\mvc_u(G)$ is equal to
one of the three quantities listed and so it is greater than or equal to their minimum. 
Otherwise, we have $\mvc_u(G) = \mvc(G)=\mvc(G_u)+\mvc(G_v)$. In this case, it is enough to show that among the three parameters, 
the first or the second will be equal to $\mvc(G_u)+\mvc(G_v)$. For contradiction, suppose this was not the case. Then, $\mvc_u(G_u)=k_u+1$ and $\mvc_w(G_v)=k_v+1$. 
Consider any minimum vertex cover $S$ of $G$ such that $u \in S$. If $w \in S$, then $|S| \ge \mvc_u(G_u)+\mvc_w(G_v)-1 \ge k_u + k_v+1$, a contradiction.
If  $w \notin S$, again $|S| \ge \mvc_u(G_u)+ \mvc(G_v)\ge k_u + k_v+1$, a contradiction. Hence, we are done.
The proof for $\mvc_v(G)$ is symmetric to the previous case.

Now, we show that $\mvc_{uv}(G)=\min\{\mvc_u(G_u)+\mvc_{v}(G_v), \mvc(G)+1\}$. It is easy to see that 
  $\mvc_{uv}(G) \le \min\{\mvc_u(G_u)+\mvc_{v}(G_v), \mvc(G)+1\}$. 
  We know that $\mvc_{uv}(G) \in \{\mvc(G), \mvc(G) +1\}$.
  Suppose $\mvc_{uv}(G)=\mvc(G)=k_u+k_v$. In this case, it suffices to show that $\mvc_u(G_u)= k_u$ and $\mvc_v(G_v) = k_v$. For contradiction, 
  assume $\mvc_u(G_u)=k_u+1$. Then, $\mvc_w(G_u)=k_u$. Since, $\mvc_{uv}(G)=k_u+k_v$, $\mvc_{vw}(G_v)=k_v$. Then, $\mvc(G)=k_u+k_v-1$, 
  which is a contradiction. Hence, $\mvc_u(G_u)\le k_u$. Similarly, $\mvc_v(G_v)\le k_v$ as required. 
  Now, suppose $\mvc_{uv}(G)=\mvc(G)+1=k_u+k_v+1$. In this case, it suffices to show that $\mvc_u(G_u)+\mvc_v(G_v)\ge \mvc_{uv}(G)=\mvc(G)+1$, 
  which is straightforward to obtain. \qed
\end{proof}

From Lemma~\ref{lem:computemvc2}, Lemma~\ref{lem:computemvcfix_gen-1} and Lemma~\ref{lem:computemvcfix_gen-2}, the following theorem is immediate.

 \begin{theorem}\label{thm:vc_evc_general-1}
 Let $G$ be a maximal outerplanar graph and $uv$ be an edge on the outer face of $G$ such that $deg_G(u)>2$ and $deg_G(v)>2$.
 Let $w=\Delta(uv)$.
 \begin{enumerate}
  \item Given $\mathscr{M}(G_u, uw)$ and $\mathscr{M}(G_v, vw)$, it is possible to compute $\mvc(G)$ in constant
  time.
  \item Given $\mvc(G)$, $\mathscr{M}(G_u, uw)$ and $\mathscr{M}(G_v, vw)$, it is possible to compute the 
  remaining mvc parameters of $G$ with respect to $uv$   in constant time.
  \end{enumerate}
  \end{theorem}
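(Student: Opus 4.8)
The plan is to read the theorem directly off the three preceding lemmas, treating it as a bookkeeping statement: the only real content is to verify that every quantity appearing in those lemmas' formulas is one of the eight numbers in $\mathscr{M}(G_u, uw) \cup \mathscr{M}(G_v, vw)$, and that the required case distinction can be resolved in constant time.

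For part~(1), I would simply invoke Lemma~\ref{lem:computemvc2}, which expresses $\mvc(G)$ as a minimum of three sums built from $\mvc_w(G_u)$, $\mvc_{uw}(G_u)$, $\mvc(G_u)$ (all members of $\mathscr{M}(G_u, uw)$) and $\mvc_{vw}(G_v)$, $\mvc_w(G_v)$, $\mvc(G_v)$ (all members of $\mathscr{M}(G_v, vw)$). Since these six values are given, computing the minimum of three numbers takes constant time, which settles part~(1).

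For part~(2), the key point is that knowing $\mvc(G)$ lets me decide, in constant time, which of the two cases of Observation~\ref{obs:computemvc1} holds, by comparing $\mvc(G)$ against $\mvc(G_u)+\mvc(G_v)$ (both already available). If $\mvc(G)=\mvc(G_u)+\mvc(G_v)-1$, I apply Lemma~\ref{lem:computemvcfix_gen-1}, whose three formulas for $\mvc_u(G)$, $\mvc_v(G)$, $\mvc_{uv}(G)$ again use only entries of the two given parameter sets. If instead $\mvc(G)=\mvc(G_u)+\mvc(G_v)$, I apply Lemma~\ref{lem:computemvcfix_gen-2}; each of its three formulas is a minimum over expressions combining entries of $\mathscr{M}(G_u, uw)$, $\mathscr{M}(G_v, vw)$, and the already-computed value $\mvc(G)$. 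In either branch I perform only a bounded number of additions, comparisons, and minima, so all three remaining parameters are obtained in constant time.

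Since the theorem is a direct consequence of the lemmas, I do not anticipate a genuine obstacle. The one point deserving a sentence of care is why $\mvc(G)$ must be supplied as input in part~(2) rather than recomputed on the fly: it is exactly the value $\mvc(G)$ that tells us which of Lemma~\ref{lem:computemvcfix_gen-1} or Lemma~\ref{lem:computemvcfix_gen-2} is the correct branch, and it also appears explicitly (as $\mvc(G)+1$) inside the min-expressions of Lemma~\ref{lem:computemvcfix_gen-2}. This is precisely the reason the two parts of the theorem are separated, mirroring the structure already established in Theorem~\ref{thm:vc_evc_deg2-1}.
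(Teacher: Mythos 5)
Your proposal is correct and follows exactly the paper's route: the paper derives this theorem as an immediate consequence of Lemma~\ref{lem:computemvc2}, Lemma~\ref{lem:computemvcfix_gen-1} and Lemma~\ref{lem:computemvcfix_gen-2}, which is precisely the bookkeeping you carry out. Your added remark on why $\mvc(G)$ is needed to select the correct branch is a faithful elaboration of what the paper leaves implicit.
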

  
  \section{Bounds on the evc parameters}\label{bounds}
In this section, we will derive some bounds on the evc parameters of a maximal outerplanar graph. These bounds will be used in the next section for the recursive computation of the evc parameters. Throughout this section, we consider $G$ to be a maximal outerplanar graph on at least four vertices, $uv$ an edge on its outer face and $w=\Delta(uv)$. 

First, we derive some bounds for $\mathscr{E}(G, uv)$ when $deg_G(u)=2$.
 \begin{lemma}\label{lem:deg2}
If $u$ is a degree-2 vertex in $G$, then:
 \begin{enumerate}
  \item $\evc(G) \le \evc_u(G) = \evc(G_v)+1$
  \item $\evc_{uv}(G) = \evc_v(G_v)+1$ 
   \item $\evc(G) \ge max\{\mvc(G_v)+1, \evc_{vw}(G_v)\}$
  \item $\evc_v(G)=\text{max}\{\mvc_{uv}(G), \evc(G)\}$
 \end{enumerate}

\end{lemma}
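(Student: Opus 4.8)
The plan is to prove each of the four claims separately, exploiting the structure that when $\deg_G(u)=2$, the segment $G_u$ is just the single edge $uw$, so $G$ is obtained from $G_v$ by attaching the triangle $uvw$ along the edge $vw$. Throughout I would repeatedly invoke Proposition~\ref{prop:evc-subset}, which reduces every $\evc$ quantity to a maximum of $\mvc$ quantities, together with Observation~\ref{obs:computemvcfix_deg2} for the mvc parameters of $G$. The first item should follow most directly: the inequality $\evc(G)\le\evc_u(G)$ is immediate since occupying more vertices can only raise the guard requirement, and the equality $\evc_u(G)=\evc(G_v)+1$ I would argue by observing that any configuration on $G$ with $u$ occupied restricts to a configuration on $G_v$ once the mandatory guard on $u$ is accounted for, and conversely a defending strategy on $G_v$ lifts to one on $G$ by permanently stationing a guard at $u$ (which is legitimate because $u$ has only $w$ and $v$ as neighbours and the triangle $uvw$ is always coverable). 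A cleaner route is to use the $\mvc$-max formula: $\evc_u(G)=\max_{x}\mvc_{ux}(G)$, and then relate $\mvc_{ux}(G)$ to $\mvc_x(G_v)$ via Observation~\ref{obs:computemvcfix_deg2}.

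For item 2, I would express $\evc_{uv}(G)=\max_{x}\mvc_{uvx}(G)$ through Proposition~\ref{prop:evc-subset}, and then show $\mvc_{uvx}(G)=\mvc_{vx}(G_v)+1$ for the relevant $x$ by the same decomposition (the degree-two vertex $u$ forces exactly one extra guard beyond what $G_v$ needs with $v$ occupied). Taking the maximum over $x$ turns the right-hand side into $\evc_v(G_v)+1$. Items 3 and 4 are the genuine lower-bound/equality statements and will require the most care. For item 3, I would establish the two lower bounds independently: $\evc(G)\ge\mvc(G_v)+1$ because any vertex cover of $G$ must cover the triangle $uvw$ and hence uses at least one guard inside it beyond a cover of $G_v$, formalized via $\mvc(G)=\mvc_{vw}(G_v)\ge\mvc(G_v)+1$ only when $w$ is forced—so I would instead derive it from $\evc(G)=\max_x\mvc_x(G)$ and a suitable choice of $x$; and $\evc(G)\ge\evc_{vw}(G_v)$ because any eternal strategy on $G$ induces one on $G_v$ that keeps $v$ and $w$ guarded whenever attacks are confined to $G_v$.

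The main obstacle I anticipate is item 4, the exact formula $\evc_v(G)=\max\{\mvc_{uv}(G),\evc(G)\}$. The inequality $\evc_v(G)\ge\evc(G)$ is free, and $\evc_v(G)\ge\mvc_{uv}(G)$ should come from the fact that fixing $v$ forces $u$'s coverage into play. The hard direction is the upper bound $\evc_v(G)\le\max\{\mvc_{uv}(G),\evc(G)\}$: here I would again write $\evc_v(G)=\max_x\mvc_{vx}(G)$ and must show that for every $x$, $\mvc_{vx}(G)$ does not exceed this maximum. I expect to split on whether $x$ lies in $G_v$ or is the vertex $u$, using Observation~\ref{obs:computemvcfix_deg2} to rewrite $\mvc_{vx}(G)$ in terms of mvc parameters of $G_v$, and then bounding these by $\evc(G)$ (via $\evc(G)=\max_x\mvc_x(G)$) or by $\mvc_{uv}(G)=\mvc_v(G_v)+1$. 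Reconciling the two regimes—the $x\in G_v$ case controlled by $\evc(G)$ and the $x=u$ case controlled by $\mvc_{uv}(G)$—and verifying that no choice of $x$ escapes both bounds is where the argument will be most delicate, and I would handle it by a careful case analysis keyed to whether a minimum vertex cover of $G_v$ can simultaneously contain $v$ and the attacked vertex.
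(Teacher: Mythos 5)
Your plan is correct and follows essentially the same route as the paper: every part is reduced via Proposition~\ref{prop:evc-subset} to a maximum of $\mvc$ quantities, Observation~\ref{obs:computemvcfix_deg2} supplies the needed mvc identities, and item~4 is settled exactly as you describe by splitting $\max_x \mvc_{vx}(G)$ into the case $x=u$ (controlled by $\mvc_{uv}(G)$) and $x\in V(G_v)$ (controlled by $\evc(G)$). The one detail you should make explicit—needed both for $\evc(G)\ge\evc_{vw}(G_v)$ in item~3 and for $\mvc_{xv}(G)\le\evc(G)$ when $x\ne u$ in item~4—is the swap argument: any vertex cover of $G$ must contain two of $\{u,v,w\}$, and since $u$'s only neighbours are $v$ and $w$, replacing $u$ by the missing one of $v,w$ preserves the cover and its size; this is what the paper uses in place of your informal "induced strategy on $G_v$" phrasing.
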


\begin{proof}
Let $\evc(G)=k$.
 \begin{enumerate}
  \item First inequality is straightforward. So, it suffices to show that $\evc_u(G) = \evc(G_v)+1$. From 
  Proposition~\ref{prop:evc-subset}, $\max_{x\in V(G_v)} \mvc_x(G_v)=\evc(G_v)$. Hence, $\max_{x\in V(G_v)} \mvc_{xu}(G)= \evc(G_v)+1$. Hence, by
  Proposition~\ref{prop:evc-subset}, 
  we get $\evc_u(G)= \evc(G_v)+1$.
  \item Similar to the previous case, by Proposition~\ref{prop:evc-subset}, $\max_{x\in V(G_v)} \mvc_x(G_v)=\evc(G_v)$. Hence, $\max_{x\in V(G_v)} \mvc_{xuv}(G)\le \evc(G_v)+1$. Hence, by
  Proposition~\ref{prop:evc-subset}, 
  we get $\evc_{uv}(G)\le \evc(G_v)+1$. By Proposition~\ref{prop:occ_edge_vertices} and Part~1 of this lemma, $\evc_{uv}(G)\ge \evc_u(G)=\evc_v(G_v)+1$.
 Therefore, $\evc_{uv}(G) = \evc_v(G_v)+1$.
 \item By Observation~\ref{obs:computemvcfix_deg2}, we get $\mvc_u(G)=\mvc(G_v)+1$. By Proposition~\ref{prop:evc-subset}, since $\evc(G)\ge \mvc_u(G)$, 
  we have $\evc(G) \ge \mvc(G_v)+1$.\\
 By Proposition~\ref{prop:evc-subset}, for any vertex $x \in V(G)$, $\mvc_x(G)\le k$. Further, any vertex cover of $G$ contain at least two vertices among $u$, $v$ and $w$. Let $x$ be an arbitrary vertex in $G_v$ and $C$ be a vertex cover of $G$ of size $k$ containing $x$. If $C$ contains both $v$ and $w$, the vertex cover obtained by restricting $C$ to $V(G_v)$ gives a vertex cover of $G_v$ of size at most $k$. If $G$ does not contain $v$ (respectively, $w$), then $C'=C \setminus \{u\} \cup \{v\}$ (respectively, $C'=C \setminus \{u\} \cup \{w\}$) is a vertex cover of $G_v$ of size $k$ that contains $x$, $v$ and $w$. Therefore, for any vertex $x \in V(G_v)$, $\mvc_{xvw}(G_v)\le k$ containing $x$ along with both $v$ and $w$. By Proposition~\ref{prop:evc-subset}, $\evc_{vw}(G_v) \le k=\evc(G)$.
 \item By Proposition~\ref{prop:evc-subset}, $\evc_v (G) \ge \mvc_{uv}(G)$ and it is easy to see that $\evc_v (G) \ge \evc(G)$. Hence, $\evc_v(G)\ge \max\{\mvc_{uv}(G), \evc(G)\}$. 
 Now, we need to show that $\evc_v(G)\le \max\{\mvc_{uv}(G), \evc(G)\}$.
 By Proposition~\ref{prop:occ_edge_vertices}, $\evc_v(G) \le \evc(G)+1$. 
 If $\evc_v(G)=\evc(G)$, then we are done. Hence, suppose $\evc_v(G) = \evc(G)+1$. In this case, it is enough to show that $\mvc_{uv}(G) \ge \evc(G)+1$. By Proposition~\ref{prop:evc-subset}, we know that $\evc(G)=\max_{x \in V(G)}\mvc_{x}(G)$. Since degree of $u$ is $2$, this implies that for any vertex $x \in V(G)\setminus \{u\}$, $\mvc_{xv}(G)\le \evc(G)$. If $\mvc_{uv}(G)\le \evc(G)$, then by Proposition~\ref{prop:evc-subset}, $\evc_v(G)=\evc(G)$ which is a contradiction. Hence, $\mvc_{uv}(G)\ge \evc(G)+1$ as required.\qed
 \end{enumerate} 
\end{proof}

Now, we derive some upper bounds for $\mathscr{E}(G, uv)$ when $deg_G(u)>2$ and $deg_G(v)>2$.
\begin{proposition}\label{prop:mvc_x}
 \begin{enumerate}
  \item $\max_{x \in V(G_v)}\mvc_x(G) \le \min\{\mvc_w(G_u)+\evc_{vw}(G_v)-1,\mvc_{uw}(G_u)+\evc_w(G_v)\\-1, \mvc(G_u)+\evc(G_v)\}$
  \item $\max_{x \in V(G_u)}\mvc_x(G) \le \min\{\evc_{uw}(G_u)+\mvc_{w}(G_v)-1,\evc_{w}(G_u)+\mvc_{vw}(G_v)-1, \evc(G_u)+\mvc(G_v)\}$.
 \end{enumerate}
\end{proposition}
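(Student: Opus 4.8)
The plan is to fix an arbitrary vertex $x \in V(G_v)$ and to show that $\mvc_x(G)$ is bounded above by each of the three quantities inside the minimum; since the resulting inequalities then hold for every such $x$, taking the maximum over $x \in V(G_v)$ yields Part~1. Part~2 follows from Part~1 by interchanging the roles of $u$ and $v$ (and of $G_u$ and $G_v$), so I would only write out Part~1 in full.

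The structural fact I would rely on is the decomposition $V(G)=V(G_u)\cup V(G_v)$ with $V(G_u)\cap V(G_v)=\{w\}$ and $E(G)=E(G_u)\cup E(G_v)\cup\{uv\}$, which is immediate from Definition~\ref{def:uwsegment}. This gives a gluing principle: if $C_u$ is a vertex cover of $G_u$ and $C_v$ is a vertex cover of $G_v$ with $u\in C_u$ or $v\in C_v$, then $C_u\cup C_v$ is a vertex cover of $G$, and its size is $|C_u|+|C_v|$, reduced by one exactly when $w\in C_u\cap C_v$. To control the $G_v$-side I would use Proposition~\ref{prop:evc-subset} in the form $\evc_S(G_v)=\max_{y\in V(G_v)}\mvc_{S\cup\{y\}}(G_v)$: instantiating $y=x$ shows that for the fixed $x$ there is a vertex cover of $G_v$ of size at most $\evc_{vw}(G_v)$ (resp. $\evc_w(G_v)$, resp. $\evc(G_v)$) containing $x$ together with $\{v,w\}$ (resp. $\{w\}$, resp. no extra vertex).

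For the first two bounds the gluing is clean. For $\mvc_w(G_u)+\evc_{vw}(G_v)-1$, I would take $C_u$ a minimum vertex cover of $G_u$ containing $w$ and $C_v$ a vertex cover of $G_v$ containing $x,v,w$ of size at most $\evc_{vw}(G_v)$; their union covers $uv$ through $v$, contains $x$, and has size at most $\mvc_w(G_u)+\evc_{vw}(G_v)-1$ since both parts contain $w$. For $\mvc_{uw}(G_u)+\evc_w(G_v)-1$, I would instead take $C_u$ a minimum vertex cover of $G_u$ containing both $u$ and $w$, and $C_v$ a vertex cover of $G_v$ containing $x$ and $w$ of size at most $\evc_w(G_v)$; the union covers $uv$ through $u$, contains $x$, and again shares $w$, giving size at most $\mvc_{uw}(G_u)+\evc_w(G_v)-1$.

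The third bound $\mvc(G_u)+\evc(G_v)$ is where the main obstacle lies, because here nothing forces $w$ to be shared and there is no $-1$ available to spend. I would take $C_v$ a vertex cover of $G_v$ of size at most $\evc(G_v)$ containing $x$, and $C_u$ an unconstrained minimum vertex cover of $G_u$, and then split into cases according to whether $uv$ is already covered. If $v\in C_v$ or $u\in C_u$, the union covers $uv$ and has size at most $\mvc(G_u)+\evc(G_v)$. The only remaining case is $v\notin C_v$ and $u\notin C_u$; then covering the edges $vw$ and $uw$ forces $w\in C_v$ and $w\in C_u$, and I would repair the cover by adjoining $u$ to $C_u$. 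This costs one extra vertex but simultaneously makes $w$ a shared vertex, so the $+1$ is cancelled by a $-1$ and the total size stays at most $\mvc(G_u)+\evc(G_v)$, while $uv$ is now covered through $u$. Checking that this exchange still yields a valid vertex cover of $G$ containing $x$ is the only delicate point; once it is verified, the three inequalities together give $\mvc_x(G)\le\min\{\ldots\}$ for every $x\in V(G_v)$, as required.
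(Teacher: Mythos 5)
Your proposal is correct and follows essentially the same route as the paper: both glue a vertex cover of $G_u$ and one of $G_v$ across the shared vertex $w$, use Proposition~\ref{prop:evc-subset} to replace $\mvc$ of $G_v$ with a forced vertex $x$ by the corresponding $\evc$ parameter, and for the third bound split on whether the edge $uv$ is already covered. Your repair step (adjoining $u$ when $u\notin C_u$ and $v\notin C_v$, paying $+1$ but saving $1$ from the now-shared $w$) is just the explicit-cover form of the paper's inequality $\mvc_{uw}(G_u)\le\mvc(G_u)+1$ in its second case, so no substantive difference remains.
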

\begin{proof}
Since Parts~1 and 2 are symmetric, we only prove Part~1. Consider a vertex $x \in V(G_v)$. It can be easily seen that $\mvc_x(G) \le \min\{\mvc_w(G_u)+\mvc_{xvw}(G_v)-1,\mvc_{uw}(G_u)+\mvc_{xw}(G_v)-1\}$. By Proposition~\ref{prop:evc-subset}, we get $\mvc_x(G) \le \min\{\mvc_w(G_u)+\evc_ {vw}(G_v)-1,\mvc_{uw}(G_u)+\evc_w(G_v)-1\}$. Now, it remains to prove that $\mvc_x(G) \le \mvc(G_u)+\evc(G_v)$. If $\mvc_x(G_v)=\mvc_{xv}(G_v)$, then $\mvc_x(G) \le \mvc(G_u)+\mvc_{xv}(G_v)=\mvc(G_u)+\mvc_x(G_v)\le \mvc(G_u)+\evc(G_v)$ by Proposition~\ref{prop:evc-subset}. Otherwise, it is the case that $\mvc_x(G_v)=\mvc_{xw}(G_v)$. This implies $\mvc_x(G) \le \mvc_{uw}(G_u)+\mvc_{xw}(G_v)-1$. Since, $\mvc_{uw}(G)\le \mvc(G_u)+1$, we get $\mvc_x(G) \le \mvc(G_u)+\mvc_{xw}(G_v)=\mvc(G_u)+\mvc_x(G_v) \le \mvc(G_u)+\evc(G_v)$ by Proposition~\ref{prop:evc-subset}. \qed
\end{proof}

Lemma~\ref{lem:deg>2-11} gives an upper bound for $\evc(G)$ when $deg_G(u)>2$ and $deg_G(v)>2$.
\begin{lemma}\label{lem:deg>2-11}
 If $deg_{G}(u)>2$ and $deg_{G}(v) > 2$, then:
 \begin{enumerate}
  \item $\evc(G) \le \min \{\mvc(G)+1, \evc_w(G_u)+\evc_w(G_v)-1\}$
  \item $\evc(G) \le \max\{U, V\}$ where\\ $U=\min\{\evc_{uw}(G_u)+\mvc_{w}(G_v)-1,\evc_{w}(G_u)+\mvc_{vw}(G_v)-1, \evc(G_u)+\mvc(G_v)\}$ and \\$V=\min\{\mvc_w(G_u)+\evc_{vw}(G_v)-1,\mvc_{uw}(G_u)+\evc_w(G_v)-1, \mvc(G_u)+\evc(G_v)\}$
 \end{enumerate}
\end{lemma}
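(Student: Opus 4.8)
The plan is to reduce both parts to the identity $\evc(G)=\max_{x\in V(G)}\mvc_x(G)$ furnished by Proposition~\ref{prop:evc-subset} (with $S=\emptyset$), exploiting the clean decomposition $V(G)=V(G_u)\cup V(G_v)$ with $V(G_u)\cap V(G_v)=\{w\}$, together with $E(G)=E(G_u)\cup E(G_v)\cup\{uv\}$ (a disjoint union, since $u\notin V(G_v)$ and $v\notin V(G_u)$).

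For Part~1, the bound $\evc(G)\le\mvc(G)+1$ is immediate, as Proposition~\ref{prop:evc-subset} forces $\evc(G)\in\{\mvc(G),\mvc(G)+1\}$. For the bound $\evc(G)\le\evc_w(G_u)+\evc_w(G_v)-1$, I would fix an arbitrary $x\in V(G)$ and bound $\mvc_x(G)$. Assuming $x\in V(G_u)$ (the case $x\in V(G_v)$ is symmetric, and $x=w$ falls under either), I would take a minimum vertex cover $C_u$ of $G_u$ containing both $x$ and $w$, and a minimum vertex cover $C_v$ of $G_v$ containing both $w$ and $v$. Their union is a vertex cover of $G$ containing $x$, because $C_u$ covers $E(G_u)$, $C_v$ covers $E(G_v)$, and the remaining edge $uv$ is covered by $v\in C_v$. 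Since $C_u\cap C_v\subseteq V(G_u)\cap V(G_v)=\{w\}$ and $w$ lies in both, we get $\mvc_x(G)\le\mvc_{xw}(G_u)+\mvc_{vw}(G_v)-1$. Finally, applying Proposition~\ref{prop:evc-subset} to $G_u$ and to $G_v$ with $S=\{w\}$ gives $\mvc_{xw}(G_u)\le\evc_w(G_u)$ and $\mvc_{vw}(G_v)\le\evc_w(G_v)$, whence $\mvc_x(G)\le\evc_w(G_u)+\evc_w(G_v)-1$; maximizing over $x$ completes Part~1.

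For Part~2, I would simply observe that the quantities $U$ and $V$ in the statement are exactly the right-hand sides of Parts~2 and~1 of Proposition~\ref{prop:mvc_x}, so that $\max_{x\in V(G_u)}\mvc_x(G)\le U$ and $\max_{x\in V(G_v)}\mvc_x(G)\le V$. Since $V(G)=V(G_u)\cup V(G_v)$, splitting the maximum in $\evc(G)=\max_{x\in V(G)}\mvc_x(G)$ over these two vertex sets yields $\evc(G)=\max\{\max_{x\in V(G_u)}\mvc_x(G),\,\max_{x\in V(G_v)}\mvc_x(G)\}\le\max\{U,V\}$.

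The one genuinely delicate point is the cover-gluing in Part~1: when $x$ lies in $G_u$, the naive union of a cover of $G_u$ through $w$ and a cover of $G_v$ through $w$ need not cover the bridging edge $uv$, so I must deliberately route the $G_v$-cover through $v$ (and symmetrically through $u$ when $x\in V(G_v)$) and then verify that this extra demand is still absorbed by the $\evc_w$ budget via $\mvc_{vw}(G_v)\le\evc_w(G_v)$. Once that is checked, everything else is bookkeeping on $|C_u\cup C_v|=|C_u|+|C_v|-1$ for Part~1 and a direct appeal to Proposition~\ref{prop:mvc_x} for Part~2.
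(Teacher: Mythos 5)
Your proposal is correct and follows essentially the same route as the paper: the paper's Part~1 is exactly the "easy to see" consequence of Proposition~\ref{prop:evc-subset} that you spell out via the cover-gluing bound $\mvc_x(G)\le\mvc_{xw}(G_u)+\mvc_{vw}(G_v)-1$, and its Part~2 is the same direct combination of Proposition~\ref{prop:mvc_x} with $\evc(G)=\max_{x\in V(G)}\mvc_x(G)$. Your treatment of the bridging edge $uv$ is the right (and only nontrivial) detail, and it is handled correctly.
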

\begin{proof}
\begin{enumerate}
 \item  By Proposition~\ref{prop:evc-subset}, it is easy to see that $\evc(G) \le \min\{\mvc(G)+1, \evc_w(G_u)+\evc_w(G_v)-1\}$. 
 \item By Proposition~\ref{prop:mvc_x}, we have $\max_{x \in V(G)} \mvc_x(G) \le \max(U, V)$. Hence, by Proposition~\ref{prop:evc-subset}, we get $\evc(G) \le \max(U, V)$.
\end{enumerate}
\end{proof}
\begin{proposition}\label{prop:mvc_vx}
 \begin{enumerate}
  \item $\max_{x \in V(G_v)}\mvc_{xv}(G) \le \min\{\mvc(G_u)+\evc_{v}(G_v),\mvc_{w}(G_u)+\evc_{vw}(G_v)-1\}$
  \item $\max_{x \in V(G_u)}\mvc_{xv}(G) \le \min\{\evc(G_u)+\mvc_{v}(G_v),\evc_{w}(G_u)+\mvc_{vw}(G_v)-1\}$
 \end{enumerate}
\end{proposition}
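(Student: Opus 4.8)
The plan is to exploit the gluing structure of the $uv$-segments: since $V(G)=V(G_u)\cup V(G_v)$ with $V(G_u)\cap V(G_v)=\{w\}$ and $E(G)=E(G_u)\cup E(G_v)\cup\{uv\}$, the union $C_u\cup C_v$ of any vertex cover $C_u$ of $G_u$ and any vertex cover $C_v$ of $G_v$ covers every edge of $G$ except possibly $uv$, and its size is $|C_u|+|C_v|$ if $w\notin C_u\cap C_v$ and $|C_u|+|C_v|-1$ if $w\in C_u\cap C_v$. Because both parts force $v$ into the cover, the edge $uv$ is always covered automatically, so---unlike in the proof of Proposition~\ref{prop:mvc_x}---no separate case analysis for $uv$ is needed. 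This lets both inequalities follow from explicit constructions of vertex covers of $G$ of the claimed sizes.

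For Part~1, fix $x\in V(G_v)$. To obtain the first term, I would take $C_u$ to be a minimum vertex cover of $G_u$ (of size $\mvc(G_u)$) and $C_v$ a vertex cover of $G_v$ containing both $x$ and $v$; by Proposition~\ref{prop:evc-subset} applied to $G_v$ with $S=\{v\}$ we have $\mvc_{xv}(G_v)\le\evc_v(G_v)$, so $C_v$ can be chosen of size at most $\evc_v(G_v)$. Then $C_u\cup C_v$ is a vertex cover of $G$ containing $x$ and $v$ of size at most $\mvc(G_u)+\evc_v(G_v)$. For the second term I would instead take $C_u$ a minimum vertex cover of $G_u$ containing $w$ (size $\mvc_w(G_u)$) and $C_v$ a vertex cover of $G_v$ containing $x$, $v$ and $w$; here Proposition~\ref{prop:evc-subset} with $S=\{v,w\}$ gives $\mvc_{xvw}(G_v)\le\evc_{vw}(G_v)$. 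Since $w$ now lies in both covers, $|C_u\cup C_v|\le\mvc_w(G_u)+\evc_{vw}(G_v)-1$. Taking the minimum of the two bounds, then the maximum over $x$, proves Part~1.

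Part~2 is handled by the mirror-image construction, now keeping $v$ forced on the $G_v$ side and letting the evc-type bound act on the $G_u$ side that contains $x$. For the first term, take $C_u$ a vertex cover of $G_u$ containing $x$ with $|C_u|\le\evc(G_u)$ (using $\mvc_x(G_u)\le\evc(G_u)$ from Proposition~\ref{prop:evc-subset}) together with a minimum vertex cover $C_v$ of $G_v$ containing $v$ (size $\mvc_v(G_v)$); for the second term, take $C_u$ containing $x$ and $w$ with $|C_u|\le\evc_w(G_u)$ (using $\mvc_{xw}(G_u)\le\evc_w(G_u)$) and a minimum vertex cover $C_v$ of $G_v$ containing $v$ and $w$ (size $\mvc_{vw}(G_v)$), saving one for the shared $w$.

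The argument is almost entirely routine; the only place where anything is genuinely used is the repeated invocation of Proposition~\ref{prop:evc-subset} to replace a $\mvc$ with an extra forced vertex $x$ by the corresponding $\evc$ parameter (the bounds $\mvc_{xv}(G_v)\le\evc_v(G_v)$, $\mvc_{xvw}(G_v)\le\evc_{vw}(G_v)$, $\mvc_x(G_u)\le\evc(G_u)$ and $\mvc_{xw}(G_u)\le\evc_w(G_u)$). Rather than a real obstacle, the point to watch is the bookkeeping of the shared vertex $w$: the ``$-1$'' in the two segment terms is valid precisely because both chosen covers are forced to contain $w$, so their intersection is exactly $\{w\}$, whereas in the other two terms one simply uses $|C_u\cup C_v|\le|C_u|+|C_v|$.
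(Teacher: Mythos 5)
Your proof is correct and takes essentially the same route as the paper: both arguments decompose a vertex cover of $G$ into covers of the two segments glued at $w$ (yielding $\mvc_{xv}(G)\le \mvc(G_u)+\mvc_{xv}(G_v)$ and $\mvc_{xv}(G)\le \mvc_w(G_u)+\mvc_{xvw}(G_v)-1$ for Part~1, and the mirror inequalities for Part~2), and then invoke Proposition~\ref{prop:evc-subset} to replace each $\mvc$ with the extra forced vertex $x$ by the corresponding evc parameter. The paper states the segment inequalities as ``easy to see''; you simply supply the same constructions explicitly, including the correct bookkeeping of the shared vertex $w$ and of the edge $uv$.
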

\begin{proof}
 \begin{enumerate}
  \item It is easy to see that for any vertex 
 $x \in V(G_v)$, $\mvc_{xv}(G) \le \mvc_w(G_u)+\mvc_{xvw}(G_v)-1$ and $\mvc_{xv}(G) \le  \mvc(G_u)+\mvc_{xv}(G_v)$. 
  By Proposition~\ref{prop:evc-subset}, we get the result.
  \item  It is easy to see that for any vertex $x \in V(G_u)$, $\mvc_{xv}(G) \le \mvc_x(G_u)+\mvc_v(G_v)$ and $\mvc_{xv}(G) \le \mvc_{xw}(G_u)+\mvc_{vw}(G_v)-1$. By Proposition~\ref{prop:evc-subset}, we get the result. \qed
 \end{enumerate}

\end{proof}  
Lemma~\ref{lem:deg>2-12} gives an upper bound for $\evc_v(G)$ when $deg_G(u)>2$ and $deg_G(v)>2$.
\begin{lemma}\label{lem:deg>2-12}
 If $deg_{G}(u)>2$ and $deg_{G}(v) > 2$, then:
 \begin{enumerate}
  \item $\evc_v(G) \le \min\{\evc(G)+1, \mvc_v(G)+1,\evc_w(G_u)+\evc_{vw}(G_v)-1\}$
  \item $\evc_v(G) \le \max\{U,V\}$, where $U=  \min \{ \evc(G_u)+\mvc_v(G_v), \evc_w(G_u)+ \mvc_{vw}(G_v)-1\}$ and \\$V=\min\{\mvc(G_u)+\evc_{v}(G_v),\mvc_{w}(G_u)+\evc_{vw}(G_v)-1\}$
 \end{enumerate}
\end{lemma}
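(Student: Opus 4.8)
The plan is to reduce everything to the identity $\evc_v(G)=\max_{x\in V(G)}\mvc_{vx}(G)$ supplied by Proposition~\ref{prop:evc-subset}, and to exploit the fact that the two segments meet only at $w$, so $V(G)=V(G_u)\cup V(G_v)$ with $V(G_u)\cap V(G_v)=\{w\}$ and $E(G)=E(G_u)\cup E(G_v)\cup\{uv\}$. Part~2 then falls out immediately: I would split the defining maximum as $\evc_v(G)=\max\{\max_{x\in V(G_u)}\mvc_{vx}(G),\ \max_{x\in V(G_v)}\mvc_{vx}(G)\}$ and bound the first inner maximum by $U$ and the second by $V$ using Parts~2 and~1 of Proposition~\ref{prop:mvc_vx} respectively; hence $\evc_v(G)\le\max\{U,V\}$.

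For Part~1 I would treat the three arguments of the minimum in turn. The bound $\evc_v(G)\le\evc(G)+1$ is exactly Proposition~\ref{prop:occ_edge_vertices}, and $\evc_v(G)\le\mvc_v(G)+1$ is immediate from Proposition~\ref{prop:evc-subset}, which pins $\evc_v(G)$ to $\{\mvc_v(G),\mvc_v(G)+1\}$. Only the third term, $\evc_v(G)\le\evc_w(G_u)+\evc_{vw}(G_v)-1$, requires a construction. Using $\evc_v(G)=\max_x\mvc_{vx}(G)$ once more, it suffices to show $\mvc_{vx}(G)\le\evc_w(G_u)+\evc_{vw}(G_v)-1$ for every $x$, which I would prove by cases on the side of $x$. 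If $x\in V(G_v)$, I would glue a minimum vertex cover of $G_u$ containing $w$ to a minimum vertex cover of $G_v$ containing $\{v,w,x\}$, overlapping only at $w$, to get $\mvc_{vx}(G)\le\mvc_w(G_u)+\mvc_{vwx}(G_v)-1$; then $\mvc_w(G_u)\le\evc_w(G_u)$ and, by Proposition~\ref{prop:evc-subset} on $G_v$ with $S=\{v,w\}$, $\mvc_{vwx}(G_v)\le\evc_{vw}(G_v)$. If $x\in V(G_u)$, I would symmetrically glue a cover of $G_u$ containing $\{w,x\}$ to a cover of $G_v$ containing $\{v,w\}$, giving $\mvc_{vx}(G)\le\mvc_{wx}(G_u)+\mvc_{vw}(G_v)-1\le\evc_w(G_u)+\evc_{vw}(G_v)-1$, where Proposition~\ref{prop:evc-subset} on $G_u$ with $S=\{w\}$ handles the $G_u$ term. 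In both gluings it is the membership of $v$ in the chosen $G_v$-cover that covers the single crossing edge $uv$.

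The step I expect to need the most care is the gluing itself: I must check that identifying the two partial covers along the shared vertex $w$ yields a set that (i) is a vertex cover of all of $G$, i.e.\ covers $E(G_u)$, $E(G_v)$ and the crossing edge $uv$, (ii) has size exactly the sum of the two cover sizes minus one, since the intersection is forced to be $\{w\}$, and (iii) contains both fixed vertices $v$ and $x$, so that it is a legitimate witness for $\mvc_{vx}(G)$. Once this bookkeeping is settled, the remaining inequalities are just the trivial $\mvc_S\le\evc_S$ together with the maximum characterization of Proposition~\ref{prop:evc-subset}, so no further difficulty should arise.
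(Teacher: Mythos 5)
Your proposal is correct and follows essentially the same route as the paper: both reduce $\evc_v(G)$ to $\max_{x}\mvc_{vx}(G)$ via Proposition~\ref{prop:evc-subset}, split the maximum over $V(G_u)$ and $V(G_v)$, and invoke the gluing bounds of Proposition~\ref{prop:mvc_vx} (together with Proposition~\ref{prop:occ_edge_vertices} for the $\evc(G)+1$ term). The only difference is that for the third term of Part~1 you re-derive the gluing inequalities inline rather than citing Proposition~\ref{prop:mvc_vx} directly, which is a stylistic rather than substantive departure.
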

\begin{proof}
\begin{enumerate}
 \item By Proposition~\ref{prop:occ_edge_vertices}, we know that $\evc_v(G) \le \evc(G)+1$. By Proposition~\ref{prop:mvc_vx}, $\max_{x\in V(G_u)}\\ \mvc_{xv}(G) \le \evc_{w}(G_u)+\mvc_{vw}(G_v)-1 \le \evc_w(G_u)+\evc_{vw}(G_v)-1$ and $\max_{x\in V(G_v)}\mvc_{xv}(G) \le \mvc_{w}(G_u)+\evc_{vw}(G_v)-1 \le \evc_w(G_u)+\evc_{vw}(G_v)-1$. By Proposition~\ref{prop:evc-subset}, 
  we get $\evc_v(G)\le  \evc_w(G_u)+\evc_{vw}(G_v)-1$. By Proposition~\ref{prop:evc-subset}, 
  we also get $\evc_v(G)\le \mvc_v(G)+1$. Hence, we are done.
  \item  By Proposition~\ref{prop:mvc_vx}, we have $\max_{x \in V(G_u)}\mvc_{vx}(G) \le U$ and $\max_{y \in V(G_v)}\mvc_{vy}(G) \le V$. Therefore, by Proposition~\ref{prop:evc-subset}, 
  we get $\evc_v(G)= \max_{x \in V(G)}\mvc_{vx}(G) \le  \max \{U, V\}$. 
\end{enumerate}
\end{proof}

Proposition~\ref{prop:mvc_ux} and  Lemma~\ref{lem:deg>2-122} are symmetric to Proposition~\ref{prop:mvc_vx} and  Lemma~\ref{lem:deg>2-12} respectively.
\begin{proposition}\label{prop:mvc_ux}
 \begin{enumerate}
  \item $\max_{x \in V(G_u)}\mvc_{xu}(G)\le \min\{\evc_u(G_u)+\mvc(G_v),\evc_{uw}(G_u)+\mvc_{w}(G_v)-1\}$
  \item $\max_{x \in V(G_v)}\mvc_{xu}(G) \le \min\{\mvc_u(G_u)+\evc(G_v),\mvc_{uw}(G_u)+\evc_{w}(G_v)-1\}$
 \end{enumerate}
\end{proposition}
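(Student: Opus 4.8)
The statement is explicitly the $u$/$v$ mirror of Proposition~\ref{prop:mvc_vx}, so the plan is to run exactly the argument used there with the roles of $u$ and $v$ (and correspondingly of $G_u$ and $G_v$) interchanged. The structural fact that makes this work is that $G$ decomposes into $G_u$ and $G_v$ sharing only the apex $w=\Delta(uv)$, together with the single extra edge $uv$; hence any vertex cover of $G$ can be assembled from a vertex cover of $G_u$ and a vertex cover of $G_v$, and the edge $uv$ is automatically covered whenever $u$ lies in the cover.

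For Part~1, I would fix an arbitrary $x\in V(G_u)$ and first establish the two pointwise bounds $\mvc_{xu}(G)\le \mvc(G_v)+\mvc_{xu}(G_u)$ and $\mvc_{xu}(G)\le \mvc_w(G_v)+\mvc_{xuw}(G_u)-1$, each by gluing the indicated minimum covers of the two segments: for the first, a minimum cover of $G_v$ together with a minimum cover of $G_u$ containing $\{x,u\}$; for the second, minimum covers of $G_v$ and $G_u$ that both contain $w$, so their overlap at $w$ accounts for the $-1$. Since $u$ is in the cover in both cases, $uv$ is covered and no separate term is needed for it. Taking the maximum over $x\in V(G_u)$ and invoking Proposition~\ref{prop:evc-subset} to rewrite $\max_x\mvc_{xu}(G_u)=\evc_u(G_u)$ and $\max_x\mvc_{xuw}(G_u)=\evc_{uw}(G_u)$ then yields the two terms of the claimed minimum.

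Part~2 is handled the same way with $x\in V(G_v)$: I would show $\mvc_{xu}(G)\le \mvc_u(G_u)+\mvc_x(G_v)$ and $\mvc_{xu}(G)\le \mvc_{uw}(G_u)+\mvc_{xw}(G_v)-1$ by the same gluing, then take the maximum over $x\in V(G_v)$ and apply Proposition~\ref{prop:evc-subset} with $\max_x\mvc_x(G_v)=\evc(G_v)$ and $\max_x\mvc_{xw}(G_v)=\evc_w(G_v)$.

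There is no genuine obstacle here; the argument is routine gluing of covers across the cut vertex $w$. The only points requiring care are bookkeeping: using the overlap at $w$ to justify the $-1$ precisely when both glued covers are forced to contain $w$ (and otherwise using the weaker union bound without the $-1$), and confirming that the forced vertex $u$ covers the bridging edge $uv$. Given the explicit symmetry with Proposition~\ref{prop:mvc_vx}, one could alternatively just invoke that symmetry and omit the computation entirely.
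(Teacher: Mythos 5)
Your proposal is correct and is essentially the paper's own argument: the paper states Proposition~\ref{prop:mvc_ux} without proof, declaring it symmetric to Proposition~\ref{prop:mvc_vx}, and your write-up is exactly that symmetric gluing argument (union the indicated covers of $G_u$ and $G_v$, subtract one for the shared vertex $w$ when both covers contain it, note $u$ covers the edge $uv$, then apply Proposition~\ref{prop:evc-subset} to convert the maxima into evc parameters).
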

\begin{lemma}\label{lem:deg>2-122}
 If $deg_{G}(u)>2$ and $deg_{G}(v) > 2$, then:
 \begin{enumerate}
  \item $\evc_u(G) \le \min\{\evc(G)+1, \mvc_u(G)+1,\evc_{uw}(G_u)+\evc_{w}(G_v)-1\}$
  \item $\evc_u(G) \le \max\{U,V\}$, where $U=\min\{\evc_u(G_u)+\mvc(G_v),\evc_{uw}(G_u)+\mvc_{w}(G_v)-1\} $ and \\$V= \min \{ \mvc_u(G_u)+\evc(G_v), \mvc_{uw}(G_u)+ \evc_{w}(G_v)-1\}$
 \end{enumerate}
\end{lemma}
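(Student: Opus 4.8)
The plan is to mirror the proof of Lemma~\ref{lem:deg>2-12}, exchanging the roles of $u$ and $v$ (and correspondingly of $G_u$ and $G_v$), and to route the computation through Proposition~\ref{prop:mvc_ux} in place of Proposition~\ref{prop:mvc_vx}. The unifying device is Proposition~\ref{prop:evc-subset}, which identifies $\evc_u(G)$ with $\max_{x \in V(G)} \mvc_{xu}(G)$, so that every upper bound on $\evc_u(G)$ reduces to bounding $\mvc_{xu}(G)$ uniformly over all $x$. Since the segments satisfy $V(G_u) \cup V(G_v) = V(G)$ (they meet only at $w$), it suffices to bound $\mvc_{xu}(G)$ separately for $x \in V(G_u)$ and for $x \in V(G_v)$ and then take the maximum.

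For Part~1 I would establish the three terms of the minimum independently. The bound $\evc_u(G) \le \evc(G)+1$ is immediate from Proposition~\ref{prop:occ_edge_vertices}, and $\evc_u(G) \le \mvc_u(G)+1$ is immediate from Proposition~\ref{prop:evc-subset} applied with $S = \{u\}$. For the remaining term I would invoke both parts of Proposition~\ref{prop:mvc_ux}: its first part gives $\max_{x \in V(G_u)}\mvc_{xu}(G) \le \evc_{uw}(G_u)+\mvc_w(G_v)-1$, and its second part gives $\max_{x \in V(G_v)}\mvc_{xu}(G) \le \mvc_{uw}(G_u)+\evc_w(G_v)-1$. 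Relaxing $\mvc_w(G_v) \le \evc_w(G_v)$ and $\mvc_{uw}(G_u) \le \evc_{uw}(G_u)$ (both instances of $\mvc_S \le \evc_S$, which follows from Proposition~\ref{prop:evc-subset}), both suprema are at most $\evc_{uw}(G_u)+\evc_w(G_v)-1$; taking the overall maximum and applying Proposition~\ref{prop:evc-subset} yields $\evc_u(G) \le \evc_{uw}(G_u)+\evc_w(G_v)-1$.

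For Part~2 the argument is even more direct: the two parts of Proposition~\ref{prop:mvc_ux} are precisely $\max_{x \in V(G_u)}\mvc_{xu}(G) \le U$ and $\max_{x \in V(G_v)}\mvc_{xu}(G) \le V$ with $U$ and $V$ as stated. Because $V(G_u) \cup V(G_v) = V(G)$, this gives $\max_{x \in V(G)}\mvc_{xu}(G) \le \max\{U,V\}$, and Proposition~\ref{prop:evc-subset} converts it into $\evc_u(G) \le \max\{U,V\}$.

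Since the substantive work is already encapsulated in Proposition~\ref{prop:mvc_ux}, I do not expect a genuine obstacle. The only point demanding care is bookkeeping: keeping the swap $u \leftrightarrow v$ and $G_u \leftrightarrow G_v$ consistent so that the asymmetric terms (for instance $\evc_{uw}(G_u)$ versus $\evc_w(G_v)$, and $\mvc_u(G_u)$ versus $\mvc_v(G_v)$) land in their correct slots, and confirming that $V(G_u)$ and $V(G_v)$ genuinely exhaust $V(G)$ so that no vertex is dropped from the maximization.
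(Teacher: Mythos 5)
Your proposal is correct and is exactly the argument the paper intends: the paper omits the proof of this lemma, stating only that it is symmetric to Lemma~\ref{lem:deg>2-12}, and your write-up is precisely that symmetric adaptation, routing Part~2 through the two parts of Proposition~\ref{prop:mvc_ux} and Part~1 through the same bounds after relaxing $\mvc_w(G_v)\le\evc_w(G_v)$ and $\mvc_{uw}(G_u)\le\evc_{uw}(G_u)$, combined with Propositions~\ref{prop:occ_edge_vertices} and~\ref{prop:evc-subset}. All the asymmetric terms land in the correct slots, so nothing further is needed.
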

\begin{proposition}\label{prop:mvc_uvx}
 \begin{enumerate}
    \item $\max_{x \in V(G_u)}\mvc_{xuv}(G) \le \min\{\evc_{uw}(G_u)+\mvc_{vw}(G_v)-1, \evc_u(G_u)+\mvc_v(G_v)\}$
  \item $\max_{x \in V(G_v)}\mvc_{xuv}(G) \le \min\{\mvc_{uw}(G_u)+\evc_{vw}(G_v)-1, \mvc_u(G_u)+\evc_v(G_v)\}$
 \end{enumerate}
\end{proposition}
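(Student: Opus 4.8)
The plan is to mimic the proofs of Propositions~\ref{prop:mvc_vx} and~\ref{prop:mvc_ux}: for a fixed vertex $x$ in one segment, I build two explicit vertex covers of $G$ containing $x$, $u$ and $v$ by gluing a cover of $G_u$ to a cover of $G_v$ along their only shared vertex $w$, and then convert the resulting $\mvc$-bounds into $\evc$-terms via Proposition~\ref{prop:evc-subset}. Since Part~2 is obtained from Part~1 by interchanging the roles of $u$ and $v$ (and of $G_u$ and $G_v$), I only treat Part~1.

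The structural fact I rely on is the segment decomposition of Definition~\ref{def:uwsegment}: the triangle $uvw$ separates $G$, so $V(G_u)\cap V(G_v)=\{w\}$, $V(G_u)\cup V(G_v)=V(G)$, and $E(G)=E(G_u)\cup E(G_v)\cup\{uv\}$. Consequently, whenever $C_u$ is a vertex cover of $G_u$ and $C_v$ is a vertex cover of $G_v$, the set $C_u\cup C_v$ covers every edge of $G$ except possibly $uv$, and $|C_u\cup C_v|=|C_u|+|C_v|-|C_u\cap C_v|$ with $C_u\cap C_v\subseteq\{w\}$. Thus as soon as $u\in C_u$ (or $v\in C_v$), the edge $uv$ is also covered and $C_u\cup C_v$ is a vertex cover of $G$.

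First I would fix $x\in V(G_u)$ and produce two candidate covers containing $x$, $u$, $v$. For the first, take a minimum vertex cover of $G_u$ containing $x$, $u$, $w$ together with a minimum vertex cover of $G_v$ containing $v$, $w$; these overlap exactly in $w$, giving $\mvc_{xuv}(G)\le \mvc_{xuw}(G_u)+\mvc_{vw}(G_v)-1$. For the second, take a minimum vertex cover of $G_u$ containing $x$, $u$ together with a minimum vertex cover of $G_v$ containing $v$ (without forcing $w$); here I only invoke $|C_u\cup C_v|\le |C_u|+|C_v|$, yielding $\mvc_{xuv}(G)\le \mvc_{xu}(G_u)+\mvc_v(G_v)$. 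Applying Proposition~\ref{prop:evc-subset} to $G_u$ gives $\mvc_{xuw}(G_u)\le \evc_{uw}(G_u)$ and $\mvc_{xu}(G_u)\le \evc_u(G_u)$ uniformly in $x$, and taking the maximum over $x\in V(G_u)$ then delivers $\max_{x\in V(G_u)}\mvc_{xuv}(G)\le \min\{\evc_{uw}(G_u)+\mvc_{vw}(G_v)-1,\;\evc_u(G_u)+\mvc_v(G_v)\}$.

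I expect no serious obstacle; the only point needing care is the bookkeeping at the shared vertex $w$. Specifically, I must subtract $1$ exactly in the first decomposition, where $w$ is forced into both parts, and must \emph{not} subtract it in the second, where $w$ may or may not be shared, so that only the inequality $|C_u\cup C_v|\le|C_u|+|C_v|$ is safe. I would also double-check that in each construction the edge $uv$ is genuinely covered — which it is, since $u$ lies in the $G_u$-part — so that each glued set is indeed a vertex cover of $G$ containing $x$, $u$ and $v$.
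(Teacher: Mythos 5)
Your proposal is correct and follows essentially the same route as the paper: both bound $\mvc_{xuv}(G)$ by gluing covers of $G_u$ and $G_v$ along $w$ (subtracting $1$ only when $w$ is forced into both parts) and then replace $\mvc_{xuw}(G_u)$ and $\mvc_{xu}(G_u)$ by $\evc_{uw}(G_u)$ and $\evc_u(G_u)$ via Proposition~\ref{prop:evc-subset}. The paper merely leaves the gluing step implicit ("This implies"), whereas you spell out the bookkeeping at $w$ and the coverage of the edge $uv$.
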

\begin{proof}
\begin{enumerate}
 \item  Consider any vertex $x \in V(G_u)$. 
By Proposition~\ref{prop:evc-subset}, we have $\mvc_{xuw}(G_u) \le \evc_{uw}(G_u)$. This implies $\mvc_{xuv}(G) \le \evc_{uw}(G_u)+\mvc_{vw}(G_v)-1$.
Similarly, by Proposition~\ref{prop:evc-subset}, we have $\mvc_{xu}(G_u) \le \evc_{u}(G_u)$. This implies $\mvc_{xuv}(G) \le \evc_{u}(G_u)+\mvc_v(G_v)$.
Thus, for any $x \in V(G_u)$, we have $\mvc_{xuv}(G) \le \min\{\evc_{uw}(G_u)+\mvc_{vw}(G_v)-1, \evc_u(G_u)+\mvc_v(G_v)\}$.
\item This case is symmetric to the previous one.
\end{enumerate}

\end{proof}

Lemma~\ref{lem:deg>2-2} gives an upper bound for $\evc_{uv}(G)$ when $deg_G(u)>2$ and $deg_G(v)>2$.
\begin{lemma}\label{lem:deg>2-2}
 If $deg_{G}(u)>2$ and $deg_{G}(v) > 2$, then:
 \begin{enumerate}
  \item  $\evc_{uv}(G)\le \min \{\evc(G)+1, \mvc_{uv}(G)+1, \max\{P_1, P_2\}\}$, 
  where \\$P_1=\min\{\evc_{uw}(G_u)+\mvc_{vw}(G_v)-1, \evc_u(G_u)+\mvc_v(G_v)\}$ and \\
  $P_2=\min\{\mvc_{uw}(G_u)+\evc_{vw}(G_v)-1, \mvc_u(G_u)+\evc_v(G_v)\}$ 
\item     $\evc_{uv}(G)\le \min\{Q, \evc_{uw}(G_u)+\evc_{vw}(G_v)-1,
\evc_u(G_u)+\evc(G_v), \evc_v(G_v)+\evc(G_u)\}$, where $Q=\max\{\evc_{u}(G_u)+\evc_{vw}(G_v)-1, \evc_{u}(G_u)+\mvc_v(G_v)\}$
 \end{enumerate}
\end{lemma}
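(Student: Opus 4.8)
The plan is to reduce everything to Proposition~\ref{prop:evc-subset}, which gives $\evc_{uv}(G)=\max_{x\in V(G)}\mvc_{xuv}(G)$, together with the structural fact (from Definition~\ref{def:uwsegment}) that the two $uv$-segments satisfy $V(G)=V(G_u)\cup V(G_v)$ with $V(G_u)\cap V(G_v)=\{w\}$ and $E(G)=E(G_u)\cup E(G_v)\cup\{uv\}$. Hence every vertex cover $C$ of $G$ splits as $C=C_u\cup C_v$, where $C_u=C\cap V(G_u)$ and $C_v=C\cap V(G_v)$ are vertex covers of $G_u$ and $G_v$ whose only possible common vertex is $w$, so $|C|$ equals $|C_u|+|C_v|$, reduced by one precisely when $w$ lies in both; conversely, any such pair of covers whose union contains $u$ or $v$ recombines into a vertex cover of $G$. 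I would therefore split $\max_{x\in V(G)}$ into $\max_{x\in V(G_u)}$ and $\max_{x\in V(G_v)}$ and bound each $\mvc_{xuv}(G)$ by exhibiting a good splitting, using throughout the consequence of Proposition~\ref{prop:evc-subset} that $\mvc_{S\cup\{y\}}(G')\le\evc_S(G')$ for any vertex $y$ of $G'$.

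For Part~1, two of the three terms are immediate: $\evc_{uv}(G)\le\evc(G)+1$ is Proposition~\ref{prop:occ_edge_vertices}, and $\evc_{uv}(G)\le\mvc_{uv}(G)+1$ is the ``at most $\mvc_S+1$'' half of Proposition~\ref{prop:evc-subset}. The third term is precisely the content of Proposition~\ref{prop:mvc_uvx}: its two parts give $\max_{x\in V(G_u)}\mvc_{xuv}(G)\le P_1$ and $\max_{x\in V(G_v)}\mvc_{xuv}(G)\le P_2$, whence $\max_{x\in V(G)}\mvc_{xuv}(G)\le\max\{P_1,P_2\}$, and Proposition~\ref{prop:evc-subset} converts this into $\evc_{uv}(G)\le\max\{P_1,P_2\}$.

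For Part~2, I would establish each of the four upper bounds by showing $\mvc_{xuv}(G)$ is at most that quantity for every $x$, splitting on whether $x\in V(G_u)$ or $x\in V(G_v)$. The bound $Q$ is routine: for $x\in V(G_u)$, take a $\{x,u\}$-cover of $G_u$ and a $v$-cover of $G_v$ to get $\mvc_{xu}(G_u)+\mvc_v(G_v)\le\evc_u(G_u)+\mvc_v(G_v)$; for $x\in V(G_v)$, route through $w$ to get $\mvc_{uw}(G_u)+\mvc_{xvw}(G_v)-1\le\evc_u(G_u)+\evc_{vw}(G_v)-1$; these are exactly the two arguments of the maximum defining $Q$. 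The bound $\evc_{uw}(G_u)+\evc_{vw}(G_v)-1$ is obtained by always putting $w$ into both covers: use $\mvc_{xuw}(G_u)\le\evc_{uw}(G_u)$ with $\mvc_{vw}(G_v)\le\evc_{vw}(G_v)$ when $x\in V(G_u)$, and the mirror image when $x\in V(G_v)$, paying the single $-1$ for the shared $w$.

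The main obstacle is the term $\evc_u(G_u)+\evc(G_v)$ (and, symmetrically, $\evc(G_u)+\evc_v(G_v)$). For $x\in V(G_u)$ it is easy, since $\mvc_{xu}(G_u)+\mvc_v(G_v)\le\evc_u(G_u)+\evc(G_v)$; but for $x\in V(G_v)$ the forced presence of both $x$ and $v$ in the $G_v$-part would cost $\mvc_{xv}(G_v)$, which Proposition~\ref{prop:evc-subset} bounds only by $\evc_v(G_v)$, too large. The resolution is to start instead from a minimum vertex cover $D$ of $G_v$ that merely contains $x$, so $|D|=\mvc_x(G_v)\le\evc(G_v)$, and split into two cases. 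If $v\in D$, pairing $D$ with a minimum $u$-cover of $G_u$ already gives size at most $\evc_u(G_u)+\evc(G_v)$. If $v\notin D$, then $D$ must contain $w$ (to cover the edge $vw$), so I set the $G_v$-part to $D\cup\{v\}$ and the $G_u$-part to a $\{u,w\}$-cover of size $\mvc_{uw}(G_u)\le\evc_u(G_u)$; the $w$ shared by the two parts cancels the extra vertex $v$, yielding total size $\mvc_{uw}(G_u)+\mvc_x(G_v)\le\evc_u(G_u)+\evc(G_v)$. This $w$-cancellation is the one genuinely non-routine step; the symmetric argument handles $\evc(G_u)+\evc_v(G_v)$, and taking the minimum of $Q$ and the three constructed bounds yields Part~2.
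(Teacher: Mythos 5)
Your proof is correct and follows essentially the same route as the paper: both parts reduce to $\evc_{uv}(G)=\max_{x}\mvc_{xuv}(G)$ via Proposition~\ref{prop:evc-subset} and then bound each $\mvc_{xuv}(G)$ by splitting covers across the two segments at $w$, exactly as in Proposition~\ref{prop:mvc_uvx}. The only local difference is the term $\evc_u(G_u)+\evc(G_v)$ for $x\in V(G_v)$, where the paper bounds $\evc_{vw}(G_v)\le\evc(G_v)+1$ via Proposition~\ref{prop:occ_edge_vertices} while you instead case-split on whether $v$ lies in a minimum $x$-cover of $G_v$; both give the same bound.
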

\begin{proof}
 \begin{enumerate}
  \item By Proposition~\ref{prop:occ_edge_vertices}, we know that $\evc_{uv}(G) \le \evc(G)+1$. By Proposition~\ref{prop:evc-subset}, we know that $\evc_{uv}(G) \le \mvc(G)+1$.  By Proposition~\ref{prop:evc-subset} and Proposition~\ref{prop:mvc_uvx}, we get $\evc_{uv}(G) =\max_{x \in V(G)}\\ \mvc_{xuv}(G) \le max(P_1, P_2)$.
\item It can be easily seen that $\evc_{uv}(G) \le \evc_{uw}(G_u)+\evc_{vw}(G_v)-1$. Now, we show that $\evc_{uv}(G) \le Q$. By Proposition~\ref{prop:mvc_uvx}, we get $\max_{x \in V(G)}\mvc_{xuv}(G) \le \max\{\mvc_{uw}(G_u)+\evc_{vw}(G_v)-1, \evc_{u}(G_u)+\mvc_v(G_v)\}$.  Hence, by Proposition~\ref{prop:evc-subset}, we get $\evc_{uv}(G) \le Q$. \\Next, we show that $\evc_{uv}(G) \le \evc_u(G_u)+\evc(G_v)$.
By Proposition~\ref{prop:evc-subset} and Proposition~\ref{prop:mvc_uvx}, we get $\max_{x \in V(G_u)} \mvc_{xuv}(G) \le \evc_u(G_u)+\mvc_v(G_v) \le \evc_u(G_u)+\evc(G_v)$.
Since $\mvc_{uw}(G_u)\le \evc_u(G_u)$ by Proposition~\ref{prop:evc-subset} and $\evc_{vw}(G_v)\le \evc(G_v)+1$ by Proposition~\ref{prop:occ_edge_vertices}, using Proposition~\ref{prop:mvc_uvx},  we get  $\max_{x \in V(G_v)}\mvc_{xuv}(G) \le \mvc_{uw}(G_u)+\evc_{vw}(G_v)-1 \le \evc_u(G_u)+\evc(G_v)$. Hence, by Proposition~\ref{prop:evc-subset}, we get $\evc_{uv}(G) \le \evc_u(G_u)+\evc(G_v)$. Symmetrically, we can prove that $\evc_{uv}(G) \le \evc(G_u)+\evc_{v}(G_v)$.
 \end{enumerate}
\end{proof}
The following lemma gives some lower bounds for $\mathscr{E}(G, uv)$ when $deg_G(u)>2$ and $deg_G(v)>2$.

\begin{lemma}\label{lem:deg>2-3}
If $deg_{G}(u)>2$ and $deg_{G}(v) > 2$, then:
 \begin{enumerate}
 \item $\evc(G) \ge \max\{\evc_w(G_u)+\mvc(G_v)-1, \mvc(G_u)+\evc_w(G_v)-1\}$
   \item $\evc_u(G)\ge \max\{\evc(G), \mvc_{uw}(G_u)+\evc(G_v)-1, \evc_{uw}(G_u)+\mvc(G_v)-1,\\ \evc_u(G_u)+\mvc_w(G_v)-1\}$ 
  \item $\evc_{uv}(G) \ge \max\{\evc(G), \evc_u(G), \evc_v(G), \mvc_{uw}(G_u)+\evc_v(G_v)-1,\\ \evc_u(G_u)+\mvc_{vw}(G_v)-1, \evc_{uw}(G_u)+\mvc_v(G_v)-1, \mvc_u(G_u)+\evc_{vw}(G_v)-1\}$  
 \end{enumerate}
\end{lemma}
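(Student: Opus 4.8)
The plan is to derive every lower bound from the identity $\evc_S(G)=\max_{x\in V(G)}\mvc_{S\cup\{x\}}(G)$ of Proposition~\ref{prop:evc-subset}, specialized to $S=\emptyset$ for Part~1, $S=\{u\}$ for Part~2, and $S=\{u,v\}$ for Part~3. The only structural facts I need come from Definition~\ref{def:uwsegment}: $V(G)=V(G_u)\cup V(G_v)$ with $V(G_u)\cap V(G_v)=\{w\}$, and $E(G)=E(G_u)\cup E(G_v)\cup\{uv\}$. Consequently, for any vertex cover $C$ of $G$, the restrictions $C\cap V(G_u)$ and $C\cap V(G_v)$ are vertex covers of $G_u$ and $G_v$ respectively, and since these two sets overlap exactly in $w$, the sum $|C\cap V(G_u)|+|C\cap V(G_v)|$ over-counts $|C|$ by one precisely when $w\in C$. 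The three \emph{monotone} terms appearing on the right-hand sides are immediate from the chain $\evc(G)\le\evc_u(G)$ (Proposition~\ref{prop:evc-subset}) together with $\evc_u(G)\le\evc_{uv}(G)$ and $\evc_v(G)\le\evc_{uv}(G)$ (Proposition~\ref{prop:occ_edge_vertices}); so the real content is in the product terms of the form $\alpha(G_u)+\beta(G_v)-1$, exactly one of $\alpha,\beta$ being an $\evc$-parameter.

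For each such term, I would first locate the $\evc$-factor in one segment, say $G_v$, and apply Proposition~\ref{prop:evc-subset} \emph{inside} $G_v$ to extract a vertex $x\in V(G_v)$ witnessing it (for instance $\mvc_{vx}(G_v)=\evc_v(G_v)$ for the term $\mvc_{uw}(G_u)+\evc_v(G_v)-1$). Since $\evc_S(G)\ge\mvc_{S\cup\{x\}}(G)$ for this particular $x$, it then suffices to show $\mvc_{S\cup\{x\}}(G)\ge\alpha(G_u)+\beta(G_v)-1$. Fixing an arbitrary minimum vertex cover $C$ of $G$ with $S\cup\{x\}\subseteq C$, I split on whether $w\in C$. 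When $w\in C$, both restrictions already contain $w$ together with their prescribed vertices, so $|C\cap V(G_u)|\ge\alpha(G_u)$ and $|C\cap V(G_v)|\ge\beta(G_v)$, and the ``$-1$'' is exactly the correction for double-counting $w$; this case is routine.

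The case $w\notin C$ is where the real work lies, and I expect it to be the main obstacle. Here covering the edges $uw$ and $vw$ forces $u,v\in C$, and whichever segment carries $w$ in a subscript (an $\mvc_{\cdot w}$ or $\evc_w$ factor) loses $w$ from its restriction. The key point is that this costs at most one guard: either one adds $w$ back to $C\cap V(G_u)$ to recover a cover realizing the $w$-constrained parameter, yielding $|C\cap V(G_u)|\ge\alpha(G_u)-1$, or one invokes the elementary inequality $\mvc_{Tw}(H)\le\mvc_T(H)+1$ to pass from the $w$-free restriction to the $w$-constrained value. Since now $|C|=|C\cap V(G_u)|+|C\cap V(G_v)|$ with no double count, the single lost unit from the subscript is exactly offset, and the bound $\alpha(G_u)+\beta(G_v)-1$ survives. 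Running this dichotomy for each product term, and appealing to the $u\leftrightarrow v$, $G_u\leftrightarrow G_v$ symmetry to halve the casework, yields all the stated inequalities in Parts~1--3.
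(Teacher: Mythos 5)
Your proposal is correct and follows essentially the same route as the paper's proof: reduce each product term via Proposition~\ref{prop:evc-subset} to a lower bound on $\mvc_{S\cup\{x\}}(G)$ for a witness $x$ of the $\evc$-factor, then split a minimum vertex cover on whether it contains $w$, absorbing the loss of $w$ in the $w$-subscripted factor with $\mvc_{Tw}(H)\le\mvc_T(H)+1$. The paper's only cosmetic difference is that for some terms it bounds $\mvc_{S\cup\{x\}}(G)$ uniformly over all $x$ before maximizing rather than fixing a single witness, which is equivalent.
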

\begin{proof}
 \begin{enumerate}
 \item For any vertex $x \in V(G_u)$, consider a minimum vertex cover $S_x$ of $G$ of size $\mvc_x(G)$ containing $x$. It is easy to see that $|S_x \cap V(G_v)| \ge \mvc(G_v)$. If $w \in S_x$, then $|S_x \cap V(G_u)|\ge \mvc_{wx}(G_u)$ and hence, $\mvc_x(G) \ge \mvc_{wx}(G_u)+\mvc(G_v)-1$. If $w \notin S_x$, then $|S_x \cap V(G_u)| \ge \mvc_{x}(G_u)$. Therefore, $\mvc_x(G) \ge \mvc_{x}(G_u)+\mvc(G_v)$. Since $\mvc_{wx}(G_u) \le \mvc_x(G_u)+1$, in this case also we get $\mvc_x(G) \ge \mvc_{wx}(G_u)+\mvc(G_v)-1$. By Proposition~\ref{prop:evc-subset}, we get 
 $\evc(G)\ge \max_{x \in V(G_u)}\mvc_x(G) \\ \ge \max_{x \in V(G_u)}\mvc_{wx}(G_u)+\mvc(G_v)-1 \ge \evc_w(G_u)+\mvc(G_v)-1$.
 Symmetrically, we get $\evc(G) \ge \mvc(G_u)+\evc_w(G_v)-1$. From this, the result follows.
   \item By Proposition~\ref{prop:evc-subset}, it can be easily seen that $\evc_u(G)\ge \evc(G)$.
   By Proposition~\ref{prop:evc-subset}, there exists a vertex $x \in V(G_v)$ such that $\mvc_x(G_v)=\evc(G_v)$. 
   Let $S_{xu}$ be a vertex cover of $G$ of size $\mvc_{xu}(G)$ containing $x$ and $u$. If $w \in S_{xu}$, then it can be seen that $|S_{xu}| \ge \mvc_{uw}(G_u)+\mvc_x(G_v)-1=\mvc_{uw}(G_u)+\evc(G_v)-1$. Otherwise, we get $|S_{xu}| \ge \mvc_{u}(G_u)+\evc(G_v)\ge \mvc_{uw}(G_u)-1+\evc(G_v)$. Hence, by Proposition~\ref{prop:evc-subset}, we get $\evc_u(G) \ge \mvc_{uw}(G_u)+\evc(G_v)-1$. 
     Next, we to show that $\evc_{u}(G) \ge \evc_{uw}(G_u)+\mvc(G_v)-1$. For every vertex $x \in V(G_u)$, $\mvc_{xu}(G) \ge \mvc_{xuw}(G_u)+\mvc(G_v)-1$. 
     By Proposition~\ref{prop:evc-subset}, it can be seen that 
     $\evc_{u}(G) \ge \max_{x \in V(G_u)} \mvc_{xu}(G) \ge \max_{x \in V(G_u)}\mvc_{xuw}(G_u)+\mvc(G_v)-1=
     \evc_{uw}(G_u)+\mvc(G_v)-1$.      
   Now, it remains to show that $\evc_u(G) \ge \evc_u(G_u)+\mvc_w(G_v)-1$.
   By Proposition~\ref{prop:evc-subset}, there exists a vertex $x \in V(G_u)$ such that $\mvc_{xu}(G_u)=\evc(G_u)$. 
   Let $S_{xu}$ be a vertex cover of $G$ of size $\mvc_{xu}(G)$ containing $x$ and $u$. If $w \in S_{xu}$, then it can be seen that $|S_{xu}| \ge \mvc_{xuw}(G_u)+\mvc_w(G_v)-1=\evc_{u}(G_u)+\mvc_w(G_v)-1$. Otherwise, we get $|S_{xu}| \ge \evc_{u}(G_u)+\mvc(G_v)\ge \evc_{u}(G_u)+\mvc_w(G_v)-1$. Hence, by Proposition~\ref{prop:evc-subset}, we get $\evc_u(G) \ge \evc_{u}(G_u)+\mvc_w(G_v)-1$.
\item  By Proposition~\ref{prop:evc-subset}, it can be easily seen that $\evc_{uv}(G) \ge \max\{\evc(G), \evc_u(G), \evc_v(G)\}$. Similar to the proof of $\evc_u(G) \ge \mvc_{uw}(G_u)+\evc(G_v)$ in the previous case, we get $\evc_{uv}(G) \ge \mvc_{uw}(G_u)+\evc_v(G_v)-1$ and $\evc_{uv}(G) \ge \evc_u(G_u)+\mvc_{vw}(G_v)-1$.
Similar to the proof of $\evc_u(G) \ge \evc_{uw}(G_u)+\mvc(G_v)-1$ in the previous case, we get  $\evc_{uv}(G) \ge \evc_{uw}(G_u
)+\mvc_v(G_v)-1$ and $ \evc_{uv}(G) \ge \mvc_u(G_u)+\evc_{vw}(G_v)-1\}$.\qed
 \end{enumerate}
\end{proof}
  \section{Computation of the evc parameters}
  
Let $G$ be a maximal outerplanar graph with at least four vertices, $uv$ be an edge on its outer face and $w=\Delta(uv)$. In this section, we describe the method of computing $\mathscr{E}(G, uv)$ using the bounds obtained in Section~\ref{bounds}.
 \subsection{Computing $\mathscr{E}(G, uv)$ when $deg_G(u)=2$}
In this subsection, we consider the case when $u$ is a degree-$2$ vertex in $G$.
 \begin{lemma} \label{lem:necessary}
 Let $\mvc(G_v) \ne \evc(G_v)$. Then, 
 $\evc(G)=\evc(G_v)$ if and only if $\evc_{vw}(G_v)=\evc(G_v)$.
\end{lemma}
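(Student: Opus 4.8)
The plan is to trap $\evc(G)$ inside a two-value window and then read the equivalence directly off Lemma~\ref{lem:deg2}. First I would record the standing facts. Since $u$ is a degree-$2$ vertex and $G$ has at least four vertices (and maximal outerplanar graphs on at least three vertices are $2$-connected), no edge can have both endpoints of degree $2$; in particular $\deg_G(v)>2$, so Observation~\ref{obs:computemvcfix_deg2} and Lemma~\ref{lem:deg2} are applicable. Moreover, the hypothesis $\mvc(G_v)\ne\evc(G_v)$ together with Proposition~\ref{prop:evc-subset} applied to $G_v$ (with $S=\emptyset$) gives the exact relation $\evc(G_v)=\mvc(G_v)+1$, which is what makes the statement nontrivial.

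Next I would extract the common lower bound that drives both directions. Part~3 of Lemma~\ref{lem:deg2} gives $\evc(G)\ge\max\{\mvc(G_v)+1,\evc_{vw}(G_v)\}$; using $\mvc(G_v)+1=\evc(G_v)$, this yields $\evc(G)\ge\evc(G_v)$ unconditionally, together with the separate bound $\evc(G)\ge\evc_{vw}(G_v)$.

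For the forward direction, assume $\evc(G)=\evc(G_v)$. The bound $\evc(G)\ge\evc_{vw}(G_v)$ then forces $\evc_{vw}(G_v)\le\evc(G_v)$, while the reverse inequality $\evc_{vw}(G_v)\ge\evc(G_v)$ is just monotonicity of the subset-constrained eternal cover number, which is immediate from the $\max_x\mvc_{S\cup\{x\}}$ formula of Proposition~\ref{prop:evc-subset} (imposing that $v,w$ be occupied only enlarges each $\mvc$ term). Hence $\evc_{vw}(G_v)=\evc(G_v)$; notably, this direction does not even use the hypothesis $\mvc(G_v)\ne\evc(G_v)$.

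The backward direction is where the real work lies, and I expect it to be the main obstacle. Assuming $\evc_{vw}(G_v)=\evc(G_v)$, I need the matching upper bound $\evc(G)\le\evc(G_v)$, which I would obtain through $\evc(G)=\max_{x\in V(G)}\mvc_x(G)$ (Proposition~\ref{prop:evc-subset}) by bounding $\mvc_x(G)$ for each $x$. For $x=u$, Observation~\ref{obs:computemvcfix_deg2}(2) gives $\mvc_u(G)=\mvc(G_v)+1=\evc(G_v)$ (a single guard at $u$ covers both edges incident to $u$). For $x\in V(G_v)$, the key observation is that any vertex cover of $G$ must either contain $u$ or contain both $v$ and $w$ in order to cover the edges $uv$ and $uw$; casing on this yields the identity $\mvc_x(G)=\min\{\mvc_x(G_v)+1,\ \mvc_{xvw}(G_v)\}$. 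Bounding the second term via $\mvc_{xvw}(G_v)\le\evc_{vw}(G_v)=\evc(G_v)$ (again Proposition~\ref{prop:evc-subset}) then gives $\mvc_x(G)\le\evc(G_v)$ for every such $x$. Combining the two cases gives $\evc(G)\le\evc(G_v)$, and with the lower bound established earlier, equality follows. The delicate point is that the hypothesis $\evc_{vw}(G_v)=\evc(G_v)$ is exactly what controls $\mvc_{xvw}(G_v)$ uniformly over $x$; without it the only available bound is $\mvc_x(G)\le\mvc_x(G_v)+1\le\evc(G_v)+1$, which is too weak to close the argument.
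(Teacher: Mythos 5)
Your proposal is correct and follows essentially the same route as the paper: both establish $\evc(G_v)=\mvc(G_v)+1$ from the hypothesis, get the lower bound $\evc(G)\ge\max\{\mvc(G_v)+1,\evc_{vw}(G_v)\}$ from Lemma~\ref{lem:deg2} Part~3, prove the forward direction by squeezing against monotonicity, and prove the reverse direction by bounding $\max_{x}\mvc_x(G)$ via $\mvc_{xvw}(G_v)\le\evc_{vw}(G_v)=\evc(G_v)$ for $x\in V(G_v)$ and $\mvc_u(G)\le\mvc(G_v)+1$ for $x=u$. Your explicit identity $\mvc_x(G)=\min\{\mvc_x(G_v)+1,\mvc_{xvw}(G_v)\}$ is a slightly sharper statement of the inequality the paper uses, but the argument is the same.
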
 
\begin{proof}
Let $\mvc(G_v) < \evc(G_v)\le \evc_{vw}(G_v)$. 
Then, by Proposition~\ref{prop:evc-subset}, $\evc(G_v)=\mvc(G_v)+1$. Lemma~\ref{lem:deg2} Part 3, $\evc(G) \ge \evc_{vw}(G_v)$. 

To prove the forward direction, assume $\evc(G)=\evc(G_v)$. 
This gives $\evc(G_v)=\evc(G) \ge \evc_{vw}(G_v)$. This implies $\evc(G_v)=\evc_{vw}(G_v)$. Now, to prove the reverse direction, assume $\evc_{vw}(G_v)=\evc(G_v)$. It is enough to show that $\evc(G) \le\evc_{vw}(G_v)$. 
By Proposition~\ref{prop:evc-subset}, for any vertex $x \in V(G_v)$, we get $\mvc_{xvw}(G_v) \le \evc(G_v)$ which implies $\mvc_x(G) \le \evc(G_v)$.  Moreover, since $\mvc_u(G) \le \mvc(G_v)+1= \evc(G_v)$, by Proposition~\ref{prop:evc-subset}, $\evc(G)\le \evc(G_v)=\evc_{vw}(G_v)$.\qed
\end{proof}
  
\begin{lemma}\label{lem:evc_deg2}
 $\evc(G)=\text{max}\{\mvc(G_v)+1, \evc_{vw}(G_v)\}$.
 \end{lemma}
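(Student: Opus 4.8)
The plan is to prove the two inequalities separately, and to note at the outset that the lower bound $\evc(G) \ge \max\{\mvc(G_v)+1, \evc_{vw}(G_v)\}$ is already supplied verbatim by Part~3 of Lemma~\ref{lem:deg2}. Thus the entire content of the statement reduces to establishing the matching upper bound $\evc(G) \le \max\{\mvc(G_v)+1, \evc_{vw}(G_v)\}$.

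For the upper bound I would invoke Proposition~\ref{prop:evc-subset}, which reduces the task to showing $\mvc_x(G) \le \max\{\mvc(G_v)+1, \evc_{vw}(G_v)\}$ for every vertex $x \in V(G)$. Since $deg_G(u) = 2$ and $w = \Delta(uv)$, the neighbors of $u$ are exactly $v$ and $w$, and the segment $G_u$ is the single edge $uw$; consequently $V(G) = V(G_v) \cup \{u\}$ and $E(G) = E(G_v) \cup \{uv, uw\}$. I would then split into the two cases $x = u$ and $x \in V(G_v)$.

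The case $x = u$ is immediate from Observation~\ref{obs:computemvcfix_deg2}, which gives $\mvc_u(G) = \mvc(G_v)+1$, already bounded by the right-hand side. For $x \in V(G_v)$, the key structural observation, which is exactly where $deg_G(u)=2$ is used, is that any vertex cover $C$ of $G_v$ containing both $v$ and $w$ is automatically a vertex cover of $G$: it covers every edge of $G_v$, while the two remaining edges $uv$ and $uw$ are covered by $v \in C$ and $w \in C$. Taking $C$ to be a minimum vertex cover of $G_v$ containing $x$, $v$ and $w$, we obtain $\mvc_x(G) \le |C| = \mvc_{xvw}(G_v)$, and Proposition~\ref{prop:evc-subset} applied to $G_v$ with $S = \{v,w\}$ yields $\mvc_{xvw}(G_v) \le \evc_{vw}(G_v)$. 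Hence $\mvc_x(G) \le \evc_{vw}(G_v)$, which is again bounded by the right-hand side. Combining the two cases establishes the upper bound, and with it the lemma.

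The main, and essentially only, obstacle is spotting this extension argument: fixing $v$ and $w$ in the subgraph cover simultaneously discharges both edges incident to the degree-two vertex $u$; everything else is bookkeeping with Proposition~\ref{prop:evc-subset} and the already-proved Lemma~\ref{lem:deg2}. As an alternative route one could instead perform a case split on whether $\mvc(G_v) = \evc(G_v)$, invoking Lemma~\ref{lem:necessary} together with Part~1 of Lemma~\ref{lem:deg2} (which gives $\evc(G) \le \evc_u(G) = \evc(G_v)+1$); this reproduces the same equality but routes through the characterization in Lemma~\ref{lem:necessary} rather than through a direct vertex-by-vertex bound.
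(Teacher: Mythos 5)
Your proof is correct, but it takes a genuinely different route from the paper's. The paper argues by a case split on whether $\mvc(G_v)=\evc(G_v)$: in the first case it combines Parts 1 and 3 of Lemma~\ref{lem:deg2} to pin down $\evc(G)=\evc(G_v)+1=\mvc(G_v)+1$, and in the second case it invokes the characterization in Lemma~\ref{lem:necessary} (together with Proposition~\ref{prop:occ_edge_vertices}) to decide whether $\evc(G)$ equals $\evc(G_v)$ or $\evc(G_v)+1$, checking in each subcase that the stated maximum comes out right. You instead prove the two inequalities directly: the lower bound is indeed Part~3 of Lemma~\ref{lem:deg2} verbatim, and for the upper bound you apply Proposition~\ref{prop:evc-subset} vertex by vertex, using $\mvc_u(G)=\mvc(G_v)+1$ for $x=u$ and the extension argument $\mvc_x(G)\le \mvc_{xvw}(G_v)\le \evc_{vw}(G_v)$ for $x\in V(G_v)$ (a vertex cover of $G_v$ containing both $v$ and $w$ covers the two remaining edges $uv$ and $uw$ for free). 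This extension idea is the same one the paper uses inside the reverse direction of Lemma~\ref{lem:necessary} and in its proof of Lemma~\ref{lem:deg2} Part~3, but by running it against $\evc_{vw}(G_v)$ directly rather than against $\evc(G_v)$, you bypass both the case analysis and Lemma~\ref{lem:necessary} entirely. Your version is shorter and more self-contained; the paper's version makes the dichotomy between $\evc(G)=\evc(G_v)$ and $\evc(G)=\evc(G_v)+1$ explicit, which is the content of Lemma~\ref{lem:necessary} and may be of independent use. Both arguments are sound.
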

 \begin{proof}
   We prove this lemma by splitting into following two cases.
  \begin{enumerate}
   \item Suppose $\mvc(G_v)=\evc(G_v)$. In this case, by Lemma~\ref{lem:deg2}, Part 1 and Part 3, we have $\evc(G)= \evc(G_v)+1= \mvc(G_v)+1$.
  Further by Part 3 of Lemma~\ref{lem:deg2}, 
   $\evc_{vw}(G_v) \le \evc(G)$.
   Thus, we have $\evc(G)=\text{max}\{\mvc(G_v)+1, \evc_{vw}(G_v)\}$.
   \item Suppose $\mvc(G_v) < \evc(G_v)$. If $\evc_{vw}(G_v)=\evc(G_v)$, then by Lemma~\ref{lem:necessary}, 
   we have $\evc(G)=\evc(G_v)=\evc_{vw}(G_v)=\mvc(G_v)+1$.
   Else if $\evc_{vw}(G_v) \ne \evc(G_v)$, then by Lemma~\ref{lem:necessary} and  Proposition~\ref{prop:occ_edge_vertices}, we get $\evc(G)=\evc(G_v)+1=\evc_{vw}(G_v)$. In both the cases, 
   we can observe that $\evc(G)=\max\{\mvc(G_v)+1, \evc_{vw}(G_v)\}$.\qed
  \end{enumerate}

 \end{proof}
From Lemma~\ref{lem:deg2} and Lemma~\ref{lem:evc_deg2}, the following theorem is immediate.
\begin{theorem}
 \label{thm:vc_evc_deg2-2}
 Let $G$ be a maximal outerplanar graph and $u$ be a degree-2 vertex in $G$ with neighbors $v$ and $w$.
 \begin{enumerate}
   \item Given $\mvc(G_v)$ and $\mathscr{E}(G_v, vw)$, it is possible to compute $\evc(G)$ in constant time.
 \item Given $\evc(G)$, $\mathscr{M}(G, uv)$ and $\mathscr{E}(G_v, vw)$, it is possible to compute the remaining evc parameters of $G$ with respect to $uv$ in constant time.
 \end{enumerate}
  \end{theorem}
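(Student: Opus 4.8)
The plan is to derive both parts directly from the two immediately preceding lemmas, tracking for each closed-form expression exactly which inputs it consumes, so that "constant time" amounts to checking that every output is a fixed arithmetic combination of the declared inputs.

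For Part~1, I would simply invoke Lemma~\ref{lem:evc_deg2}, which already states $\evc(G)=\max\{\mvc(G_v)+1,\evc_{vw}(G_v)\}$. Since $\mvc(G_v)$ is supplied and $\evc_{vw}(G_v)\in\mathscr{E}(G_v,vw)$, computing a single maximum of two integers yields $\evc(G)$ in constant time; thus Part~1 is a one-line consequence.

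For Part~2, I would handle each of the three remaining parameters $\evc_u(G)$, $\evc_{uv}(G)$, $\evc_v(G)$ in turn, citing the relevant clause of Lemma~\ref{lem:deg2} and verifying its right-hand side lies in the given data. Concretely: Part~1 of Lemma~\ref{lem:deg2} gives $\evc_u(G)=\evc(G_v)+1$ with $\evc(G_v)\in\mathscr{E}(G_v,vw)$; Part~2 gives $\evc_{uv}(G)=\evc_v(G_v)+1$ with $\evc_v(G_v)\in\mathscr{E}(G_v,vw)$; and Part~4 gives $\evc_v(G)=\max\{\mvc_{uv}(G),\evc(G)\}$, where $\mvc_{uv}(G)\in\mathscr{M}(G,uv)$ and $\evc(G)$ is itself furnished as an input to Part~2. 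Each is a constant-time operation.

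The only subtlety worth flagging is the computation of $\evc_v(G)$: unlike $\evc_u(G)$ and $\evc_{uv}(G)$, its formula does not reference the evc parameters of $G_v$ at all, but instead routes through the mvc parameter $\mvc_{uv}(G)$ of the full graph $G$, which is precisely why $\mathscr{M}(G,uv)$ and $\evc(G)$ are handed to Part~2 as inputs rather than being recomputed from $G_v$ alone. Since all the genuine structural content already resides in Lemmas~\ref{lem:deg2} and~\ref{lem:evc_deg2}, I anticipate no real obstacle; the argument is essentially bookkeeping, confirming that every output parameter is a constant-time function of the stated inputs.
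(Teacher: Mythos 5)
Your proposal is correct and matches the paper's argument exactly: the paper derives this theorem as an immediate consequence of Lemma~\ref{lem:deg2} (Parts 1, 2 and 4 for $\evc_u(G)$, $\evc_{uv}(G)$ and $\evc_v(G)$) and Lemma~\ref{lem:evc_deg2} (for $\evc(G)$), just as you do. The bookkeeping you perform to verify each right-hand side lies in the declared input sets is the entire content of the proof.
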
  
  \subsection{Computing $\mathscr{E}(G, uv)$ when $deg_G(u)>2$ and $deg_G(v)>2$}
Now, we will see how $\mathscr{E}(G, uv)$ can be computed when $deg_G(u)>2$ and $deg_G(v)>2$ using $\mvc(G)$, $\mathscr{M}(G_u, uw)$, $\mathscr{M}(G_v, vw)$, $\mathscr{E}(G_u, uw)$ and $\mathscr{E}(G_v, vw)$. For this subsection, we will assume that $deg_G(u)>2$ and $deg_G(v)>2$. We will also assume that $\mvc(G_u)=k_u$ and $\mvc(G_v)=k_v$.

Lemmas~\ref{lem:algoevc-1}-\ref{lem:algoevc-3} handle the computation of $\evc(G)$.
\begin{lemma}\label{lem:algoevc-1}
If $\evc(G_u)=k_u$ and $\evc(G_v)=k_v$, then\\
 \indent $\evc(G) = \max\{\evc_w(G_u)+\mvc(G_v)-1, \mvc(G_u)+\evc_w(G_v)-1\}$.
\end{lemma}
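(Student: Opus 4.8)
The plan is to prove the identity by sandwiching $\evc(G)$ between two matching bounds: a lower bound that holds with no extra hypothesis, and an upper bound that exploits the assumption $\evc(G_u)=k_u$, $\evc(G_v)=k_v$. The lower bound is already in hand. Part~1 of Lemma~\ref{lem:deg>2-3} asserts, unconditionally, that
\[
\evc(G) \ge \max\{\evc_w(G_u)+\mvc(G_v)-1,\ \mvc(G_u)+\evc_w(G_v)-1\},
\]
and since $\mvc(G_u)=k_u$ and $\mvc(G_v)=k_v$ this is exactly the right-hand side of the claimed formula. Thus the whole task reduces to proving the reverse inequality under the hypothesis.

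For the upper bound I would first pin down $\evc_w(G_u)$ and $\evc_w(G_v)$. Since $uw$ lies on the outer face of $G_u$, Proposition~\ref{prop:occ_edge_vertices} applied to $G_u$ gives $\evc_w(G_u)\le\evc(G_u)+1=k_u+1$, while $\evc_w(G_u)\ge\evc(G_u)=k_u$; hence $\evc_w(G_u)=k_u+a$ with $a\in\{0,1\}$, and symmetrically $\evc_w(G_v)=k_v+b$ with $b\in\{0,1\}$. In this notation the target upper bound becomes $\evc(G)\le k_u+k_v-1+\max\{a,b\}$, and the lower bound above reads $\evc(G)\ge k_u+k_v-1+\max\{a,b\}$, so it suffices to match the former.

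I would then split on $\min\{a,b\}$. When $\min\{a,b\}=0$, Part~1 of Lemma~\ref{lem:deg>2-11} already suffices, because $\evc_w(G_u)+\evc_w(G_v)-1=k_u+k_v-1+(a+b)=k_u+k_v-1+\max\{a,b\}$. The only genuinely delicate case is $a=b=1$, where Part~1 overshoots by one. Here I would invoke Part~2 of Lemma~\ref{lem:deg>2-11}: under the hypothesis the third term of $U$ equals $\evc(G_u)+\mvc(G_v)=k_u+k_v$, so $U\le k_u+k_v$, and symmetrically the third term of $V$ equals $\mvc(G_u)+\evc(G_v)=k_u+k_v$, so $V\le k_u+k_v$; hence $\evc(G)\le\max\{U,V\}\le k_u+k_v=k_u+k_v-1+\max\{a,b\}$.

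The main obstacle will be precisely the case $a=b=1$, since there the clean bound of Lemma~\ref{lem:deg>2-11} Part~1 is not tight; the crux is to observe that the hypothesis $\evc(G_u)=k_u$, $\evc(G_v)=k_v$ collapses the ``diagonal'' third terms of both $U$ and $V$ down to $k_u+k_v$, so that Part~2 supplies the sharp bound. Combining the two inequalities yields $\evc(G)=\max\{\evc_w(G_u)+\mvc(G_v)-1,\ \mvc(G_u)+\evc_w(G_v)-1\}$, as required.
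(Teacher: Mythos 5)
Your proof is correct and takes essentially the same route as the paper's: the lower bound comes from Lemma~\ref{lem:deg>2-3} Part~1 (the paper uses Observation~\ref{obs:computemvc1} with Proposition~\ref{prop:evc-subset} in its first case, to the same effect), and the matching upper bound comes from Lemma~\ref{lem:deg>2-11}, Part~1 when at least one of $\evc_w(G_u),\evc_w(G_v)$ equals the corresponding $\mvc$ value and Part~2 when both exceed it. The only difference from the paper is a cosmetic rearrangement of the case split on whether $\evc_w(G_u)=k_u+1$ or $\evc_w(G_v)=k_v+1$.
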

\begin{proof}
First, suppose $\evc_w(G_u)=k_u$ and $\evc_w(G_v)=k_v$. By Lemma~\ref{lem:deg>2-11} Part~1, we have  $\evc(G) \le k_u+k_v-1$. By Observation~\ref{obs:computemvc1} and  
  Proposition~\ref{prop:evc-subset}, we get $\evc(G)=k_u+k_v-1$. Note that in this case $\max\{\evc_w(G_u)+\mvc(G_v)-1, \mvc(G_u)+\evc_w(G_v)-1\}=k_u+k_v-1$ as well. 
  
  Now, by Proposition~\ref{prop:occ_edge_vertices}, it remains to consider the case when $\evc_w(G_u)=k_u+1$ or $\evc_w(G_v)=k_v+1$. In this case, $\max\{\evc_w(G_u)+\mvc(G_v)-1, \mvc(G_u)+\evc_w(G_v)-1\}=k_u+k_v$. We will show that $\evc(G)=k_u+k_v$. From Lemma~\ref{lem:deg>2-11} Part~2, we get $\evc(G) \le \max\{\evc(G_u)+\mvc(G_v), \mvc(G_u)+\evc(G_v)\}= k_u+k_v$. By Lemma~\ref{lem:deg>2-3} Part~1, we get $\evc(G) \ge \max\{\evc_w(G_u)+\mvc(G_v)-1, \mvc(G_u)+\evc_w(G_v)-1\}$.  Hence, we are done.\qed
\end{proof}

\begin{lemma}\label{lem:algoevc-2}
If $\evc(G_u)=k_u$ and $\evc(G_v)=k_v+1$, then \\
 \indent 
     $\evc(G)=\min\{ \mvc(G)+1, \evc(G_u)+\evc_{vw}(G_v)-1$, $\mvc_{uw}(G_u)+\evc_w(G_v)-1\}$.
\end{lemma}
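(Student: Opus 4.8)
The plan is to prove the equality by trapping both $\evc(G)$ and the right-hand side inside the two-element set $\{k_u+k_v,\,k_u+k_v+1\}$ and then matching them value by value. Write $A=\mvc(G)+1$, $B=\evc(G_u)+\evc_{vw}(G_v)-1=k_u+\evc_{vw}(G_v)-1$, and $C=\mvc_{uw}(G_u)+\evc_w(G_v)-1$. Since $G_u$ is tight, Proposition~\ref{prop:evc-subset} gives $\mvc_x(G_u)=k_u$ for every $x$, in particular $\mvc_w(G_u)=k_u$; and since $\evc(G_v)=k_v+1$, Proposition~\ref{prop:occ_edge_vertices} (together with Proposition~\ref{prop:evc-subset}) confines $\evc_w(G_v),\evc_{vw}(G_v)\in\{k_v+1,k_v+2\}$ and $\mvc_{uw}(G_u)\in\{k_u,k_u+1\}$. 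Hence $A,B,C\ge k_u+k_v$. On the other side, Lemma~\ref{lem:deg>2-11}(1) gives $\evc(G)\le\mvc(G)+1\le k_u+k_v+1$, while choosing $y\in V(G_v)$ with $\mvc_y(G_v)=\evc(G_v)=k_v+1$ (Proposition~\ref{prop:evc-subset}) and noting that a vertex cover of $G$ containing $y$ restricts to covers of $G_u$ and $G_v$ overlapping only in $w$ gives $\evc(G)\ge\mvc_y(G)\ge k_u+(k_v+1)-1=k_u+k_v$. So every quantity in sight lies in $\{k_u+k_v,\,k_u+k_v+1\}$.

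For the upper bound I would use $\evc(G)=\max_{x\in V(G)}\mvc_x(G)$ from Proposition~\ref{prop:evc-subset}, splitting the maximum over $V(G_u)$ and $V(G_v)$. Proposition~\ref{prop:mvc_x}(2) bounds the $G_u$-part by its third term $\evc(G_u)+\mvc(G_v)=k_u+k_v$, and Proposition~\ref{prop:mvc_x}(1), after substituting $\mvc_w(G_u)=k_u$, bounds the $G_v$-part by $\min\{B,C,k_u+k_v+1\}$. Since that minimum is at least $k_u+k_v$, combining the two halves yields $\evc(G)\le\min\{B,C,k_u+k_v+1\}$; adjoining $\evc(G)\le A$ from Lemma~\ref{lem:deg>2-11}(1) and using $A\le k_u+k_v+1$ collapses the bound to $\evc(G)\le\min\{A,B,C\}$.

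For the lower bound it then suffices to rule out $\evc(G)<\min\{A,B,C\}$, which by the confinement can only mean $\min\{A,B,C\}=k_u+k_v+1$ together with $\evc(G)=k_u+k_v$. I would therefore assume $A,B,C$ are all equal to $k_u+k_v+1$, i.e. $\mvc(G)=k_u+k_v$, $\evc_{vw}(G_v)=k_v+2$, and $\mvc_{uw}(G_u)+\evc_w(G_v)\ge k_u+k_v+2$, and produce a vertex with $\mvc_x(G)=k_u+k_v+1$, contradicting $\evc(G)=k_u+k_v$. The witness is drawn from $G_v$ through Proposition~\ref{prop:evc-subset}: if $\evc_w(G_v)=k_v+2$, take $z\in V(G_v)$ with $\mvc_{zw}(G_v)=k_v+2$; otherwise $\evc_w(G_v)=k_v+1$, which with the third assumption forces $\mvc_{uw}(G_u)=k_u+1$, and I take $y\in V(G_v)$ with $\mvc_{yvw}(G_v)=k_v+2$.

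The heart of the argument, and the step I expect to be most delicate, is showing $\mvc_x(G)\ge k_u+k_v+1$ for the chosen witness. For a minimum vertex cover $S$ of $G$ containing the witness I would split on whether $v\in S$ and whether $w\in S$, using that $S$ induces vertex covers of $G_u$ and $G_v$ meeting only in $w$, that the edge $uv$ forces $u\in S$ or $v\in S$ while the edge $vw$ forces $v\in S$ or $w\in S$, and the elementary inequalities $\mvc_{yvw}(G_v)\le\mvc_{yv}(G_v)+1$ and $\mvc_{yvw}(G_v)\le\mvc_{yw}(G_v)+1$ (and their $z$-analogues) to turn the witness value $k_v+2$ into the needed lower bounds on $|S\cap V(G_v)|$. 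In the $z$-case both branches close using only $\mvc_w(G_u)=k_u$; in the $y$-case the branch $v\notin S$ (which forces $w\in S$, hence $|S\cap V(G_u)|\ge\mvc_{uw}(G_u)$) is exactly where the extra hypothesis $\mvc_{uw}(G_u)=k_u+1$ is consumed. The obstacle is the bookkeeping across these sub-branches: each must be shown to reach $k_u+k_v+1$, and the witness must be selected (the $z$-versus-$y$ dichotomy above) precisely so that no branch slips down to $k_u+k_v$.
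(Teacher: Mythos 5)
Your proposal is correct and takes essentially the same route as the paper: both confine $\evc(G)$ and all three candidate terms to $\{k_u+k_v,\,k_u+k_v+1\}$, obtain the upper bound from Proposition~\ref{prop:mvc_x} (equivalently the $U$/$V$ bound of Lemma~\ref{lem:deg>2-11}), and settle the lower bound by analyzing how a minimum vertex cover of $G$ containing a worst-case vertex of $G_v$ splits across $G_u$ and $G_v$ at $w$. The only difference is organizational: the paper argues forward (assume $\evc(G)=k_u+k_v$ and case-split on $\mvc_{uw}(G_u)$ to show one term drops to $k_u+k_v$), whereas you argue by contradiction with a witness chosen via the $\evc_w(G_v)$ versus $\evc_{vw}(G_v)$ dichotomy — the underlying counting is the same, and your branch-by-branch bookkeeping does close in every case.
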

\begin{proof}
Since $\evc(G_v)=k_v+1$, by Lemma~\ref{lem:deg>2-3} Part~1, we get $\evc(G) \ge k_u+k_v$.
By Observation~\ref{obs:computemvc1} and Proposition~\ref{prop:evc-subset}, we get $\evc(G) \in \{k_u+k_v, k_u+k_v+1\}$. By Proposition~\ref{prop:occ_edge_vertices}, it is easy to see that $\min\{ \mvc(G)+1, \evc(G_u)+\evc_{vw}(G_v)-1$, $\mvc_{uw}(G_u)+\evc_w(G_v)-1\} \in \{k_u+k_v, k_u+k_v+1\}$. 

Let $U$ and $V$ be as defined in Lemma~\ref{lem:deg>2-11} Part~2. From our assumptions, it follows that $U \le k_u+k_v$ and $V \ge k_u+k_v$. Hence by  Lemma~\ref{lem:deg>2-11} Part~2, we get $\evc(G) \le V \le min\{ \evc(G_u)+\evc_{vw}(G_v)-1$, $\mvc_{uw}(G_u)+\evc_w(G_v)-1\}$. From this, using Proposition~\ref{prop:evc-subset}, we get $\evc(G) \le min\{\mvc(G)+1, \evc(G_u)+\evc_{vw}(G_v)-1$, $\mvc_{uw}(G_u)+\evc_w(G_v)-1\}$.

Hence, it suffices to show that when $\evc(G)=k_u+k_v$, then $min\{ \mvc(G)+1, \evc(G_u)+\evc_{vw}(G_v)-1$, $\mvc_{uw}(G_u)+\evc_w(G_v)-1\}=k_u+k_v$. For this, suppose $\evc(G)=k_u+k_v$. By Lemma~\ref{lem:deg>2-3} Part~1, we have $\evc_w(G_u) \le k_u+1$ and $\evc_w(G_v)\le k_v+1$. If $\mvc_{uw}(G_u)=k_u$, we get $\min\{\evc(G_u)+\evc_{vw}(G_v)-1$, $\mvc_{uw}(G_u)+\evc_w(G_v)-1\} \le k_u+k_v=\evc(G)$ as required. Therefore, let us assume that $\mvc_{uw}(G_u)=k_u+1$. In this case, it is enough to show that $\evc_{vw}(G_v)=k_v+1$. By Proposition~\ref{prop:evc-subset}, there exists a vertex $x \in V(G_v)$ such that $\mvc_x(G_v)=k_v+1$. Consider such a vertex $x$. By Proposition~\ref{prop:evc-subset}, we know there exits a vertex cover $S_x$ of $G$ such that $x \in S_x$ and $|S_x|=\evc(G)=k_u+k_v$. Since $\mvc_{uw}(G_u)=k_u+1$ and $\mvc_x(G_v)=k_v+1$, it is easy to see that $u \notin S_x$ and hence $v, w \in S_x$. Further, $|S_x \cap V(G_v)|=k_v+1$ and hence $\mvc_{xvw}(G_v)=k_v+1$. By Proposition~\ref{prop:evc-subset}, this implies that $\evc_{vw}(G_v)=k_v+1$, as required.\qed
\end{proof}

\begin{lemma}\label{lem:algoevc-3}
If $\evc(G_u)=k_u+1$ and $\evc(G_v)=k_v+1$, then \\
 \indent 
$\evc(G)=\min\{\mvc(G)+1,\max \{\evc_{uw}(G_u)+\mvc_w(G_v)-1, \mvc_w(G_u)+\evc_{vw}(G_v)-1\}\}$.
\end{lemma}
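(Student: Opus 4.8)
The plan is to show that both $\evc(G)$ and the claimed right hand side, which I will call $R=\min\{\mvc(G)+1,\ \max\{\evc_{uw}(G_u)+\mvc_w(G_v)-1,\ \mvc_w(G_u)+\evc_{vw}(G_v)-1\}\}$, lie in $\{k_u+k_v,\ k_u+k_v+1\}$, and then match them value by value. First I would fix the range of $\evc(G)$. Since $\evc(G_u)=k_u+1$, Proposition~\ref{prop:evc-subset} gives a vertex $y\in V(G_u)$ with $\mvc_y(G_u)=k_u+1$. Writing any vertex cover $S$ of $G$ as $S=(S\cap V(G_u))\cup(S\cap V(G_v))$ with $V(G_u)\cap V(G_v)=\{w\}$, so that $|S|=|S\cap V(G_u)|+|S\cap V(G_v)|$ minus one when $w\in S$, and noting that the edge $uv$ forces $u\in S$ or $v\in S$, a short split on whether $w\in S$ yields $\mvc_y(G)\ge k_u+k_v$; hence $\evc(G)\ge k_u+k_v$ by Proposition~\ref{prop:evc-subset}. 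Combined with $\evc(G)\le\mvc(G)+1\le k_u+k_v+1$ (Lemma~\ref{lem:deg>2-11}, Part~1, and Observation~\ref{obs:computemvc1}) this confines $\evc(G)$ to the two candidate values.

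For the upper bound $\evc(G)\le R$, I would invoke both parts of Lemma~\ref{lem:deg>2-11}. Part~1 gives $\evc(G)\le\mvc(G)+1$ directly. For the remaining half, I observe that in the quantities $U$ and $V$ of Part~2 the first list-entry of $U$ is exactly $\evc_{uw}(G_u)+\mvc_w(G_v)-1$ and the first list-entry of $V$ is exactly $\mvc_w(G_u)+\evc_{vw}(G_v)-1$. Since $U$ and $V$ are minima over their lists, $U\le\evc_{uw}(G_u)+\mvc_w(G_v)-1$ and $V\le\mvc_w(G_u)+\evc_{vw}(G_v)-1$, so $\max\{U,V\}$ is at most the inner maximum appearing in $R$. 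Thus $\evc(G)\le\max\{U,V\}\le\max\{\evc_{uw}(G_u)+\mvc_w(G_v)-1,\ \mvc_w(G_u)+\evc_{vw}(G_v)-1\}$, and combining with Part~1 gives $\evc(G)\le R$.

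By the first two paragraphs the whole lemma reduces to a single implication: if $R=k_u+k_v+1$ then $\evc(G)\ge k_u+k_v+1$. Now $R=k_u+k_v+1$ forces $\mvc(G)=k_u+k_v$ together with the inner maximum being $\ge k_u+k_v+1$. From $\mvc(G)=k_u+k_v$, Lemma~\ref{lem:computemvc2} supplies the two inequalities $\mvc_w(G_u)+\mvc_{vw}(G_v)\ge k_u+k_v+1$ and $\mvc_{uw}(G_u)+\mvc_w(G_v)\ge k_u+k_v+1$, which I will use to kill the cheap covers. By Proposition~\ref{prop:evc-subset} it then suffices to exhibit one vertex $x$ with $\mvc_x(G)\ge k_u+k_v+1$. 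Assuming without loss of generality (by the $u\leftrightarrow v$ symmetry) that $\evc_{uw}(G_u)+\mvc_w(G_v)-1\ge k_u+k_v+1$, the witness is chosen according to the two binary parameters $\mvc_w(G_u)\in\{k_u,k_u+1\}$ and $\mvc_w(G_v)\in\{k_v,k_v+1\}$: if $\mvc_w(G_v)=k_v+1$ take $x=y$ from the first paragraph; if $\mvc_w(G_u)=k_u+1$ take $x=p\in V(G_v)$ with $\mvc_p(G_v)=k_v+1$ (available since $\evc(G_v)=k_v+1$); and in the remaining case $\mvc_w(G_u)=k_u,\ \mvc_w(G_v)=k_v$ the two inequalities above force $\mvc_{vw}(G_v)=k_v+1$ while the assumed bound forces $\evc_{uw}(G_u)=k_u+2$, so take $x=z\in V(G_u)$ with $\mvc_{zuw}(G_u)=k_u+2$. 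In every case the bound $\mvc_x(G)\ge k_u+k_v+1$ is read off the same decomposition of $S$, splitting on whether $w\in S$ and, when $w\in S$, on whether $u\in S$, using $\mvc_{zw}(G_u)\ge\mvc_{zuw}(G_u)-1=k_u+1$ in the $z$-case.

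I expect this last case analysis to be the main obstacle. The naive witness (a vertex that is ``hard'' for one segment) can slip into a minimum vertex cover of $G$ of size $k_u+k_v$ through the branch $w\in S,\ u\notin S,\ v\in S$, so the value $k_u+k_v+1$ is only forced after the constraints from $\mvc(G)=k_u+k_v$ are used to pin $\mvc_{vw}(G_v)=k_v+1$ (respectively $\mvc_{uw}(G_u)=k_u+1$) and thereby rule out the cheap cover. Selecting, for each combination of $\mvc_w(G_u)$ and $\mvc_w(G_v)$, the correct witness among $y$, $p$, and $z$, and verifying that this troublesome branch is saved in each, is the delicate part of the argument; everything else is bookkeeping with Propositions~\ref{prop:evc-subset} and \ref{prop:occ_edge_vertices}.
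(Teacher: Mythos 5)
Your proof is correct. Its first half coincides with the paper's: you confine $\evc(G)$ to $\{k_u+k_v,\,k_u+k_v+1\}$ (the paper gets the lower bound from Lemma~\ref{lem:deg>2-3} Part~1, you rederive it with an explicit witness $y\in V(G_u)$), and you obtain $\evc(G)\le R$ from the two parts of Lemma~\ref{lem:deg>2-11} by reading off the first entries of $U$ and $V$, exactly as the paper does. Where you diverge is the residual implication. The paper argues forward: assuming $\evc(G)=k_u+k_v$, it derives $\evc_w(G_u)=k_u+1$, $\evc_w(G_v)=k_v+1$, then $\mvc_w(G_u)=k_u$ and $\mvc_w(G_v)=k_v$ via single witness vertices, and declares this ``enough.'' You argue the contrapositive: from $R=k_u+k_v+1$ you extract $\mvc(G)=k_u+k_v$ plus one large inner-max term, and exhibit a vertex $x$ with $\mvc_x(G)\ge k_u+k_v+1$ by a three-way split on $\mvc_w(G_u)$ and $\mvc_w(G_v)$. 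The two directions are logically equivalent given the two-element range, but your version is actually the more complete one: the paper's ``it is enough to show $\mvc_w(G_u)=k_u$ and $\mvc_w(G_v)=k_v$'' silently skips the configuration in which these equalities hold yet $\mvc(G)=k_u+k_v$ and $\evc_{uw}(G_u)=k_u+2$, so that the inner maximum is still $k_u+k_v+1$; excluding it requires precisely your Case~C witness $z$ with $\mvc_{zuw}(G_u)=k_u+2$ and the three-branch count splitting on $w\in S$ and then $u\in S$, using $\mvc_{zw}(G_u)\ge k_u+1$ and $\mvc_{vw}(G_v)=k_v+1$. So the ``delicate part'' you flagged is not an artifact of your route -- it is the substance the paper leaves implicit. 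The paper's direction buys brevity; yours buys the assurance that every subcase is actually discharged.
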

\begin{proof}
Since $\evc(G_v)=k_v+1$, by Lemma~\ref{lem:deg>2-3} Part~1, we get $\evc(G) \ge k_u+k_v$.
By Observation~\ref{obs:computemvc1} and Proposition~\ref{prop:evc-subset}, we get $\evc(G) \in \{k_u+k_v, k_u+k_v+1\}$. By Proposition~\ref{prop:occ_edge_vertices}, it is easy to see that $\min\{\mvc(G)+1,\max \{\evc_{uw}(G_u)+\mvc_w(G_v)-1, \mvc_w(G_u)+\evc_{vw}(G_v)-1\}\} \in \{k_u+k_v, k_u+k_v+1\}$. 

By Lemma~\ref{lem:deg>2-11}, we get $\evc(G)\le \min\{\mvc(G)+1, \max \{\evc_{uw}(G_u)+\mvc_w(G_v)-1, \evc_{vw}(G_v)+\mvc_w(G_u)-1\}\}$. Hence, it is enough to show that if $\evc(G)=k_u+k_v$, then $\min\{\mvc(G)+1, \max \{\evc_{uw}(G_u)+\mvc_w(G_v)-1, \evc_{vw}(G_v)+\mvc_w(G_u)-1\}\}=k_u+k_v$.

For this, suppose $\evc(G)=k_u+k_v$. Since $\evc(G)=k_u+k_v$, $\evc(G_u)=k_u+1$ and $\evc(G_v)=k_v+1$, by Lemma~\ref{lem:deg>2-3} Part~1, we get $\evc_w(G_u)=k_u+1$ and $\evc_w(G_v)=k_v+1$. Now, it is enough to show that $\mvc_w(G_u)=k_u$ and $\mvc_w(G_v)=k_v$. Since $\evc_w(G_u)=k_u+1$, by Proposition~\ref{prop:evc-subset}, there exists a vertex $x \in V(G_u)$ such that $\mvc_{xw}(G_u)= k_u+1$. By Proposition~\ref{prop:evc-subset}, we know $k_u+k_v=\evc(G) \ge \mvc_x(G)$. Therefore, it has to be the case that $\mvc_w(G_v)=k_v$. Symmetrically, we get $\mvc_w(G_u)=k_u$.\qed
\end{proof}

Lemmas~\ref{lem:case1algo-1}-~\ref{lem:case4algo} deal  with the computation of the remaining parameters in $\mathscr{E}(G,uv)$ using $\evc(G)$, $\mathscr{M}(G,uv)$, $\mathscr{M}(G_u,uw)$, $\mathscr{M}(G_v,vw)$, $\mathscr{E}(G_u,uw)$ and $\mathscr{E}(G_v,vw)$.
\begin{lemma} \label{lem:case1algo-1} 
If $\evc(G_u)=k_u$, $\evc(G_v)=k_v$ and $\evc(G)=k_u+k_v-1$, then\\ 
  \indent $\evc_{u}(G)=\evc_{uw}(G_u)+\mvc(G_v)-1$, 
   $\evc_{v}(G)=\evc_{vw}(G_v)+\mvc(G_u)-1$ and\\ 
  \indent   $\evc_{uv}(G)=min\{\mvc_{uv}(G)+1, \evc_{uw}(G_u)+\evc_{vw}(G_v)-1\}$.
\end{lemma}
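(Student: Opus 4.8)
The plan is to first convert the three hypotheses into structural facts about the two segments, and then read off each equality by sandwiching it between a lower bound from Lemma~\ref{lem:deg>2-3} and an upper bound from Lemmas~\ref{lem:deg>2-122}, \ref{lem:deg>2-12} and \ref{lem:deg>2-2}. Since $\evc(G_u)=\mvc(G_u)=k_u$ and $\evc(G_v)=\mvc(G_v)=k_v$, Lemma~\ref{lem:deg>2-3}(1) yields $\evc(G)\ge\max\{\evc_w(G_u)+k_v-1,\,k_u+\evc_w(G_v)-1\}$; together with the hypothesis $\evc(G)=k_u+k_v-1$ and the trivial inequalities $\evc_w(G_u)\ge\evc(G_u)=k_u$ and $\evc_w(G_v)\ge\evc(G_v)=k_v$, this forces $\evc_w(G_u)=k_u$ and $\evc_w(G_v)=k_v$. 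I would then extract the consequences used throughout: by Proposition~\ref{prop:evc-subset}, $\mvc_{uw}(G_u)\le\evc_w(G_u)=k_u$ and $\mvc_{vw}(G_v)\le\evc_w(G_v)=k_v$, so in fact $\mvc_{uw}(G_u)=\mvc(G_u)=k_u$ and $\mvc_{vw}(G_v)=\mvc(G_v)=k_v$; consequently Lemma~\ref{lem:computemvcfix_gen-1} gives $\mvc_u(G)=\mvc_v(G)=\mvc_{uv}(G)=k_u+k_v-1$.

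With these in hand, the formulas for $\evc_u(G)$ and $\evc_v(G)$ are immediate. For $\evc_u(G)$, the term $\evc_{uw}(G_u)+\mvc(G_v)-1$ of Lemma~\ref{lem:deg>2-3}(2) is a lower bound, while Lemma~\ref{lem:deg>2-122}(1) gives $\evc_u(G)\le\evc_{uw}(G_u)+\evc_w(G_v)-1=\evc_{uw}(G_u)+k_v-1$ once I substitute $\evc_w(G_v)=k_v$; the two bounds agree. The computation of $\evc_v(G)$ is the mirror image, using the symmetric lower bound of Lemma~\ref{lem:deg>2-3}(2) and the upper bound $\evc_v(G)\le\evc_w(G_u)+\evc_{vw}(G_v)-1=k_u+\evc_{vw}(G_v)-1$ from Lemma~\ref{lem:deg>2-12}(1), now invoking $\evc_w(G_u)=k_u=\mvc(G_u)$.

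For $\evc_{uv}(G)$ the upper bound is direct: Lemma~\ref{lem:deg>2-2}(1) gives $\evc_{uv}(G)\le\mvc_{uv}(G)+1$ and Lemma~\ref{lem:deg>2-2}(2) gives $\evc_{uv}(G)\le\evc_{uw}(G_u)+\evc_{vw}(G_v)-1$, hence $\evc_{uv}(G)$ is at most the stated minimum. For the matching lower bound I would set $a=\evc_{uw}(G_u)\in\{k_u,k_u+1\}$ and $b=\evc_{vw}(G_v)\in\{k_v,k_v+1\}$ (using the first paragraph) and recall $\mvc_{uv}(G)=k_u+k_v-1$. If $a=k_u$ and $b=k_v$, the minimum equals $\mvc_{uv}(G)$, and $\mvc_{uv}(G)\le\evc_{uv}(G)\le a+b-1=\mvc_{uv}(G)$ forces equality. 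Otherwise the minimum equals $\mvc_{uv}(G)+1=k_u+k_v$, and this value is supplied by Lemma~\ref{lem:deg>2-3}(3): if $a=k_u+1$ then $\evc_{uv}(G)\ge\evc_{uw}(G_u)+\mvc_v(G_v)-1\ge(k_u+1)+k_v-1=k_u+k_v$, and symmetrically if $b=k_v+1$ using the term $\mvc_u(G_u)+\evc_{vw}(G_v)-1$.

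The step I expect to carry the real weight is the opening deduction that fixing $w$ is \emph{free} in both segments, i.e. $\evc_w(G_u)=\mvc_{uw}(G_u)=k_u$ and $\evc_w(G_v)=\mvc_{vw}(G_v)=k_v$. Everything afterwards is a matter of substituting these equalities so that the catalogued upper and lower bounds collapse onto a common value; in particular it is exactly these identities that yield $\mvc_{uv}(G)=k_u+k_v-1$, which is what lets the lower bound from Lemma~\ref{lem:deg>2-3}(3) meet the minimum in the formula for $\evc_{uv}(G)$. The one genuinely non-mechanical point I would verify carefully is that the two regimes $a+b-1\le\mvc_{uv}(G)$ and $a+b-1=\mvc_{uv}(G)+1$ exhaust all cases and correctly select which argument of the minimum is attained.
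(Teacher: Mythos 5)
Your proposal is correct and follows essentially the same route as the paper: deduce $\evc_w(G_u)=k_u$ and $\evc_w(G_v)=k_v$ from Lemma~\ref{lem:deg>2-3}(1), hence $\mvc_{uw}(G_u)=k_u$, $\mvc_{vw}(G_v)=k_v$ and $\mvc_{uv}(G)=k_u+k_v-1$, then sandwich $\evc_u(G)$ and $\evc_v(G)$ between Lemma~\ref{lem:deg>2-3}(2) and Lemmas~\ref{lem:deg>2-122}/\ref{lem:deg>2-12}, and settle $\evc_{uv}(G)$ using Lemma~\ref{lem:deg>2-2} together with Lemma~\ref{lem:deg>2-3}(3). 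Your case split on $\evc_{uw}(G_u)$ and $\evc_{vw}(G_v)$ is just the contrapositive of the paper's argument that $\evc_{uv}(G)=\mvc_{uv}(G)$ forces both to equal $k_u$ and $k_v$, so the two proofs coincide in substance.
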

\begin{proof}
     By Lemma~\ref{lem:deg>2-3} Part~1, we know $\evc(G) \ge \max\{\evc_w(G_u)+\mvc(G_v)-1, \mvc(G_u)+\evc_w(G_v)-1\}$. Therefore, we get $\evc_w(G_u)=k_u$ and $\evc_w(G_v)=k_v$. By Proposition~\ref{prop:evc-subset}, we know  $\mvc_{uw}(G_u)=k_u$ and $\mvc_{vw}(G_v)=k_v$. Hence, we get $\mvc_{uv}(G)=k_u+k_v-1$.   
     
     First, we will prove the expression for $\evc_u(G)$.  By Lemma~\ref{lem:deg>2-122}, $\evc_{u}(G) \le \evc_{uw}(G_u)+\evc_w(G_v)-1=\evc_{uw}(G_u)+\mvc(G_v)-1$. 
     By Lemma~\ref{lem:deg>2-3} Part~2, we get $\evc_{u}(G) \ge \evc_{uw}(G_u)+\mvc(G_v)-1$.
     Hence, $\evc_{u}(G) = \evc_{uw}(G_u)+\mvc(G_v)-1$.
 Similarly, we get $\evc_{v}(G)=\evc_{vw}(G_v)+\mvc(G_u)-1$. \\              
     Now, we show that $\evc_{uv}(G)= min\{\mvc_{uv}(G)+1, \evc_{uw}(G_u)+\evc_{vw}(G_v)-1\}$. By Lemma~\ref{lem:deg>2-2}, we have $\evc_{uv}(G)\le  min\{\mvc_{uv}(G)+1, \evc_{uw}(G_u)+\evc_{vw}(G_v)-1\}$.
     By Proposition~\ref{prop:evc-subset}, we know that $\evc_{uv}(G) \in \{\mvc_{uv}(G),\mvc_{uv}(G)+1\}$.
 Now, it is enough to show that if $\evc_{uv}(G) = \mvc_{uv}(G)$, then 
 $\evc_{uw}(G_u)+\evc_{vw}(G_v)-1 = \mvc_{uv}(G)$ Recall that $\mvc_{uv}(G)=k_u+k_v-1$. 
 Suppose $\evc_{uv}(G)=\mvc_{uv}(G)=k_u+k_v-1$. Then, by Lemma~\ref{lem:deg>2-3} Part~3, we get 
      $\evc_{uw}(G_u)=k_u$ and $\evc_{vw}(G_v)=k_v$. Therefore,
      $\evc_{uw}(G_u)+\evc_{vw}(G_v)-1=k_u+k_v-1$.\qed
      
\end{proof}

\begin{lemma} \label{lem:case1algo-2} 
If $\evc(G_u)=k_u$, $\evc(G_v)=k_v$ and $\evc(G)=k_u+k_v$, then\\ \indent  $\evc_u(G)=\evc_v(G)=\evc(G)$ and \\  \indent  
    $\evc_{uv}(G)=\min\{ \evc_u(G_u)+\mvc(G_v), \evc_v(G_v)+\mvc(G_u),
    \mvc_{uv}(G)+1 \}$.
\end{lemma}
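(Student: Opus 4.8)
The plan is to prove the two assertions separately, using throughout the key consequence of the hypotheses $\evc(G_u)=k_u$ and $\evc(G_v)=k_v$: by Proposition~\ref{prop:evc-subset}, $\evc(G_u)=\max_{x}\mvc_x(G_u)=k_u$ forces $\mvc_x(G_u)=k_u$ for \emph{every} $x\in V(G_u)$, so in particular $\mvc_u(G_u)=\mvc_w(G_u)=k_u$, and symmetrically $\mvc_v(G_v)=\mvc_w(G_v)=k_v$. I would also record at the outset the two sandwiching facts that keep the whole argument finite: since $\mvc_{xy}(G)\ge\mvc_y(G)$, Proposition~\ref{prop:evc-subset} gives $\evc_u(G),\evc_v(G),\evc_{uv}(G)\ge\evc(G)=k_u+k_v$, while Proposition~\ref{prop:occ_edge_vertices} gives the matching upper bounds of $k_u+k_v+1$.

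For $\evc_u(G)=\evc(G)$ the lower bound is exactly the monotonicity just noted, so only $\evc_u(G)\le k_u+k_v$ needs work. Here I would feed the equalities above into Lemma~\ref{lem:deg>2-122}(2): in the term $U$ the summand $\evc_{uw}(G_u)+\mvc_w(G_v)-1$ is at most $(k_u+1)+k_v-1=k_u+k_v$, using $\evc_{uw}(G_u)\le\evc(G_u)+1$ from Proposition~\ref{prop:occ_edge_vertices} applied to $G_u$ and $\mvc_w(G_v)=k_v$; and in $V$ the summand $\mvc_u(G_u)+\evc(G_v)$ equals $k_u+k_v$. Hence $\max\{U,V\}\le k_u+k_v$, giving $\evc_u(G)=\evc(G)$, and $\evc_v(G)=\evc(G)$ is symmetric.

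For the formula for $\evc_{uv}(G)$, write $A=\evc_u(G_u)+\mvc(G_v)$, $B=\evc_v(G_v)+\mvc(G_u)$ and $C=\mvc_{uv}(G)+1$. The upper bound $\evc_{uv}(G)\le\min\{A,B,C\}$ is immediate: $\evc_{uv}(G)\le C$ by Proposition~\ref{prop:evc-subset}, while $\evc_{uv}(G)\le\evc_u(G_u)+\evc(G_v)=A$ and $\evc_{uv}(G)\le\evc(G_u)+\evc_v(G_v)=B$ follow from Lemma~\ref{lem:deg>2-2}(2) together with $\evc(G_u)=k_u$ and $\evc(G_v)=k_v$. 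For the reverse inequality I would note that $A\le k_u+k_v+1$, so $\min\{A,B,C\}\le k_u+k_v+1$, while $\evc_{uv}(G)\ge k_u+k_v$; thus $\evc_{uv}(G)\ge\min\{A,B,C\}$ can fail only in the single situation $\min\{A,B,C\}=k_u+k_v+1$ and $\evc_{uv}(G)=k_u+k_v$. In that situation $A=B=k_u+k_v+1$ and $C\ge k_u+k_v+1$, i.e.\ $\evc_u(G_u)=k_u+1$, $\evc_v(G_v)=k_v+1$ and $\mvc_{uv}(G)\ge k_u+k_v$; then the lower bounds $\evc_{uv}(G)\ge\mvc_{uw}(G_u)+\evc_v(G_v)-1$ and $\evc_{uv}(G)\ge\evc_u(G_u)+\mvc_{vw}(G_v)-1$ from Lemma~\ref{lem:deg>2-3}(3), combined with $\evc_{uv}(G)=k_u+k_v$, force $\mvc_{uw}(G_u)=k_u$ and $\mvc_{vw}(G_v)=k_v$.

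This is the main obstacle, and I expect the clean resolution to be routing the contradiction back through the mvc-computation lemmas rather than estimating the $w$-overlap by hand. With $\mvc_{uw}(G_u)=k_u$, $\mvc_{vw}(G_v)=k_v$ and the earlier $\mvc_w(G_u)=k_u$, $\mvc_w(G_v)=k_v$, Lemma~\ref{lem:computemvc2} yields $\mvc(G)=k_u+k_v-1$, whence Lemma~\ref{lem:computemvcfix_gen-1} gives $\mvc_{uv}(G)=\mvc_{uw}(G_u)+\mvc_{vw}(G_v)-1=k_u+k_v-1$, contradicting $\mvc_{uv}(G)\ge k_u+k_v$. Hence the failing situation cannot arise, so $\evc_{uv}(G)\ge\min\{A,B,C\}$, which together with the upper bound completes the proof. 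The only genuinely delicate bookkeeping is tracking the three candidate values $k_u+k_v-1$, $k_u+k_v$, $k_u+k_v+1$ and making sure each estimate is supplied by an already-proven bound.
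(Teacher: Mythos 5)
Your proof is correct and takes essentially the same route as the paper: the same upper bounds via Lemma~\ref{lem:deg>2-122} and Lemma~\ref{lem:deg>2-2}, and a lower-bound argument that is the contrapositive of the paper's case analysis, hinging on the same application of Lemma~\ref{lem:deg>2-3}~(Part~3) and the same fact that $\mvc_{uw}(G_u)=k_u$ together with $\mvc_{vw}(G_v)=k_v$ would force $\mvc_{uv}(G)=k_u+k_v-1$. The only cosmetic difference is that you obtain this last fact by routing through Lemma~\ref{lem:computemvc2} and Lemma~\ref{lem:computemvcfix_gen-1}, whereas the paper observes it directly.
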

\begin{proof}
Since $\evc_{uw}(G_u)\le k_u+1$ by Proposition~\ref{prop:occ_edge_vertices} and $\mvc_w(G_v)=k_v$ by Proposition~\ref{prop:evc-subset}, it follows by Lemma~\ref{lem:deg>2-122} that $\evc_u(G) \le \max\{\evc_{uw}(G_u)+\mvc_w(G_v)-1, \mvc_u(G_u)+\evc(G_v)\} =k_u+k_v=\evc(G)$ and by Lemma~\ref{lem:deg>2-3} Part~2, we get $\evc_u(G) \ge \evc(G)= k_u+k_v$, Hence, we get $\evc_u(G)=\evc(G)$. 
     Similarly, we get $\evc_v(G)=\evc(G)$.
            
     Now, we show that $\evc_{uv}(G)=\min\{\evc_u(G_u)+\mvc(G_v),
     \mvc(G_u)+\evc_v(G_v),\mvc_{uv}(G)+1\}$. 
     Since by Proposition~\ref{prop:evc-subset}, $\mvc_u(G_u)=k_u$ and $\mvc_v(G_v)=k_v$, by Lemma~\ref{lem:deg>2-2} Part~1, we have $\evc_{uv}(G)\le \min\{\evc_u(G_u)+\mvc(G_v),
     \mvc(G_u)+\evc_v(G_v),\mvc_{uv}(G)+1\}$.
      We know that $\evc_{uv}(G)\in \{\mvc_{uv}(G), \mvc_{uv}(G)+1\}$. So, it is now  enough to show that when $\evc_{uv}(G) = \mvc_{uv}(G)$, then $\min\{\evc_u(G_u)+\mvc(G_v),
     \mvc(G_u)+\evc_v(G_v)\}=\evc_{uv}(G)$. 
     
     Consider the case when $\evc_{uv}(G) = \mvc_{uv}(G)$. In this case, we will show that $\mvc_{uv}(G)=k_u+k_v$. Since we have already proved that $\evc_{u}(G)=\evc(G)=k_u+k_v$, by Proposition~\ref{prop:evc-subset}, we know that $\mvc_{uv}(G)\le k_u+k_v$.  Moreover, since $\evc_{uv}(G) \ge \evc(G)=k_u+k_v$, 
     we get $\evc_{uv}(G)= \mvc_{uv}(G) = k_u+k_v$.
       It remains to show that $\min\{\evc_u(G_u)+\mvc(G_v),\mvc(G_u)+\evc_v(G_v)\}=k_u+k_v$. Since $\mvc_{uv}(G)=k_u+k_v$, it can be easily seen that either $\mvc_{uw}(G_u)=k_u+1$ or $\mvc_{vw}(G_v)=k_v+1$. Without loss of generality, assume that $\mvc_{uw}(G_u)=k_u+1$. In this case, since $\evc_{uv}(G)=k_u+k_v$, by Lemma~\ref{lem:deg>2-3} Part~3, we get $\evc_v(G_v)=k_v$. Hence, we get $\min\{\evc_u(G_u)+\mvc(G_v),\mvc(G_u)+\evc_v(G_v)\}=k_u+k_v$.\qed
\end{proof}

\begin{lemma}\label{lem:case2algo-1}
If $\evc(G_u)=k_u$, $\evc(G_v)=k_v+1$ and $\mvc(G)=k_u+k_v-1$, then:\\
  $\evc_u(G)=\mvc_{uw}(G_u)+\evc(G_v)-1$, \\
  $\evc_v(G)=\min\{\mvc_v(G)+1,\evc_w(G_u)+\evc_{vw}(G_v)-1,\\ \indent \indent \indent \indent \max \{ \mvc_w(G_u)+\evc_{vw}(G_v)-1, \evc(G_u)+\mvc_v(G_v)\}\}$\\
  $\evc_{uv}(G)=\min \{\mvc_{uv}(G)+1, \evc_{uw}(G_u)+\evc_{vw}(G_v)-1, 
  \evc_{u}(G_u)+\max \{\evc_{vw}(G_v)-1, \mvc_v(G_v)\}\}$.
\end{lemma}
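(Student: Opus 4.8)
The goal is to establish the three formulas in Lemma~\ref{lem:case2algo-1} under the hypotheses $\evc(G_u)=k_u$, $\evc(G_v)=k_v+1$, and $\mvc(G)=k_u+k_v-1$. The overall strategy is the same template used in Lemmas~\ref{lem:case1algo-1} and~\ref{lem:case1algo-2}: for each target parameter, derive the upper bound directly from the bounds of Section~\ref{bounds} (Lemmas~\ref{lem:deg>2-12}, \ref{lem:deg>2-122}, \ref{lem:deg>2-2}) and the matching lower bound from Lemma~\ref{lem:deg>2-3}, then reconcile the two using Proposition~\ref{prop:evc-subset} and Proposition~\ref{prop:occ_edge_vertices}. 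First I would record the structural consequences of the hypotheses. Since $\mvc(G)=k_u+k_v-1$, Lemma~\ref{lem:computemvc2} forces a minimum vertex cover of each segment to use $w$; in particular we can read off that $\mvc_w(G_u)=k_u$ and $\mvc_w(G_v)=k_v$, and that at least one of $\mvc_{uw}(G_u)=k_u$ or $\mvc_{vw}(G_v)=k_v$ holds. Since $\evc(G_v)=k_v+1>\mvc(G_v)$, Proposition~\ref{prop:evc-subset} guarantees a vertex $x\in V(G_v)$ with $\mvc_x(G_v)=k_v+1$, which is the engine behind the $\evc(G_v)$ terms appearing in the formulas.

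\textbf{The formula for $\evc_u(G)$.} I would prove $\evc_u(G)=\mvc_{uw}(G_u)+\evc(G_v)-1$ as a clean two-sided argument. The lower bound is immediate from Lemma~\ref{lem:deg>2-3} Part~2, which already lists $\mvc_{uw}(G_u)+\evc(G_v)-1$ among its terms. For the upper bound I would invoke Lemma~\ref{lem:deg>2-122} Part~2 with $U$ and $V$: using $\mvc(G_v)=k_v$ and $\evc(G_v)=k_v+1$, the term $V=\min\{\mvc_u(G_u)+\evc(G_v),\mvc_{uw}(G_u)+\evc_w(G_v)-1\}$ should dominate, and combined with $\mvc_{uw}(G_u)=\mvc_u(G_u)$ (since a minimum cover of $G_u$ containing $u$ can be taken to contain $w$ when $\mvc_w(G_u)=k_u$) this collapses to the claimed value. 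The cleanest route is to check that the relevant minimum in $\max\{U,V\}$ equals $\mvc_{uw}(G_u)+\evc(G_v)-1$ and that this coincides with the lower bound.

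\textbf{The formulas for $\evc_v(G)$ and $\evc_{uv}(G)$.} These are the hard part, because here $G_v$ contributes an extra guard ($\evc(G_v)=k_v+1$) and the minimum/maximum nesting in the stated expressions is genuinely needed rather than collapsing. For $\evc_v(G)$ I would take the upper bound from Lemma~\ref{lem:deg>2-12} (both parts supply the three constituent terms $\mvc_v(G)+1$, $\evc_w(G_u)+\evc_{vw}(G_v)-1$, and the inner $\max\{U,V\}$), and then argue the reverse inequality by the dichotomy $\evc_v(G)\in\{\mvc_v(G),\mvc_v(G)+1\}$ from Proposition~\ref{prop:evc-subset}: when $\evc_v(G)=\mvc_v(G)+1$ the bound is trivially tight, and when $\evc_v(G)=\mvc_v(G)$ I would show, by examining a minimum cover $S$ of $G$ with $v\in S$ and case-splitting on $w\in S$ versus $w\notin S$, that one of the two nested terms attains the value—this is the contradiction-style argument of Lemma~\ref{lem:computemvcfix_gen-2}. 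The expression for $\evc_{uv}(G)$ follows the same pattern, upper-bounded by Lemma~\ref{lem:deg>2-2} (whose Part~1 already produces $\mvc_{uv}(G)+1$ and the $\max\{P_1,P_2\}$ shape, and whose Part~2 produces the $\evc_{uw}(G_u)+\evc_{vw}(G_v)-1$ and $\evc_u(G_u)+\max\{\evc_{vw}(G_v)-1,\mvc_v(G_v)\}$ terms), and lower-bounded through $\evc_{uv}(G)\in\{\mvc_{uv}(G),\mvc_{uv}(G)+1\}$ together with Lemma~\ref{lem:deg>2-3} Part~3 in the tight case. The main obstacle throughout will be verifying that the inner $\max$ over the two segments is exactly the right size in the boundary case $\evc_{uv}(G)=\mvc_{uv}(G)$: one must use the fact that at least one of $\mvc_{uw}(G_u)=k_u$ or $\mvc_{vw}(G_v)=k_v$ holds to pin down which segment carries the extra guard forced by $\evc(G_v)=k_v+1$, and then show the corresponding term in the $\max$ equals $\mvc_{uv}(G)$ while the other does not exceed it.
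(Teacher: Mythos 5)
Your overall template does match the paper's: upper bounds from Lemmas~\ref{lem:deg>2-12}, \ref{lem:deg>2-122} and \ref{lem:deg>2-2}, the dichotomy $\evc_S(G)\in\{\mvc_S(G),\mvc_S(G)+1\}$ from Proposition~\ref{prop:evc-subset}, and a reconciliation in the tight case driven by Lemma~\ref{lem:deg>2-3}. But there are two concrete gaps. First, you never establish $\evc(G)=k_u+k_v$, which the paper obtains immediately from Lemma~\ref{lem:algoevc-2} and which is indispensable throughout: in the tight case for $\evc_v$ one needs $\evc_v(G)\ge\evc(G)=k_u+k_v$ and $\mvc_v(G)\le\evc(G)$ to conclude $\evc_v(G)=\mvc_v(G)=k_u+k_v$, and one needs $\evc(G)=k_u+k_v$ combined with Lemma~\ref{lem:deg>2-3} to deduce $\evc_{vw}(G_v)\le k_v+1$ (and similarly $\evc_w(G_v)\le k_v+1$). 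Without this anchor none of your reconciliation arguments can start. Second, your upper bound for $\evc_u(G)$ rests on the claim $\mvc_{uw}(G_u)=\mvc_u(G_u)$, justified by ``a minimum cover of $G_u$ containing $u$ can be taken to contain $w$ when $\mvc_w(G_u)=k_u$.'' That inference is not valid: a graph can have a minimum cover through $u$ and one through $w$ yet none through both, so $\mvc_{uw}(G_u)$ may equal $\mvc_u(G_u)+1$. Moreover, plugging the hypotheses into $\max\{U,V\}$ of Lemma~\ref{lem:deg>2-122} does not ``collapse'' to $\mvc_{uw}(G_u)+\evc(G_v)-1$ unless you first know $\evc_w(G_v)=k_v+1$, which again requires $\evc(G)=k_u+k_v$. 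The paper sidesteps all of this with a cleaner split on whether $\mvc_u(G)=\mvc(G)$ or $\mvc(G)+1$: in the first case $\mvc_{uw}(G_u)=k_u$ and $\evc_u(G)\le\mvc_u(G)+1$ suffices; in the second, Lemma~\ref{lem:computemvcfix_gen-1} forces $\mvc_{uw}(G_u)=k_u+1$ and $\evc_u(G)\le\evc(G)+1$ closes it.

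For $\evc_v(G)$ and $\evc_{uv}(G)$ your plan is only a gesture at ``the contradiction-style argument.'' The actual tight-case work is a specific chain of deductions, not an inspection of one minimum cover with a split on $w\in S$: e.g., for $\evc_{uv}$ one shows $\evc_{vw}(G_v)=k_v+1$ from Lemma~\ref{lem:deg>2-3} Part~3, then either $\evc_{uw}(G_u)=k_u$ (done) or $\evc_{uw}(G_u)=k_u+1$, in which case Part~3 successively pins down $\mvc_v(G_v)=k_v$, $\mvc_{uw}(G_u)=k_u$, hence $\mvc_{vw}(G_v)=k_v+1$ via Lemma~\ref{lem:computemvcfix_gen-1}, hence $\evc_u(G_u)=k_u$, which makes the third term of the minimum equal to $k_u+k_v$. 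Your observation that at least one of $\mvc_{uw}(G_u)=k_u$ or $\mvc_{vw}(G_v)=k_v$ holds points in the right direction, but as written the proposal does not contain the argument that actually closes these cases.
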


\begin{proof}
  Since $\evc(G_v)=k_v+1$, by Lemma~\ref{lem:algoevc-2}, it can be seen that $\evc(G)=k_u+k_v$.

  First, we will look at $\evc_u(G)$. By Lemma~\ref{lem:deg>2-3} Part~2, we get $\evc_u(G)\ge \mvc_{uw}(G_u)+\evc(G_v)-1$. Now, it is enough to show that 
  $\evc_u(G)\le \mvc_{uw}(G_u)+\evc(G_v)-1$.  
  We know that $\mvc_u(G) \in \{\mvc(G),\mvc(G)+1\}$. First, suppose $\mvc_u(G)=\mvc(G)$. In this case, it can be seen that $\mvc_{uw}(G_u)=k_u$ and hence, $\mvc_{uw}(G_u)+\evc(G_v)-1=k_u+k_v$. Moreover, by  Proposition~\ref{prop:evc-subset}, we know $\evc_u(G) \le \mvc_u(G)+1=k_u+k_v$. Hence, we are done.
  Next, suppose $\mvc_u(G)=\mvc(G)+1$.  In this case, by Lemma~\ref{lem:computemvcfix_gen-1}, we get $\mvc_{uw}(G_u)=k_u+1$. Hence, we get $\mvc_{uw}(G_u)+\evc(G_v)-1=k_u+k_v+1$. By Lemma~\ref{lem:deg>2-122}, we have $\evc_u(G) \le \evc(G)+1=k_u+k_v+1$. Hence, we are done.
  
  Now we will look at $\evc_v(G)$. 
  Let $Q= \min\{\mvc_v(G)+1,\evc_w(G_u)+\evc_{vw}(G_v)-1,
  \max \{ \mvc_w(G_u)+\evc_{vw}(G_v)-1, \evc(G_u)+\mvc_v(G_v)\}\}$,
  the expression for $\evc_v(G)$ to be proved. By Lemma~\ref{lem:deg>2-12}, we have $\evc_v(G)\le Q$. 
  Since $\evc_v(G) \in \{\mvc_v(G),\mvc_v(G)+1\}$ by Proposition~\ref{prop:evc-subset}, it is enough to show that if $\evc_v(G)=\mvc_v(G)$, then 
  $Q \le \mvc_v(G)$. Assume 
 $\evc_v(G)=\mvc_v(G)$. Since $\evc_v(G)\ge \evc(G)=k_u+k_v$
 and $\mvc_v(G) \le \evc(G)= k_u+k_v$, we have 
 $\evc_v(G)=\mvc_v(G)=k_u+k_v$. 
By Lemma~\ref{lem:deg>2-3} Part~2, we get $\evc_{vw}(G_v)\le k_v+1$.
  If $\evc_w(G_u)=k_u$, then we get $Q \le k_u+k_v$, as 
  required. If this was not the case, then there exists a 
  vertex $x \in V(G_u)$ with $\mvc_{xw}(G_u)=k_u+1$. If
  $\mvc_v(G_v)=k_v$, then we get $Q \le k_u+k_v$ 
  immediately. Hence, we assume $\mvc_v(G_v)=k_v+1$. Then, 
  for the vertex $x \in V(G_u)$ for which $\mvc_{xw}(G_u)=k_u+1$, 
  we get $\mvc_{xv}(G)>k_u+k_v$. 
   Since $\evc_v(G) \ge \mvc_{xv}(G)$ 
  by Proposition~\ref{prop:evc-subset}, this is a 
  contradiction. Thus, $Q \le k_u+k_v$.

  Now, we will consider $\evc_{uv}(G)$. Let $P=\min \{\mvc_{uv}(G)+1, \evc_{uw}(G_u)+\evc_{vw}(G_v)-1, 
  \evc_{u}(G_u)+\max \{\evc_{vw}(G_v)-1, \mvc_v(G_v)\}\}$,
  the expression for $\evc_{uv}(G)$ to be proved. 
  By Lemma~\ref{lem:deg>2-2}, we get $\evc_{uv}(G) \le P$. 
  Now, we will 
  show that $\evc_{uv}(G) \ge P$. Since $\evc_{uv}(G) \in \{\mvc_{uv}(G), \mvc_{uv}(G)+1\}$ by Proposition~\ref{prop:evc-subset}, it is enough to show that if $\evc_{uv}(G)=\mvc_{uv}(G)$, then 
  $P \le \mvc_{uv}(G)$. Assume 
   $\evc_{uv}(G)=\mvc_{uv}(G)$. Since 
  $\evc_{uv}(G)\ge \evc(G)= k_u+k_v$
 and $\mvc_{uv}(G) \le k_u+k_v$, we know that 
 $\evc_{uv}(G)=\mvc_{uv}(G)=k_u+k_v$. By Lemma~\ref{lem:deg>2-3} Part~3, we get $\evc_{vw}(G_v)=k_v+1$. 
 If $\evc_{uw}(G_u)=k_u$, then it can be seen that $P\le k_u+k_v$. Hence, assume 
 $\evc_{uw}(G_u)=k_u+1$. Then, by Lemma~\ref{lem:deg>2-3} Part~3, we get $\mvc_v(G_v)=k_v$. Further, by Lemma~\ref{lem:deg>2-3} Part~3, it can be seen that $\mvc_{uw}(G_u)=k_u$. This implies $\mvc_{vw}(G_v)=k_v+1$ by Lemma~\ref{lem:computemvcfix_gen-1}. Hence, by Lemma~\ref{lem:deg>2-3}, we get $\evc_u(G_u)=k_u$. Therefore, we get $P \le k_u+k_v=\evc_{uv}(G)$.
  \end{proof}

  \begin{lemma}\label{lem:case2algo-2}
If $\evc(G_u)=k_u$, $\evc(G_v)=k_v+1$ and $\mvc(G)=evc(G)=k_u+k_v$, then:\\ $\evc_u(G)=\evc_u(G_u)+\evc_w(G_v)-1$,\\
$\evc_v(G)=\min \{\mvc_v(G)+1, \evc_w(G_u)+\evc_{vw}(G_v)-1,\\ \indent \indent \indent \indent \max \{ \mvc_w(G_u)+\evc_{vw}(G_v)-1, \evc(G_u)+\mvc_v(G_v)\}\}$\\
 $\evc_{uv}(G)= \min\{\evc(G)+1, \max\{\evc_u(G_u)+\mvc_v(G_v),\evc_u(G_u)+\evc_{vw}(G_v)-1\}\}$.
\end{lemma}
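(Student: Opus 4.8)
The plan is to follow the template of the three preceding lemmas: obtain each parameter's upper bound from the bound lemmas of Section~\ref{bounds}, its lower bound from Lemma~\ref{lem:deg>2-3}, and close the gap with the dichotomy $\evc_S(G)\in\{\mvc_S(G),\mvc_S(G)+1\}$ of Proposition~\ref{prop:evc-subset}. First I would record the consequences of the hypotheses. Since $\evc(G_u)=\mvc(G_u)=k_u$, Proposition~\ref{prop:evc-subset} forces $\mvc_x(G_u)=k_u$ for every $x\in V(G_u)$; in particular $\mvc_u(G_u)=\mvc_w(G_u)=k_u$, and the implication $\mvc_{uw}(G_u)=k_u+1\Rightarrow\evc_u(G_u)=k_u+1$ holds (take $x=w$ in $\evc_u(G_u)=\max_x\mvc_{xu}(G_u)$). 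Because $\mvc(G)=k_u+k_v=\mvc(G_u)+\mvc(G_v)$, Lemma~\ref{lem:computemvcfix_gen-2} computes $\mvc_u(G),\mvc_v(G),\mvc_{uv}(G)$, while Lemma~\ref{lem:computemvc2} yields the rigid relations $\mvc_{uw}(G_u)+\mvc_w(G_v)\ge k_u+k_v+1$ and $\mvc_w(G_u)+\mvc_{vw}(G_v)\ge k_u+k_v+1$. Finally, from $\evc(G_v)=k_v+1$ and monotonicity of $\evc$ we get $\evc_w(G_v)\ge k_v+1$, while Lemma~\ref{lem:deg>2-3} Part~1 gives $\evc(G)\ge\mvc(G_u)+\evc_w(G_v)-1$, i.e. $\evc_w(G_v)\le k_v+1$; hence $\evc_w(G_v)=k_v+1$, and one checks $\mvc_u(G)=k_u+k_v$, so $\evc_u(G)\in\{k_u+k_v,k_u+k_v+1\}$.

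For $\evc_u(G)=\evc_u(G_u)+\evc_w(G_v)-1=\evc_u(G_u)+k_v$ I would split on $\evc_u(G_u)$. When $\evc_u(G_u)=k_u$ the implication above gives $\mvc_{uw}(G_u)=k_u$, whence Proposition~\ref{prop:mvc_ux} (both parts) bounds $\max_{x}\mvc_{xu}(G)\le k_u+k_v$, so $\evc_u(G)\le k_u+k_v$; with $\evc_u(G)\ge\evc(G)=k_u+k_v$ this is an equality. When $\evc_u(G_u)=k_u+1$ the upper bound $\evc_u(G)\le\evc_{uw}(G_u)+k_v=k_u+k_v+1$ of Lemma~\ref{lem:deg>2-122} holds, and for the matching lower bound I would use the term $\evc_u(G_u)+\mvc_w(G_v)-1$ of Lemma~\ref{lem:deg>2-3} Part~2 when $\mvc_w(G_v)=k_v+1$, and otherwise (when $\mvc_w(G_v)=k_v$, which forces $\mvc_{uw}(G_u)=k_u+1$ by the structural relation) exhibit $y\in V(G_v)$ with $\mvc_y(G_v)=k_v+1$ — which exists since $\evc(G_v)=k_v+1>\mvc(G_v)$ — and argue $\mvc_{yu}(G)=k_u+k_v+1$ by splitting a witnessing cover on whether it contains $w$.

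The value of $\evc_v(G)$ is the \emph{same} expression, proved under the \emph{same} effective hypotheses ($\evc(G)=k_u+k_v$, $\evc(G_u)=k_u$, $\evc(G_v)=k_v+1$), so I would invoke verbatim the argument already given for $\evc_v(G)$ in Lemma~\ref{lem:case2algo-1}. For $\evc_{uv}(G)$ the upper bound is immediate: the quantity $Q$ in Lemma~\ref{lem:deg>2-2} Part~2 is exactly $\max\{\evc_u(G_u)+\mvc_v(G_v),\evc_u(G_u)+\evc_{vw}(G_v)-1\}$, so that lemma gives $\evc_{uv}(G)\le Q$, and Proposition~\ref{prop:occ_edge_vertices} gives $\evc_{uv}(G)\le\evc(G)+1$; together these are the claimed minimum. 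For the reverse inequality, $\evc_{uv}(G)\ge\evc(G)=k_u+k_v$, so it suffices to prove $\evc_{uv}(G)\ge k_u+k_v+1$ whenever the right-hand minimum equals $k_u+k_v+1$, i.e. $Q\ge k_u+k_v+1$. This occurs either because $\evc_u(G_u)=k_u+1$, or because $\evc_u(G_u)=k_u$ while $\mvc_v(G_v)=k_v+1$ or $\evc_{vw}(G_v)=k_v+2$; the last two are dispatched by $\mvc_{uv}(G)=k_u+k_v+1$ (Lemma~\ref{lem:computemvcfix_gen-2}) and by the term $\mvc_u(G_u)+\evc_{vw}(G_v)-1$ of Lemma~\ref{lem:deg>2-3} Part~3, and the first by the same explicit-cover construction as for $\evc_u$, using $\mvc_{uw}(G_u)=k_u+1$ (forced by the structural relation when $\mvc_{vw}(G_v)=k_v$).

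The main obstacle is precisely these residual sub-cases in the lower bounds, where the generic estimates of Lemma~\ref{lem:deg>2-3} fall one short of the target: there one cannot avoid building an explicit witnessing vertex cover and case-splitting on whether it contains $w$, and the argument only goes through because the identity $\mvc(G)=k_u+k_v$ forces, through Lemma~\ref{lem:computemvc2}, the relations $\mvc_{uw}(G_u)+\mvc_w(G_v)\ge k_u+k_v+1$ and $\mvc_w(G_u)+\mvc_{vw}(G_v)\ge k_u+k_v+1$ that convert ``$\mvc_w(G_v)=k_v$'' into ``$\mvc_{uw}(G_u)=k_u+1$'' (and symmetrically). Keeping track of which of $\evc_u(G_u),\evc_{uw}(G_u),\mvc_{uw}(G_u),\mvc_w(G_v),\mvc_v(G_v),\mvc_{vw}(G_v),\evc_{vw}(G_v)$ sits at its lower or upper value, and verifying the bookkeeping stays consistent, is the real work.
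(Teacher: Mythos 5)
Your proposal is correct and follows essentially the same route as the paper: the same upper bounds from Lemmas~\ref{lem:deg>2-122}, \ref{lem:deg>2-12} and \ref{lem:deg>2-2}, the same lower bounds and structural relations from Lemma~\ref{lem:deg>2-3} and Lemma~\ref{lem:computemvc2}, and the verbatim reuse of the $\evc_v$ argument of Lemma~\ref{lem:case2algo-1}; the only differences are organizational (you split on $\evc_u(G_u)$ where the paper splits on the value of $\evc_u(G)$, and you argue the $\evc_{uv}$ lower bound in contrapositive form). One small slip: in the final sub-case you force $\mvc_{uw}(G_u)=k_u+1$ from ``$\mvc_{vw}(G_v)=k_v$'', but under the stated hypotheses $\mvc_w(G_u)=k_u$ and $\mvc(G)=k_u+k_v$ already force $\mvc_{vw}(G_v)=k_v+1$, which closes that case immediately via the term $\evc_u(G_u)+\mvc_{vw}(G_v)-1$ of Lemma~\ref{lem:deg>2-3} Part~3 (or via $\evc_{uv}(G)\ge\evc_u(G)$ together with your already-proved formula for $\evc_u(G)$), so no explicit cover construction is needed there.
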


\begin{proof}

  First, we will show that 
  $\evc_u(G)=\evc_u(G_u)+\evc_w(G_v)-1$. Since $\evc(G)=k_u+k_v$, 
  by Lemma~\ref{lem:deg>2-3} Part~1, $\evc_w(G_v)\le k_v+1$, which implies $\evc_w(G_v)= k_v+1$. Further, by Proposition~\ref{prop:occ_edge_vertices}, $\evc_u(G) \in\{k_u+k_v, k_u+k_v+1\}$. Now, it is enough to  show that if $\evc_u(G)=\evc(G)=k_u+k_v$, then $\evc_u(G_u)=k_u$ and if $\evc_u(G)=\evc(G)+1=k_u+k_v+1$, then $\evc_u(G_u)=k_u+1$. First, suppose $\evc_u(G)=k_u+k_v$. By Lemma~\ref{lem:deg>2-3} Part~2, we get $\mvc_{uw}(G_u)=k_u$ and by Lemma~\ref{lem:computemvc2}, we get $\mvc_w(G_v)\ne k_v$. Hence, we have $\mvc_v(G_v)=k_v$.  
  For contradiction, suppose $\evc_u(G_u)=k_u+1$. Then, by Proposition~\ref{prop:evc-subset},  there exists a vertex $x \in V(G_u)$, such that 
   $\mvc_{xu}(G_u)= k_u+1$. Since $\mvc_w(G_v)=k_v+1$, this implies $\mvc_{xu}(G)\ge k_u+k_v+1$. By Proposition~\ref{prop:evc-subset}, this contradicts the value of $\evc_u(G)$. Hence, we get $\evc_u(G_u)=k_u$ as required. 
   Next, suppose $\evc_u(G)=k_u+k_v+1$. For contradiction, suppose $\evc_u(G_u)=k_u$. Then, by Proposition~\ref{prop:evc-subset}, we get $\mvc_{uw}(G_u)=k_u$.
   By Lemma~\ref{lem:deg>2-122}, we have
   $\evc_u(G) \le \max\{\evc_u(G_u)+\mvc(G_v), \mvc_{uw}(G_u)+\evc_w(G_v)-1\}=k_u+k_v$, which is a contradiction. Hence, $\evc_u(G_u)=k_u+1$ and we are done.
  
  The proof of the expression for $\evc_v(G)$ is  exactly the same as that in the proof of Lemma~\ref{lem:case2algo-1}.
  
    Now we will look at $\evc_{uv}(G)$. 
  Let $P=  \max\{\evc_u(G_u)+\mvc_v(G_v),\evc_u(G_u)+\evc_{vw}(G_v)-1\}$. 
  We need to show that $\evc_{uv}(G)=\min\{\evc(G)+1, P\}$. By Lemma~\ref{lem:deg>2-2}, we get $\evc_{uv}(G)\le \min\{\evc(G)+1, P\}$. Now, 
  we will show that $\evc_{uv}(G)\ge \min\{\evc(G)+1, P\}$. If 
  $\evc_{uv}(G)=\evc(G)+1$, we are done. Therefore, consider the case when 
  $\evc_{uv}(G)=\evc(G)=k_u+k_v$.
   In this case, we show that  
  $P \le k_u+k_v$. 
  Since $\mvc(G)=k_u+k_v$ and $\mvc_w(G_u)=k_u$ by  Proposition~\ref{prop:evc-subset}, we get $\mvc_{vw}(G_v)=k_v+1$ by Lemma~\ref{lem:computemvc2}. Hence, by Lemma~\ref{lem:deg>2-3} Part~3, we get $\evc_u(G_u)=k_u$. Moreover, we also get $\evc_{vw}(G_v)=k_v+1$ and $\mvc_{uw}(G_u)=k_u$ by Lemma~\ref{lem:deg>2-3} Part~3. Therefore, since $\mvc(G)=k_u+k_v$, we  know $\mvc_w(G_v)\ne k_v$. This implies $\mvc_v(G_v)=k_v$. Thus, we get $P\le k_u+k_v$.
  \end{proof}
\begin{lemma}\label{lem:case2algo-3}
If $\evc(G_u)=k_u$, $\evc(G_v)=k_v+1$ and  $\evc(G)=k_u+k_v+1$, then\\ $\evc_u(G)=\evc_v(G)=\evc(G)$ and
$\evc_{uv}(G)=\min \{\mvc_{uv}(G)+1,\evc(G_u)+\evc_{v}(G_v)\} $.
\end{lemma}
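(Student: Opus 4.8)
The plan is to prove the three equalities separately, each by sandwiching the quantity between a lower bound from Lemma~\ref{lem:deg>2-3} and an upper bound from the bounding lemmas and propositions of Section~\ref{bounds}. First I would record the consequences of the hypotheses that get used repeatedly. Since $\evc(G_u)=k_u$, Proposition~\ref{prop:evc-subset} gives $\mvc_x(G_u)=k_u$ for every $x\in V(G_u)$; in particular $\mvc_u(G_u)=\mvc_w(G_u)=k_u$. Moreover, since $\evc(G)=k_u+k_v+1$ while Observation~\ref{obs:computemvc1} forces $\mvc(G)\le k_u+k_v$, the expression in Lemma~\ref{lem:algoevc-2} (whose first term is $\mvc(G)+1$) shows $\mvc(G)=k_u+k_v$; hence by Lemma~\ref{lem:computemvcfix_gen-2}, $\mvc_u(G)=\mvc(G)=k_u+k_v$ and $\mvc_{uv}(G)\le\mvc(G)+1=k_u+k_v+1$.

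For $\evc_u(G)=\evc(G)$: the lower bound $\evc_u(G)\ge\evc(G)$ is Lemma~\ref{lem:deg>2-3} Part~2. For the matching upper bound I would apply Proposition~\ref{prop:mvc_ux}: for $x\in V(G_u)$ it gives $\mvc_{xu}(G)\le\evc_u(G_u)+\mvc(G_v)\le(k_u+1)+k_v$ (using $\evc_u(G_u)\le\evc(G_u)+1$ from Proposition~\ref{prop:occ_edge_vertices}), and for $x\in V(G_v)$ it gives $\mvc_{xu}(G)\le\mvc_u(G_u)+\evc(G_v)=k_u+(k_v+1)$, so $\evc_u(G)=\max_x\mvc_{xu}(G)\le k_u+k_v+1=\evc(G)$ by Proposition~\ref{prop:evc-subset}. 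The equality $\evc_v(G)=\evc(G)$ is handled the same way, now through Proposition~\ref{prop:mvc_vx} together with $\mvc_w(G_u)=k_u$, $\mvc_v(G_v)\le\mvc(G_v)+1=k_v+1$ and $\evc_{vw}(G_v)\le\evc(G_v)+1=k_v+2$; note this is not merely symmetric to the previous step because the hypotheses on $G_u$ and $G_v$ are asymmetric.

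For $\evc_{uv}(G)=\min\{\mvc_{uv}(G)+1,\evc(G_u)+\evc_v(G_v)\}$, the upper bound is immediate from Lemma~\ref{lem:deg>2-2}: Part~1 gives $\evc_{uv}(G)\le\mvc_{uv}(G)+1$ and Part~2 gives $\evc_{uv}(G)\le\evc(G_u)+\evc_v(G_v)$. For the lower bound I would use $\evc_{uv}(G)\in\{\mvc_{uv}(G),\mvc_{uv}(G)+1\}$ from Proposition~\ref{prop:evc-subset}; the case $\evc_{uv}(G)=\mvc_{uv}(G)+1$ is trivial, since the right-hand side then cannot exceed $\mvc_{uv}(G)+1$. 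The substantive case is $\evc_{uv}(G)=\mvc_{uv}(G)$, where it suffices to prove $\evc(G_u)+\evc_v(G_v)=k_u+\evc_v(G_v)\le\mvc_{uv}(G)$.

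The hard part will be exactly this last inequality, i.e. ruling out $\evc_v(G_v)=k_v+2$, and the delicate point is that the global value $\evc(G)=k_u+k_v+1$ must be invoked: the purely local bounds of Lemma~\ref{lem:deg>2-3} Part~3 only yield $\evc_v(G_v)\le k_v+2$. The chain I expect to use is the following. In this case $\evc_{uv}(G)=\mvc_{uv}(G)\ge\evc(G)=k_u+k_v+1$ while $\mvc_{uv}(G)\le\mvc(G)+1=k_u+k_v+1$, so $\mvc_{uv}(G)=k_u+k_v+1$; by Lemma~\ref{lem:computemvcfix_gen-2} this forces $\mvc_v(G_v)=k_v+1$, hence $\mvc_w(G_v)=k_v$, and therefore $\evc_w(G_v)\le\mvc_w(G_v)+1=k_v+1$ by Proposition~\ref{prop:evc-subset}. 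Feeding $\evc_w(G_v)\le k_v+1$ into the formula of Lemma~\ref{lem:algoevc-2}, whose value is $k_u+k_v+1$, shows that the term $\mvc_{uw}(G_u)+\evc_w(G_v)-1$ is at least $k_u+k_v+1$, which forces $\mvc_{uw}(G_u)\ge k_u+1$. Finally, Lemma~\ref{lem:deg>2-3} Part~3 gives $\mvc_{uw}(G_u)+\evc_v(G_v)-1\le\evc_{uv}(G)=k_u+k_v+1$, so $k_u+\evc_v(G_v)\le\evc_{uv}(G)$ and hence $\evc_v(G_v)\le k_v+1$, completing the lower bound.
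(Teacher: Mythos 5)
Your proof is correct and follows essentially the same route as the paper's: the same sandwich between the lower bounds of Lemma~\ref{lem:deg>2-3} and the upper bounds of Propositions~\ref{prop:mvc_vx}--\ref{prop:mvc_ux} and Lemma~\ref{lem:deg>2-2}, with the only substantive work in both being to rule out the combination $\evc_{uv}(G)=\mvc_{uv}(G)=k_u+k_v+1$ with $\evc_v(G_v)=k_v+2$. The sole local difference is in that final case: the paper argues directly on a minimum vertex cover through a vertex $x$ with $\mvc_{xv}(G_v)=k_v+2$ to force $\mvc_{uw}(G_u)=k_u$ and hence $\mvc(G)=k_u+k_v-1$, a contradiction, whereas you derive $\mvc_{uw}(G_u)=k_u+1$ from the formula of Lemma~\ref{lem:algoevc-2} and then apply Lemma~\ref{lem:deg>2-3} Part~3 to bound $\evc_v(G_v)$ --- both are valid and draw on the same toolbox.
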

\begin{proof}
  First, we will show that $\evc_u(G)=\evc(G)$. 
  Since we know that $\evc_u(G)\ge \evc(G)$ by Lemma~\ref{lem:deg>2-3} Part~2, it is enough to
  show that $\evc_u(G)\le \evc(G)$. 
  For any vertex $x \in V(G_u)$, we have $\mvc_x(G_u)=k_u$ by Proposition~\ref{prop:evc-subset} and 
  hence $\mvc_{xu}(G_u) \le k_u+1$. This implies 
  $\mvc_{xu}(G) \le k_u+k_v+1=\evc(G)$. Similarly, 
  for any vertex $y \in V(G_v)$, we have $\mvc_y(G_v)\le k_v+1$.
  Hence, $\mvc_{yu}(G)\le k_u+k_v+1$.
  Therefore, by Proposition~\ref{prop:evc-subset},
   $\evc_u(G) \le k_u+k_v+1$. Hence, $\evc_u(G)=\evc(G)$.
   
   Next, we will show that $\evc_v(G)=\evc(G)$. 
  Since we know that $\evc_v(G)\ge \evc(G)$ by Lemma~\ref{lem:deg>2-3} Part~2, it is enough 
  to show that $\evc_v(G)\le \evc(G)$. 
  For any vertex $x \in V(G_u)$, we have $\mvc_x(G_u)=k_u$ by Proposition~\ref{prop:evc-subset} and 
  hence $\mvc_{xv}(G) \le k_u+k_v+1=\evc(G)$.
   Similarly, for any vertex $y \in V(G_v)$, 
  $\mvc_{yvw}(G_v)\le k_v+2$. Since $\mvc_w(G_u)=k_u$, this gives 
  $\mvc_{yv}(G)\le k_u+k_v+1$. Therefore, by Proposition~\ref{prop:evc-subset},
   $\evc_v(G) \le k_u+k_v+1$. Hence, $\evc_v(G)=\evc(G)$. 
   
    Now, we will show that $\evc_{uv}(G)=\min \{\mvc_{uv}(G)+1,\evc(G_u)+\evc_{v}(G_v)\}$. By Lemma~\ref{lem:deg>2-2}, we get $\evc_{uv}(G)\le \min \{\mvc_{uv}(G)+1,\evc(G_u)+\evc_{v}(G_v)\}$.  Next, we show that $\evc_{uv}(G)\ge \min \{\mvc_{uv}(G)+1,\evc(G_u)+
    \evc_{v}(G_v)\}$. If $\mvc_{uv}(G)=k_u+k_v$ or $\evc_v(G_v)=k_v+1$, this is
    trivial. So, we are left with the case when $\mvc_{uv}(G)=k_u+k_v+1$
    and $\evc_v(G_v)=k_v+2$. 
    For contradiction, suppose $\evc_{uv}(G)\le k_u+k_v+1$.    
    By Proposition~\ref{prop:evc-subset}, 
    there exists a vertex 
    $x \in V(G_v)$ such that $\mvc_{xv}(G_v)=k_v+2$ and
    this implies $\mvc_{xuv}(G)\ge k_u+k_v+1$ and $\mvc_{uw}(G_u)=k_u$. 
    Since $\mvc_{uv}(G)=k_u+k_v+1$, we get $\mvc_v(G_v)=k_v+1$ 
    and $\mvc_w(G_v)=k_v$. This implies $\mvc(G)=k_u+k_v-1$ and 
    $\mvc_{uv}(G)\le k_u+k_v$, a contradiction. Therefore, 
    $\evc_{uv}(G)\ge k_u+k_v+2$, as required. 
    Hence, $\evc_{uv}(G)= \min \{\mvc_{uv}(G)+1,\evc(G_u)+\evc_{v}(G_v)\}$. 
\end{proof}\qed

\begin{lemma}\label{lem:case3algo}
 If $\evc(G_u)=k_u+1$ and 
  $\evc(G_v)=k_v+1$, then \\ \indent$\evc_u(G)=\min \{ \mvc_u(G)+1,\evc_u(G_u)+\evc_w(G_v)-1\}$ and \\
  \indent $\evc_v(G)=\min \{ \mvc_v(G)+1,\evc_v(G_v)+\evc_w(G_u)-1\}$.
\end{lemma}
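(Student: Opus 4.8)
By symmetry between $u$ and $v$ (interchanging the roles of $G_u$ and $G_v$), it suffices to establish the formula for $\evc_u(G)$; the one for $\evc_v(G)$ is identical. Write $A=\mvc_u(G)+1$ and $B=\evc_u(G_u)+\evc_w(G_v)-1$, so the goal is $\evc_u(G)=\min\{A,B\}$. I will repeatedly use that, since $\evc(G_u)=k_u+1=\mvc(G_u)+1$ and $\evc(G_v)=k_v+1=\mvc(G_v)+1$, Proposition~\ref{prop:evc-subset} gives $\evc_u(G_u)\ge\evc(G_u)=k_u+1\ge\mvc_{uw}(G_u)$ and $\evc_w(G_v)\ge k_v+1>\mvc(G_v)$. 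The upper bound $\evc_u(G)\le A$ is Proposition~\ref{prop:evc-subset} (or Lemma~\ref{lem:deg>2-122}, Part~1). For $\evc_u(G)\le B$ I invoke Lemma~\ref{lem:deg>2-122}, Part~2: its quantity $U$ is at most $\evc_u(G_u)+\mvc(G_v)\le\evc_u(G_u)+\evc_w(G_v)-1=B$, and its quantity $V$ is at most $\mvc_{uw}(G_u)+\evc_w(G_v)-1\le\evc_u(G_u)+\evc_w(G_v)-1=B$, so $\evc_u(G)\le\max\{U,V\}\le B$.

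For the lower bound, Proposition~\ref{prop:evc-subset} gives $\evc_u(G)\in\{\mvc_u(G),\mvc_u(G)+1\}$. If $\evc_u(G)=A$, then together with $\evc_u(G)\le\min\{A,B\}\le A$ we get $\evc_u(G)=\min\{A,B\}$; so it remains to handle $\evc_u(G)=\mvc_u(G)$, where I must prove $B\le\mvc_u(G)$. The key tool is a witness $q\in V(G_v)$ with $\mvc_q(G_v)=k_v+1$, which exists because $\evc(G_v)=\mvc(G_v)+1$. Fixing $q$ and $u$, a vertex cover of $G$ has size at least $\mvc_{uw}(G_u)+\mvc_{qw}(G_v)-1$ when it contains $w$ and at least $\mvc_u(G_u)+\mvc_{qv}(G_v)$ otherwise; since $\mvc_{qw}(G_v),\mvc_{qv}(G_v)\ge k_v+1$ and $\mvc_u(G_u)\ge\mvc_{uw}(G_u)-1$, both are at least $\mvc_{uw}(G_u)+k_v$. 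Hence $\evc_u(G)\ge\mvc_{qu}(G)\ge\mvc_{uw}(G_u)+\mvc(G_v)$.

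Now split on $\mvc(G)$, which by Observation~\ref{obs:computemvc1} equals $k_u+k_v-1$ or $k_u+k_v$. If $\mvc(G)=k_u+k_v-1$, Lemma~\ref{lem:computemvcfix_gen-1} gives $\mvc_u(G)=\mvc_{uw}(G_u)+k_v-1$, so the witness bound yields $\evc_u(G)\ge\mvc_u(G)+1$, contradicting $\evc_u(G)=\mvc_u(G)$; thus in this case $\evc_u(G)=A=\min\{A,B\}$. If $\mvc(G)=k_u+k_v$, Lemma~\ref{lem:computemvcfix_gen-2} gives $\mvc_u(G)\le\mvc_u(G_u)+k_v$, so the witness bound forces $\mvc_u(G)=\mvc_u(G_u)+k_v=\mvc_{uw}(G_u)+k_v$ (in particular $\mvc_{uw}(G_u)=\mvc_u(G_u)$). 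In this case $B\le\mvc_u(G)$ is exactly $(\evc_u(G_u)-\mvc_u(G_u))+(\evc_w(G_v)-k_v)\le 1$, whereas the term $\evc_u(G_u)+\mvc_w(G_v)-1$ of Lemma~\ref{lem:deg>2-3}, Part~2, already gives $(\evc_u(G_u)-\mvc_u(G_u))+(\mvc_w(G_v)-k_v)\le 1$. These coincide whenever $\evc_w(G_v)=\mvc_w(G_v)$, which settles the claim in that situation.

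The main obstacle is the remaining \emph{gap} situation $\evc_w(G_v)=\mvc_w(G_v)+1$ (which is forced when $\mvc_w(G_v)=k_v$), where the required inequality additionally demands $\evc_u(G_u)=\mvc_u(G_u)$ and $\mvc_w(G_v)=k_v$. No single witness vertex of $G$ can certify a cover of size $B$ here: a vertex of $V(G_u)$ cannot push the $G_v$-side beyond $\mvc_w(G_v)$, and a vertex of $V(G_v)$ cannot push the $G_u$-side beyond $\mvc_{uw}(G_u)$, so the argument cannot be closed simply by exhibiting a witness. Instead one must show that such configurations are incompatible with the hypotheses $\evc(G_u)=\mvc(G_u)+1$ and $\evc(G_v)=\mvc(G_v)+1$ -- concretely, that for a maximal outerplanar segment a strict gap $\evc>\mvc$ both prevents the interface endpoint from lying in a minimum vertex cover and prevents fixing that endpoint from raising the eternal cover number by two. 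I expect this structural fact about segments, rather than any further manipulation of the parameter inequalities, to be the crux; I would isolate and prove it first and then feed it into the case analysis above.
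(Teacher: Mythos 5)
Your upper bound argument and the reduction of the lower bound to the case $\evc_u(G)=\mvc_u(G)$ are correct and essentially match the paper's. But the proof is not complete: you explicitly leave open the ``gap'' situation, and that situation is exactly the heart of the lemma, so what you have is a genuine gap rather than a finished alternative proof. Moreover, the structural fact you conjecture is needed (that a segment with $\evc>\mvc$ cannot have its interface endpoint in a minimum vertex cover, etc.) is not the right missing ingredient: the resolution does not come from an intrinsic property of the segments but from transferring the global constraint $\evc_u(G)=\mvc_u(G)$ across the cut vertex $w$.

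Concretely, in your remaining case one has $\mvc(G)=k_u+k_v$ and $\evc_u(G)=\mvc_u(G)$. If $\mvc_u(G)=k_u+k_v$, your own derived equalities $\mvc_{uw}(G_u)=\mvc_u(G_u)=k_u$ combined with the lower bound $\evc_u(G)\ge \evc_u(G_u)+\mvc_w(G_v)-1\ge k_u+1+\mvc_w(G_v)-1$ (Lemma~\ref{lem:deg>2-3}, Part~2) force $\mvc_w(G_v)=k_v$, and then $\mvc_u(G)\le \mvc_{uw}(G_u)+\mvc_w(G_v)-1=k_u+k_v-1$, a contradiction -- so this subcase is vacuous, a point your write-up does not register. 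In the surviving subcase $\mvc_u(G)=k_u+k_v+1$, Lemma~\ref{lem:computemvcfix_gen-2} forces $\mvc_w(G_v)=k_v+1$, and then Lemma~\ref{lem:deg>2-3}, Part~2 gives $\evc_u(G_u)=k_u+1$ and (arguing as in its proof) $\mvc_u(G_u)=k_u+1$. The step you could not close, showing $\evc_w(G_v)=k_v+1$, is done in the paper by a counting argument over \emph{all} witnesses rather than one: for every $x\in V(G_v)$, Proposition~\ref{prop:evc-subset} supplies a vertex cover $S$ of $G$ with $x,u\in S$ and $|S|=\evc_u(G)=k_u+k_v+1$; since $S\cap V(G_u)$ is a vertex cover of $G_u$ containing $u$, it has at least $\mvc_u(G_u)=k_u+1$ vertices, leaving at most $k_v+1$ vertices on the $G_v$ side (adding $w$ if necessary), so $\mvc_{xw}(G_v)\le k_v+1$ and hence $\evc_w(G_v)\le k_v+1$ by Proposition~\ref{prop:evc-subset}. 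This yields $\evc_u(G_u)+\evc_w(G_v)-1=k_u+k_v+1=\mvc_u(G)$, which is precisely the inequality $B\le \mvc_u(G)$ you were missing. Your instinct that ``no single witness vertex can certify a cover of size $B$'' is what led you astray: the needed direction is an upper bound on $\evc_w(G_v)$, obtained by restricting covers of $G$, not a lower bound on $\evc_u(G)$ obtained by exhibiting one.
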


\begin{proof}
 Since the statement of the lemma is symmetric with respect to $u$ and $v$, we will only prove the
 expression for $\evc_u(G)$.
 We first show that $\evc_u(G) \le \min \{ \mvc_u(G)+1,\evc_u(G_u)+\evc_w(G_v)-1\}$. Since $\evc_u(G) \le \mvc_u(G)+1$ by Proposition~\ref{prop:evc-subset}, 
 it is enough to show that $\evc_u(G) \le \evc_u(G_u)+\evc_w(G_v)-1$. By Proposition~\ref{prop:mvc_ux}, we get $\max_{x \in V(G_u)}\mvc_{ux}(G) \le \evc_u(G_u)+\mvc(G_v) \le \evc_u(G_u)+\evc_w(G_v)-1$ and $\max_{x \in V(G_v)}\mvc_{ux}(G) \le \mvc_{uw}(G_u)+\evc_w(G_v)-1 \le \evc_u(G_u)+\evc_w(G_v)-1$. 
 Hence, by Proposition~\ref{prop:evc-subset}, we get $\evc_u(G)\le \evc_u(G_u)+\evc_w(G_v)-1$. 
 
  Next, we need to show that $\evc_u(G)\ge \min \{ \mvc_u(G)+1,\evc_u(G_u)+\evc_w(G_v)-1\}$. By Proposition~\ref{prop:evc-subset}, it is enough to show that when $\evc_u(G_u)=\mvc_u(G)$, then $\evc_u(G)\ge \evc_u(G_u)+\evc_w(G_v)-1$.
  Suppose $\evc_u(G)=\mvc_u(G)$. Since $\mvc_u(G) \le \evc(G) \le \evc_u(G)$, we have $\mvc_u(G) = \evc(G) = \evc_u(G)$. By Observation~\ref{obs:computemvc1} and Lemma~\ref{lem:algoevc-3}, we have $\evc(G) \in \{k_u+k_v,k_u+k_v+1\}$.  First, suppose $\evc_u(G)=k_u+k_v$. We will show a contradiction in this case. By Lemma~\ref{lem:deg>2-3} Part~2, we get $\mvc_w(G_v)=k_v$ and $\mvc_{uw}(G_u)=k_u$. This implies $\mvc_u(G)=k_u+k_v-1$, which is a contradiction.
 
 Next, suppose $\evc_u(G)=k_u+k_v+1$. In this case, it is enough to show that $\evc_u(G_u)=k_u+1$
 and $\evc_w(G_v)=k_v+1$.
 We have $\mvc_{uw}(G_u) \le k_u+1$. If $\mvc_w(G_v)=k_v$, we get $\mvc_u(G)=k_u+k_v$,
 a contradiction. Therefore, $\mvc_w(G_v)=k_v+1$. 
 By Lemma~\ref{lem:deg>2-3} Part~2, we get $\evc_u(G_u)=k_u+1$. If $\mvc_u(G_u)=k_u$, we get $\mvc_u(G)=k_u+k_v$, a contradiction. Therefore, $\mvc_u(G_u)=k_u+1$.
 Consider any vertex $x \in V(G_v)$. By Proposition~\ref{prop:evc-subset}, we have $\mvc_{xu}(G) \le k_u+k_v+1$. Since, $\mvc_u(G_u)=k_u+1$, it can
 be verified that 
 $\mvc_{xw}(G_v) \le k_v+1$. Hence, by Proposition~\ref{prop:evc-subset}, $\evc_w(G_v)=k_v+1$, as required.
\end{proof}

\begin{lemma}\label{lem:case4algo}
 If $\evc(G_u)=k_u+1$ and $\evc(G_v)=k_v+1$, then: 
  \begin{enumerate}
   \item If $\mvc_{uv}(G) \le k_u+k_v$ then $\evc_{uv}(G)=\mvc_{uv}(G)+1$.
   \item Otherwise, $\evc_{uv}(G)=min\{\mvc_{uv}(G)+1, max\{P_1, P_2\}\}$, where \\   $P_1=\min\{\evc_{uw}(G_u)+\mvc_{vw}(G_v)-1, \evc_u(G_u)+\mvc_v(G_v)\}$ and\\ $P_2=\min\{\mvc_{uw}(G_u)+\evc_{vw}(G_v)-1, \mvc_u(G_u)+\evc_v(G_v)\}$.
     \end{enumerate}   
\end{lemma}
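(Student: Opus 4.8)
The plan is to combine the upper bound already available from Lemma~\ref{lem:deg>2-2} Part~1 with a case-specific lower bound argument, in the spirit of the sibling Lemmas~\ref{lem:case1algo-1}--\ref{lem:case2algo-3}. Lemma~\ref{lem:deg>2-2} Part~1 yields $\evc_{uv}(G) \le \min\{\mvc_{uv}(G)+1, \max\{P_1,P_2\}\}$, so in both parts only the matching lower bound remains. I will repeatedly use two elementary facts: by Proposition~\ref{prop:evc-subset}, $\evc_{uv}(G) \in \{\mvc_{uv}(G), \mvc_{uv}(G)+1\}$ and $\evc_{uv}(G)=\max_{x \in V(G)}\mvc_{xuv}(G)$; and since $uv \in E(G)$, every minimum vertex cover of $G$ contains $u$ or $v$, whence $\mvc_{uv}(G) \le \mvc(G)+1$. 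With Observation~\ref{obs:computemvc1} this pins $\mvc_{uv}(G)$ to $\{k_u+k_v-1, k_u+k_v, k_u+k_v+1\}$, so Part~1 covers the first two values and Part~2 the last.

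For Part~1 I would first record, from Lemma~\ref{lem:deg>2-3} Part~1 together with $\evc_w(G_v) \ge \evc(G_v)=k_v+1$, that $\evc(G) \ge k_u+k_v$, and hence $\evc_{uv}(G) \ge \evc(G) \ge k_u+k_v$. If $\mvc_{uv}(G)=k_u+k_v-1$ this squeezes $\evc_{uv}(G)$ between $k_u+k_v$ and $\mvc_{uv}(G)+1=k_u+k_v$, giving the claim. The interesting subcase is $\mvc_{uv}(G)=k_u+k_v$, where I must exhibit a vertex forcing cost $k_u+k_v+1$. Using $\evc(G_u)=k_u+1$ and $\evc(G_v)=k_v+1$, Proposition~\ref{prop:evc-subset} supplies $a \in V(G_u)$ with $\mvc_a(G_u)=k_u+1$ and $b \in V(G_v)$ with $\mvc_b(G_v)=k_v+1$. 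Splitting a cover of $G$ at the shared vertex $w$ shows $\mvc_{auv}(G) \ge \min\{k_u+\mvc_{vw}(G_v), k_u+k_v+1\}$ and $\mvc_{buv}(G) \ge \min\{\mvc_{uw}(G_u)+k_v, k_u+k_v+1\}$. Thus $a$ fails to witness $k_u+k_v+1$ only when $\mvc_{vw}(G_v)=k_v$, and $b$ only when $\mvc_{uw}(G_u)=k_u$; but if both equalities held, gluing optimal covers of $G_u$ and $G_v$ at $w$ gives $\mvc(G) \le k_u+k_v-1$, forcing $\mvc(G)=k_u+k_v-1$ and, by Lemma~\ref{lem:computemvcfix_gen-1}, $\mvc_{uv}(G)=\mvc_{uw}(G_u)+\mvc_{vw}(G_v)-1=k_u+k_v-1$, contradicting $\mvc_{uv}(G)=k_u+k_v$. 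Hence some vertex witnesses $\evc_{uv}(G) \ge k_u+k_v+1=\mvc_{uv}(G)+1$.

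For Part~2, $\mvc_{uv}(G)=k_u+k_v+1$ forces $\mvc(G)=k_u+k_v$. Following the sibling proofs, since $\evc_{uv}(G) \in \{\mvc_{uv}(G), \mvc_{uv}(G)+1\}$, it suffices to show that if $\evc_{uv}(G)=\mvc_{uv}(G)=k_u+k_v+1$ then $\max\{P_1,P_2\} \le k_u+k_v+1$ (the reverse inequality $\max\{P_1,P_2\} \ge \mvc_{uv}(G)$ being automatic from the upper bound $\evc_{uv}(G) \le \max\{P_1,P_2\}$ of Proposition~\ref{prop:mvc_uvx}). Assuming $\evc_{uv}(G)=k_u+k_v+1$, the four lower bounds of Lemma~\ref{lem:deg>2-3} Part~3 give $\evc_{uw}(G_u)+\mvc_v(G_v) \le k_u+k_v+2$, $\evc_u(G_u)+\mvc_{vw}(G_v) \le k_u+k_v+2$, and their two $u\leftrightarrow v$ symmetric counterparts. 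To bound $P_1=\min\{\evc_{uw}(G_u)+\mvc_{vw}(G_v)-1, \evc_u(G_u)+\mvc_v(G_v)\}$, I would assume $P_1 \ge k_u+k_v+2$ and derive a contradiction, using the monotonicity relations $\evc_u(G_u) \le \evc_{uw}(G_u)$ and $\mvc_v(G_v) \le \mvc_{vw}(G_v) \le \mvc_v(G_v)+1$ together with the ceilings $\evc_u(G_u) \le \evc(G_u)+1=k_u+2$ and $\mvc_v(G_v) \le \evc(G_v)=k_v+1$: a short analysis forces $\mvc_{vw}(G_v)$ one unit higher than the two crossed inequalities allow, a contradiction. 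By the $u\leftrightarrow v$ symmetry the same argument bounds $P_2$, so $\max\{P_1,P_2\}=k_u+k_v+1$ as required.

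The main obstacle is this final step of Part~2: the two summands of each $P_i$ pair segment parameters ($\evc_{uw}(G_u)$ with $\mvc_{vw}(G_v)$, and $\evc_u(G_u)$ with $\mvc_v(G_v)$) that are crossed relative to the inequalities supplied by Lemma~\ref{lem:deg>2-3} Part~3, so $P_i \le k_u+k_v+1$ cannot be read off directly and must be extracted through the monotonicity relations among $\mvc_v, \mvc_{vw}$ and $\evc_u, \evc_{uw}$. Once this case check is carried out on one side, the other follows by symmetry, and assembling the pieces with the Lemma~\ref{lem:deg>2-2} upper bound completes both parts.
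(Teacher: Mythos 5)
Your proposal is correct and follows essentially the same route as the paper: the upper bound from Lemma~\ref{lem:deg>2-2} Part~1, the dichotomy $\evc_{uv}(G)\in\{\mvc_{uv}(G),\mvc_{uv}(G)+1\}$ from Proposition~\ref{prop:evc-subset}, and the crossed lower bounds of Lemma~\ref{lem:deg>2-3} Part~3 to rule out $\evc_{uv}(G)=\mvc_{uv}(G)$ in Part~1 and to cap $\max\{P_1,P_2\}$ in Part~2 (your Part~1 witness construction just unpacks the same splitting-at-$w$ argument that proves that lemma). The one step you leave compressed does close: if $P_1\ge k_u+k_v+2$, summing its two defining inequalities against the two crossed bounds $\evc_{uw}(G_u)+\mvc_v(G_v)-1\le\evc_{uv}(G)$ and $\evc_u(G_u)+\mvc_{vw}(G_v)-1\le\evc_{uv}(G)$ yields an immediate contradiction, matching the paper's conclusion.
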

\begin{proof} 
\begin{enumerate}
 \item Suppose $\mvc_{uv}(G) \le k_u+k_v$. 
 By Proposition~\ref{prop:evc-subset}, $\evc_{uv}(G) \in \{\mvc_{uv}(G),\mvc_{uv}(G)+1\}$. Therefore it is enough to show that $\evc_{uv}(G) \ne \mvc_{uv}(G)$. For contradiction, suppose this was not true. By Lemma~\ref{lem:deg>2-3} Part~1, we get $\evc(G) \ge k_u+k_v$. Therefore, $\evc_{uv}(G) \ge k_u+k_v$.  Hence, we may assume that $\mvc_{uv}(G)=\evc_{uv}(G)=k_u+k_v$. By Lemma~\ref{lem:deg>2-3} Part~3, we get $\mvc_{uw}(G_u)=k_u$ and $\mvc_{vw}(G_v)=k_v$. This gives $\mvc_{uv}(G)=k_u+k_v-1$, a contradiction. Hence, $\evc_{uv}(G)=\mvc_{uv}(G)+1$.
\item Suppose $\mvc_{uv}(G) = k_u + k_v + 1$. By Lemma~\ref{lem:deg>2-2}, we know $\evc_{uv}(G) \le \min\{\mvc_{uv}(G)+1, \max(P_1, P_2)\}$.
Now, we will show that $\evc_{uv}(G) \ge \min(\mvc_{uv}(G)+1, \max(P_1, P_2))$. Recall that we are in a case when $\mvc_{uv}(G)= k_u+k_v+1$ and by Proposition~\ref{prop:evc-subset}
$\evc_{uv}(G) \in \{\mvc_{uv}(G), \mvc_{uv}(G)+1\}$. If $\evc_{uv}(G) = \mvc_{uv}(G)+1$, we are done. Therefore we are left with the case where
$\evc_{uv}(G)=\mvc_{uv}(G)=k_u+k_v+1$. In this case $\mvc_{uw}(G_u)=k_u+1$ and $\mvc_{vw}(G_v)=k_v+1$.
It is enough to show that $\max(P_1,P_2) \le k_u+k_v+1$. For contradiction, suppose $P_1 > k_u+k_v+1$.
This implies $\evc_{uw}(G_u)=k_u+2$. By Lemma~\ref{lem:deg>2-3} Part~3,  we get $\mvc_v(G_v)=k_v$, $\evc_v(G_v)=k_v+1$ and $\evc_u(G_u)=k_u+1$. From this, we get $P_1 \le k_u+k_v+1$, a contradiction.

Symmetrically, we get contradiction when $P_2 > k_u+k_v+1$. Therefore, $\max(P_1,P_2) \le k_u+k_v+1$.\qed
\end{enumerate}
\end{proof}
From Lemma~\ref{lem:algoevc-1}-\ref{lem:case4algo}, the following theorem is immediate. 
 
 \begin{theorem}\label{thm:vc_evc_general-2}
 Let $G$ be a maximal outerplanar graph and $uv$ be an edge on the outer face of $G$ such that $deg_G(u)>2$ and $deg_G(v)>2$.  Let $w=\Delta(uv)$.
 \begin{enumerate}
    \item Given $\mathscr{M}(G_u, uw)$, $\mathscr{E}(G_u, uw)$, $\mathscr{M}(G_v, vw)$ and $\mathscr{E}(G_v, vw)$, 
 it is possible to compute $\evc(G)$ in constant time.
 \item  Given $\evc(G)$, $\mathscr{M}(G, uv)$, $\mathscr{M}(G_u, uw)$, $\mathscr{E}(G_u, uw)$, $\mathscr{M}(G_v, vw)$ and $\mathscr{E}(G_v, vw)$, it is possible to compute the remaining evc parameters of $G$ with respect to $uv$ in constant time.
  \end{enumerate}
 \end{theorem}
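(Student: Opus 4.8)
The plan is to derive the theorem as a pure bookkeeping consequence of Lemmas~\ref{lem:algoevc-1}--\ref{lem:case4algo}, organised by a complete case analysis on the pair $(\evc(G_u), \evc(G_v))$. By Proposition~\ref{prop:evc-subset} we have $\evc(G_u) \in \{k_u, k_u+1\}$ and $\evc(G_v) \in \{k_v, k_v+1\}$, where $k_u=\mvc(G_u)$ and $k_v=\mvc(G_v)$ are read directly off the supplied mvc parameters. Since the whole set-up is symmetric under swapping the roles of $u$ and $v$ (and correspondingly $G_u$ and $G_v$), the case $\evc(G_u)=k_u+1,\ \evc(G_v)=k_v$ is obtained from $\evc(G_u)=k_u,\ \evc(G_v)=k_v+1$ by relabelling, so only three essentially distinct cases remain: $(k_u,k_v)$, $(k_u,k_v+1)$, and $(k_u+1,k_v+1)$.

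For Part~1, I would first invoke Theorem~\ref{thm:vc_evc_general-1} to obtain all of $\mathscr{M}(G,uv)$ --- in particular $\mvc(G)$ --- in constant time from $\mathscr{M}(G_u,uw)$ and $\mathscr{M}(G_v,vw)$; this is needed because the formulas of Lemmas~\ref{lem:algoevc-2} and \ref{lem:algoevc-3} refer to $\mvc(G)$. I then decide which of the three cases holds by the constant-time comparisons of $\evc(G_u)$ with $k_u$ and of $\evc(G_v)$ with $k_v$, and apply Lemma~\ref{lem:algoevc-1}, Lemma~\ref{lem:algoevc-2}, or Lemma~\ref{lem:algoevc-3} accordingly (using the symmetric form of Lemma~\ref{lem:algoevc-2} in the fourth, relabelled case). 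Every quantity occurring in these formulas, such as $\evc_w(G_u)$, $\mvc_{uw}(G_u)$, or $\evc_{vw}(G_v)$, is an element of the supplied sets $\mathscr{M}(G_u,uw)$, $\mathscr{E}(G_u,uw)$, $\mathscr{M}(G_v,vw)$, $\mathscr{E}(G_v,vw)$ (or of the just-computed $\mathscr{M}(G,uv)$), so each right-hand side is a fixed arithmetic expression evaluable in $O(1)$ time.

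For Part~2, with $\evc(G)$ and $\mathscr{M}(G,uv)$ now in hand, I would refine each of the three cases into the sub-cases matched by the remaining lemmas and verify exhaustiveness. In the $(k_u,k_v)$ case, Lemma~\ref{lem:algoevc-1} forces $\evc(G)\in\{k_u+k_v-1,\ k_u+k_v\}$, handled respectively by Lemmas~\ref{lem:case1algo-1} and \ref{lem:case1algo-2}. In the $(k_u,k_v+1)$ case, Observation~\ref{obs:computemvc1} gives $\mvc(G)\in\{k_u+k_v-1,\ k_u+k_v\}$, while Lemma~\ref{lem:deg>2-3} Part~1 together with Proposition~\ref{prop:occ_edge_vertices} confine $\evc(G)$ to $\{k_u+k_v,\ k_u+k_v+1\}$, yielding exactly the three sub-cases of Lemmas~\ref{lem:case2algo-1} (with $\mvc(G)=k_u+k_v-1$), \ref{lem:case2algo-2} (with $\mvc(G)=\evc(G)=k_u+k_v$), and \ref{lem:case2algo-3} (with $\evc(G)=k_u+k_v+1$). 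In the $(k_u+1,k_v+1)$ case, Lemma~\ref{lem:case3algo} supplies $\evc_u(G)$ and $\evc_v(G)$ directly, while Lemma~\ref{lem:case4algo} supplies $\evc_{uv}(G)$ via its own constant-time split on whether $\mvc_{uv}(G)\le k_u+k_v$. In every instance the selecting conditions are comparisons among already-available parameters, and the chosen lemma expresses the target value as a $\min$/$\max$ of such parameters, so each computation is again $O(1)$.

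The bulk of the work lies in the lemmas, not in the theorem; the only point requiring care is the exhaustiveness and mutual consistency of the sub-case partition --- in particular, checking in the $(k_u,k_v+1)$ case that the ranges forced on $\mvc(G)$ and $\evc(G)$ align so that the three invoked lemmas cover every admissible combination with none doubly or wrongly classified. I expect this verification, rather than any new estimate, to be the main (and essentially the only) obstacle, and it is a finite check using Observation~\ref{obs:computemvc1}, Lemma~\ref{lem:deg>2-3} Part~1, and Proposition~\ref{prop:occ_edge_vertices}.
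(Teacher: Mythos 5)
Your proposal is correct and follows essentially the same route as the paper, which derives the theorem directly from Lemmas~\ref{lem:algoevc-1}--\ref{lem:case4algo} via exactly the case analysis on $(\evc(G_u),\evc(G_v))$ that you describe, with the $(k_u+1,k_v)$ case handled by symmetry. Your explicit check that the sub-cases of Lemmas~\ref{lem:case2algo-1}--\ref{lem:case2algo-3} exhaust the admissible combinations of $\mvc(G)$ and $\evc(G)$, and your observation that $\mvc(G)$ must first be obtained via Theorem~\ref{thm:vc_evc_general-1}, merely make explicit what the paper leaves implicit in calling the theorem ``immediate.''
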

\section{A linear time algorithm to compute $\evc$ number}
In this section, we formulate a divide and conquer algorithm that takes a pair $(G, uv)$ as input where $G$ is a maximal outerplanar graph and $uv$ is an edge  on its outer face and recursively computes $\mathscr{M}(G,uv)$  and $\mathscr{E}(G,uv)$. The case when $G$ is a triangle forms the base case of the recursion and it is handled using Observation~\ref{obs:basecase}. Let $w=\Delta(uv)$. If $deg_G(u)>2$ and  $deg_G(v)>2$, then the recursion works on $(G_u, uw)$ and $(G_v, vw)$ using Theorem~\ref{thm:vc_evc_general-1} and  Theorem~\ref{thm:vc_evc_general-2}. Otherwise, suppose $u'$ is the degree-2 vertex among $u$ and $v$ and $v'$ is the other one. In this case,  the recursion works on $(G_{v'}, v'w)$ using Theorem~\ref{thm:vc_evc_deg2-1} and Theorem~\ref{thm:vc_evc_deg2-2}. 

A high level overview of our method is given in Algorithm~\ref{algo:alg1}. To obtain the linear time guarantee, we need to ensure that the time spent in steps other than the recursive calls is $O(1)$.
Theorems~\ref{thm:vc_evc_deg2-1}-\ref{thm:vc_evc_general-2} guarantee that lines $8-11$ and lines $17-20$ work in constant time, forming the most crucial part of our algorithm. However, we will have to avoid the explicit computation of the subgraphs and passing them as explicit parameters in the recursive calls using a refinement of  Algorithm~\ref{algo:alg1}.
For this refinement, we maintain a 
Vertex Edge Face Adjacency List data structure using which the following are possible: 
\begin{enumerate}                                                            \item For any edge $e$, we can traverse through the internal faces on which the edge lies in $O(1)$ time.
\item For any internal face $f$, we can traverse through the edges on the boundary of $f$ in $O(1)$ time.                                                                                       \end{enumerate} 
The details of the data structure is given in Fig.~\ref{fig:node} in the appendix. An example for Vertex Edge Face Adjacency List of a maximal outerplanar graph on 
four vertices is given in Fig.~\ref{fig:Example} in the appendix. Vertex Edge Face Adjacency List of an outerplanar graph  can be produced by modifying the algorithm suggested by N.~Chiba and T.~Nishizeki \cite{chiba1985arboricity} for listing all the 
triangles of a planar graph in linear time. An algorithm for producing the Vertex Edge Face Adjacency List is given in Algorithm~\ref{algo:alg3} in the appendix. 
 
By using the \textit{visit} field of faces in the Vertex Edge Face Adjacency List data structure, we avoid passing the subgraph as an explicit parameter in
recursive calls, maintaining the following invariants:
\begin{enumerate}
 \item The edge $e$ passed as parameter to the recursion is present in exactly one 
unvisited face $f$.
\item The $e$-segment of $G$ that contains the unvisited face $f$ is precisely the subgraph on which the recursive call in Algorithm~\ref{algo:alg1} 
would have been made. 
\end{enumerate}
 
Further, line numbers $2$, $3$, $5$ and $6$ of Algorithm~\ref{algo:alg1} can also be implemented in linear time using the new data structure. 
\begin{algorithm}[H]
  \caption{To compute $\mathscr{M}(G,uv)$ and $\mathscr{E}(G,uv)$ of a maximal outerplanar graph $G$ with $uv$ an edge on its outer face.}
   \textbf{Inputs}: A maximal outerplanar graph $G$ with at least three vertices, an edge $uv$ on the outer face of $G$.\\
 \textbf{Outputs}: $\mathscr{M}(G,uv)$ and $\mathscr{E}(G,uv)$.
  \label{algo:alg1}
  \begin{algorithmic}[1]
   \Procedure{EVC\_Parameters} {$G$, $uv$}
\State{$w \gets $ $\Delta(uv)$.}
\If{$G$ is a triangle}
    \State $\mvc(G)$=$\evc(G)$=$\mvc_u(G)$=$\mvc_v(G)$=$\mvc_{uv}(G)$=$2$, $\evc_u(G)$=$\evc_v(G)$=$2$, $\evc_{uv}(G)$=$3$.

\ElsIf{One end point of $uv$ is degree-2}
     \State $u'$=degree-2 end point of $uv$, $v'$=higher degree end point of $uv$.
       \State Recursively call EVC\_Parameters$(G_{v'},v'w)$.
       \State {Compute $\mvc(G)$ using Theorem~\ref{thm:vc_evc_deg2-1} (Part 1).}
       \State {Compute $\mvc_u(G)$, $\mvc_v(G)$ and $\mvc_{uv}(G)$ using  Theorem~\ref{thm:vc_evc_deg2-1} (Part 2).} 
       \State {Compute $\evc(G)$ using Theorem~\ref{thm:vc_evc_deg2-2} (Part 1).} 
        \State {Compute $\evc_{uv}(G)$ in constant using Theorem~\ref{thm:vc_evc_deg2-2} (Part 2).} 

\Else 

\State{$G_u \gets$ $uv$-segment of $G$ containing the vertex $u$}
\State{$G_v \gets$ $uv$-segment of $G$ containing the vertex $v$}
 \State  Recursively call EVC\_Parameters$(G_u,uw)$.
 \State Recursively call EVC\_Parameters$(G_v,vw)$.
 \State {Compute $\mvc(G)$ using Theorem~\ref{thm:vc_evc_general-1} (Part 1).} 
 \State {Compute $\mvc_u(G)$, $\mvc_v(G)$, $\mvc_{uv}(G)$ using  Theorem~\ref{thm:vc_evc_general-1} (Part 2).} 
 \State {Compute $\evc(G)$ using Theorem~\ref{thm:vc_evc_general-2} (Part 1).} 
  \State {Compute $\evc_u(G)$, $\evc_v(G)$, $\evc_{uv}(G)$ using  Theorem~\ref{thm:vc_evc_general-2} (Part 2).} 
         
  \EndIf
 \State\Return($\mathscr{M}(G,uv)$ and $\mathscr{E}(G,uv)$)
\EndProcedure
\end{algorithmic}
\end{algorithm}

Algorithm~\ref{algo:alg2} (in the appendix) gives a linear time refinement of Algorithm~\ref{algo:alg1}. The details of the correspondence between the two algorithms are given in the appendix.

\section{Conclusion}\vspace{-.15cm}
This paper presents a linear time algorithm for computing the eternal vertex cover number of maximal outerplanar graphs,
lowering the best known upper bound to the complexity of the problem from quadratic \cite{BP2021} time to linear.    The techniques
presented in the paper make crucial use of the planarity of the underlying graph to yield a divide and conquer algorithm for 
computing the $\evc$ number of a maximal outerplanar graph. Attempts to generalize
the techniques to maximal planar graphs may not be successful due to the known NP hardness result on the evc computation of biconnected internally triangulated planar graphs \cite{BABU2021}. However, 
the complexity status of the problem of computing the $\evc$ number of outerplanar graphs is open and may be attempted using the techniques developed in this work.

\clearpage
\section{Appendix - Refinement of Algorithm~\ref{algo:alg1} to Work in Linear Time} \label{sec:appendix}

\subsection{Vertex Edge Face Adjacency List} \label{sec:VEFAl}

To make Algorithm~\ref{algo:alg1} work in linear time, we should avoid the explicit passing of the subgraphs $G_u$ and $G_v$ as parameters
for the recursive calls. Instead, we do the following.

Using an algorithm suggested by N.~Chiba and T.~Nishizeki \cite{chiba1985arboricity}, in linear time all the 
triangles of a planar graph can be listed/ printed. Let $G$ be a maximal outerplanar graph with a fixed outerplanar embedding.
When $G$ is given as input to the algorithm of Chiba and Nishizeki, it will precisely list 
all the internal faces of the graph.
By modifying the algorithm proposed in \cite{chiba1985arboricity}, we produce a global data structure named 
\textbf{Vertex Edge Face Adjacency List}.
The details of the data structure is given in Fig.~\ref{fig:node}. The algorithm for producing this data structure is given in Algorithm~\ref{algo:alg3} and will be discussed later.
   
 \begin{figure}
  \centering
  \includegraphics[width=\textwidth]{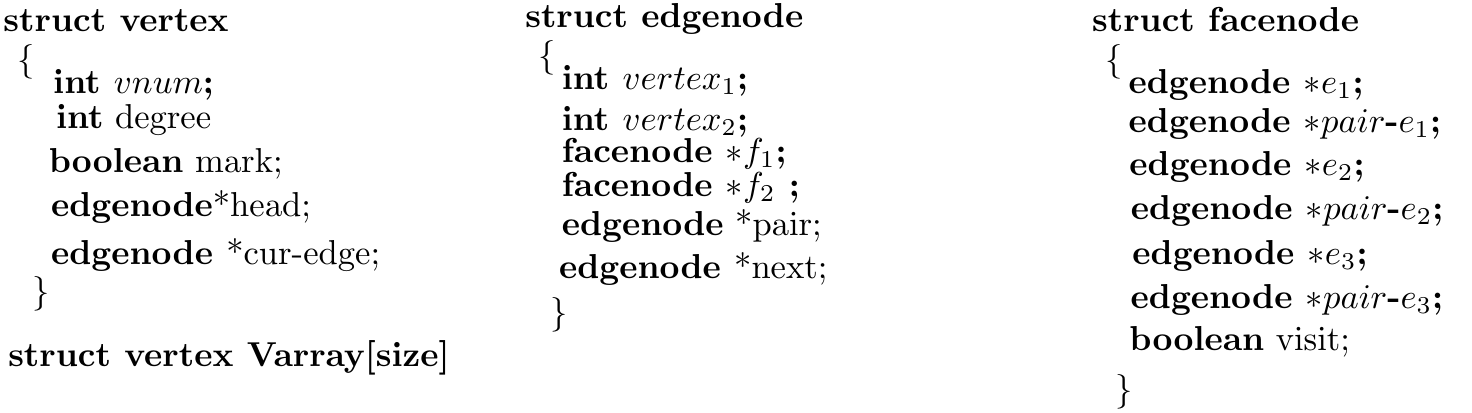}
  \caption{Details of the Vertex Edge Face Adjacency List data structure.}
  \label{fig:node}
 \end{figure}
 
In a Vertex Edge Face Adjacency List data structure, for any edge $v_iv_j$, there are links between the edge node $v_iv_j$ and the edge node $v_jv_i$. For each face node representing a face $f$ in $G$, there are links to the edge nodes corresponding to the bounding edges of $f$ and vice versa. Remember that each edge of $G$ has 2 edge nodes corresponding to it. Hence, for each face node $f$, there are 6 pointers to edge nodes. For each face node, there is a visit field which is initialized
to \textit{unvisited}. Each vertex node has a mark field, which is initialized to \textit{unmarked}, the cur-edge field which is initialized to \textit{null} and a pointer to the beginning of the adjacency list of that vertex.

The Vertex Edge Face Adjacency List of a maximal outerplanar graph on four vertices is given in Fig.~\ref{fig:Example}.




 \subsection{Refined algorithm and its correspondence with Algorithm~\ref{algo:alg1}}

We will first assume that Vertex Edge Face Adjacency List of the graph has been already produced and give the remaining details of the linear time implementation. Algorithm~\ref{algo:alg2} gives a linear time version of Algorithm~\ref{algo:alg1}. To compute the $\evc$ number of a
 maximal outerplanar graph, 
we pass an edge $uv$ on the outer face of the graph as input to the algorithm.
Initially, the visit field of all faces are set to \textit{unvisited}. Note that,
the edge $uv$ is present in exactly one unvisited face $uvw$ initially. 
On entering the algorithm, the face $uvw$ is marked as \textit{visited}.
Once this marking is done, the edges $uw$ and $vw$ can be in at most one unvisited face each.
Note that if $uw$ is not in any unvisited
face, then $deg_{G}(u)=2$. Otherwise $deg_{G}(u) > 2$. Similarly we can check if $deg_{G}(v)=2$ or not from the number of unvisited faces containing the edge $vw$.

\begin{algorithm}
  \caption{For computing $\mathscr{M}(G,uv)$ and $\mathscr{E}(G,uv)$ of a maximal outerplanar graph $G$ with $uv$ an edge on its outer face in linear time}
   \textbf{Inputs}: An edge $uv$ on the outer face of the maximal outerplanar graph $G$. \\
 \textbf{Output}: $\mathscr{M}(G,uv)$ and $\mathscr{E}(G,uv)$.\\
  \textbf{Assumption}: The Vertex Edge Face Adjacency List is maintained globally.
   \label{algo:alg2}
  \begin{algorithmic}[1]
   \Procedure{EVC\_Parameters} {}
     \State{Identify the unvisited face $uvw$ in which the edge $uv$ lie.} \\ \Comment{Possible in constant time by Vertex Edge Face Adjacency List}
     \State{ Set the visit field of the face $uvw$ to visited.}
\If{neither edge $uw$ nor edge $vw$ is in any unvisited faces} \\
\Comment{Possible to check in constant time using Vertex Edge Face Adjacency List}
   \State $\mvc(G)$=$\evc(G)$=$\mvc_u(G)$=$\mvc_v(G)$=$\mvc_{uv}(G)$=$2$, $\evc_u(G)$=$\evc_v(G)$=$2$, $\evc_{uv}(G)$=$3$.

\ElsIf{exactly one among $uw$ and $vw$ lie in an unvisited face} \\ \Comment{Possible to check in constant time using the Vertex Edge Face Adjacency List}
     \State  $e$ be the edge among $uw$ and $vw$ that lie in an unvisited face.
       \State $t \longleftarrow$ EVC\_Parameters$(e)$.
       \State {Compute $\mvc(G)$ in constant time using Theorem~\ref{thm:vc_evc_deg2-1} (part 1).}
       \State {Compute $\mvc_u(G)$, $\mvc_v(G)$ and $\mvc_{uv}(G)$ in constant time  using Theorem~\ref{thm:vc_evc_deg2-1} (part 2).} 
       \State {Compute $\evc(G)$ in constant time using Theorem~\ref{thm:vc_evc_deg2-2} (part 1).} 
        \State {Compute $\evc_{uv}(G)$ in constant using Theorem~\ref{thm:vc_evc_deg2-2} (part 2).} 

\Else 

 \State  $t_1 \longleftarrow$ EVC\_Parameters$(uw)$.
 \State $t_2 \longleftarrow$  EVC\_Parameters$(vw)$.
 \State {Compute $\mvc(G)$ in constant time using Theorem~\ref{thm:vc_evc_general-1} (part 1).} 
 \State {Compute $\mvc_u(G)$, $\mvc_v(G)$, $\mvc_{uv}(G)$ in constant time using  Theorem~\ref{thm:vc_evc_general-1} (part 2).} 
 \State {Compute $\evc(G)$ in constant time using Theorem~\ref{thm:vc_evc_general-2} (part 1).} 
  \State {Compute $\evc_u(G)$, $\evc_v(G)$, $\evc_{uv}(G)$ in constant time using  Theorem~\ref{thm:vc_evc_general-2} (part 2).} 
         
  \EndIf
 \State\Return($\mathscr{M}(G,uv)$ and $\mathscr{E}(G,uv)$)
\EndProcedure
\end{algorithmic}
\end{algorithm}


In subsequent recursive calls, the following invariants will be maintained: \\(1) the edge $e$ passed as parameter to the recursion is present in exactly one 
unvisited face $f$. \\
(2) The $e$-segment of $G$ that contains the unvisited face $f$ is precisely the subgraph on which the recursive call in Algorithm~\ref{algo:alg1} 
would have been made.\\
By maintaining the visit field of faces, we avoid passing the subgraph as an explicit parameter in recursive calls in 
Algorithm~\ref{algo:alg2}.

From these observations, we show that Algorithm~\ref{algo:alg2} is a linear time implementation of Algorithm~\ref{algo:alg1}.
The inputs of the function EVC\_Parameters  of Algorithm~\ref{algo:alg1} are a maximal outerplanar graph $G$ and an edge $uv$ on the outer face.
The algorithm is divided into following three cases:
\begin{enumerate}
 \item The input graph is a triangle (degrees of both the end vertices of the edge $uv$ are $2$). Refer to line number 3 of Algorithm~\ref{algo:alg1}. 
 \item The input graph is not a triangle and exactly one end vertex of the input edge $uv$ is of degree $2$. 
 Refer to line number 5 of Algorithm~\ref{algo:alg1}. 
 \item The input graph is not a triangle and both the end vertices of the input edge $uv$ has degree greater than $2$. 
 Refer to line number 12 of Algorithm~\ref{algo:alg1}.. 
\end{enumerate}
The input to the function EVC\_Parameters of Algorithm~\ref{algo:alg2} is an edge $uv$ on the outer face of the input graph.
For the input edge $uv$, let $uvw$ be the unvisited face in which $uv$ lies.
The three cases of Algorithm~\ref{algo:alg2} corresponding to the three cases of Algorithm~\ref{algo:alg1} are as follows:
\begin{enumerate}
 \item Both edges $uw$ and $vw$ lie in no unvisited face and thereby $uvw$ is a triangle.
 Refer to line number 5 of Algorithm~\ref{algo:alg2}. 
 \item Exactly one among the edges $uw$ and $vw$ lie in an unvisited face. Refer to line number 8 of Algorithm~\ref{algo:alg2}. 
 \item Both $uw$ and $vw$ lie in one unvisited face each. Refer to line number 16 of Algorithm~\ref{algo:alg2}. 
\end{enumerate}
In line number 2 of Algorithm~\ref{algo:alg1}, we find the unique common neighbor $w$ of $u$ and $v$ in the input graph.
Instead of this step, line number 2
of Algorithm~\ref{algo:alg2}, identifies $w$ as the third vertex in the unique unvisited face containing the edge $uv$. 
After this the face $uvw$ is marked as visited in line number 4 of Algorithm~\ref{algo:alg2} and subsequently the edges $uw$ and $vw$ are in at most
one unvisited face each. 

Now, we show that line number 3 of Algorithm~\ref{algo:alg1} is equivalent to line number 5 of Algorithm~\ref{algo:alg2}.
In line number 3 of Algorithm~\ref{algo:alg1}, we check whether the input graph is a triangle, which is the base case. 
Equivalently, by the invariants stated above, in Algorithm~\ref{algo:alg2}, it is enough to check if the 
$uv$-segment of $G$ containing the face $uvw$ is a triangle.
In   
line 5 of Algorithm~\ref{algo:alg2}, we do the following in constant time using vertex edge face adjacency list:
We check if $uw$ or $vw$ lie in any unvisited face. If neither $uw$ nor $vw$ lie in an unvisited face,
then we can infer that $uv$-segment of $G$ containing the face $uvw$ is a triangle. 

In line 5 of Algorithm~\ref{algo:alg1}, we check whether exactly one end vertex of the edge $uv$ is of degree $2$. 
In line number 8 of Algorithm~\ref{algo:alg2}, if $uw$ (respectively, $vw$) is in any univisited
face, then we can infer that the  degree of the vertex $u$ (respectively, $v$) is greater than $2$ in the $uv$-segment of 
$G$ that contains the face $uvw$. Otherwise, the degree of $u$ (respectively, $v$) in the $uv$-segment of $G$
that contains the face $uvw$ is equal to $2$.

In line number 7 of Algorithm~\ref{algo:alg1}, we recursively call the function EVC\_Parameters with two input parameters: 
(1) the graph obtained by deleting the degree-2 endpoint $u'$ of the edge $uv$ from the input graph and 
(2) the edge $v'w$ bounding the face $uvw$, where degree of $v'$ is greater than $2$ in the input graph.
Since in Algorithm~\ref{algo:alg2}, $uvw$ is already marked as visited in line number 4,
 it is enough to invoke the recursive call with the edge $e$ as parameter, where $e$ is the bounding edge 
of the face $uvw$ with one unvisited face adjacent to it.
This is achieved in line number 12 of Algorithm~\ref{algo:alg2}, in which
we recursively call the function EVC\_Parameters with the edge $e$ as input,
where $e$ is that edge among the edges $uw$ and $vw$ which lies in an unvisited face.
Note that the invariants of the Algorithm~\ref{algo:alg2} are maintained.
 
 The equivalence between line number 12 of Algorithm~\ref{algo:alg1} and 
 line number 16 of Algorithm~\ref{algo:alg2} follows from our arguments so far. 
 In line number 15 (respectively, 16) of Algorithm~\ref{algo:alg1}, a recursive call to the algorithm is made
 with input parameters: (1) $G_u$ (respectively, $G_v$), the $uw$-segment (respectively, $vw$-segment)
 of $G$ that does not contain the edge $vw$ (respectively, $uw$) and (2) the edge $uw$ (respectively, $vw$).
 Note that the edges bounding the face $uvw$, other than $uv$, are used as parameters in these two calls.
 Equivalently, in line number 17 and 18 of Algorithm~\ref{algo:alg2}, recursive calls are made respectively with 
 the edges $uw$ and $vw$ as input parameters.
 Since the face $uvw$ is marked as visited already, the invariants of the algorithm are maintained here as well.

 Thus, we can see that Algorithm~\ref{algo:alg2} is linear time
implementation of Algorithm~\ref{algo:alg1}.
\subsection{Building the Vertex Edge Face Adjacency List}
We can use the algorithm given below to populate the Vertex Edge Face Adjacency List of a maximal outerplanar graph.
\begin{algorithm}[H]
     \begin{algorithmic}[1]
     \Procedure {Populate Vertex Edge Face Adjacency List} {}
      \caption{For populating the Vertex Edge Face Adjacency List}
  \State \textbf{Inputs}: Bare vertex edge face adjacency list $L$ of a maximal outerplanar graph $G$, 
   in which face nodes are absent and links to face nodes from edge nodes are initialized to NULL.  
   \State \textbf{Outputs}: Modified vertex edge face adjacency list $L$ with all relevant details filled in.
    \label{algo:alg3}
     \State Create a copy $L'$ of $L$. 
      \State For any edge node $v_iv_j$ in $L'$, keep a pointer to and from the
            corresponding edge node in $L$. 
      \State Sort the vertices of $G$ in decreasing order of vertex degree. (using bucket sort in linear time)
       \For{\textit{i=1 to n}} 
           
             \State  Let $v_i=$ next highest degree vertex.
             \State  $p=s=Varray[v_i].head$ in $L'$. 
              \While {($s !=NULL$)} \Comment {Parse the edge list of $v_i$}.
                \State $x = s.vertex_2$
                \State $Varray[x].mark=Marked$
                \State $Varray[x].cur$-$edge= s$
                \State $s=s.next$
              \EndWhile
            
             \While {($p!=NULL$)}       
                 \State $v_j=p.vertex_2$
                 \State $q= Varray[v_j].head$ \Comment{Beginning of adjacency list of $v_j$}
                   \While {($q!=NULL$)}
                       \State $v_k=q.vertex_2$
                      \If {$(Varray[v_k].status=Marked)$}
                         \State $r=Varray[v_k].cur$-$edge$ 
                         \State $p'=$Copy$(p)$ in $L$, $q'=$Copy$(q)$ in $L$ and $r'=$Copy$(r)$ in $L$.
                         \State In $L$, create face node $f$ with links to edge nodes $p'$, $q'$ and \\ \hspace{3cm}$r'$ and reverse. 
                          \State Simultaneously insert links from $f$ to pair nodes of $p'$, $q'$ \\ \hspace{3cm}and $r'$ and reverse.
                      \EndIf
                      \State{$q=q.next$};
                    \EndWhile
                \State $v_j.mark=Unmarked$
                \State $p=p.next$;
             \EndWhile
           \State In $L'$, $Varray[v_i].mark=Deleted$ and delete the edge list of $v_i$. \\ 
            \hspace{0.8cm}Whenever edge $v_i,v_k$ is deleted, delete $v_k,v_i$ also.
         \EndFor
 
  \EndProcedure
 \end{algorithmic}
 \end{algorithm}

We assume a bare Vertex Edge Face Adjacency List $L$ as the input graph representation in which the face nodes and the links to and from face nodes are absent. 
The Algorithm~\ref{algo:alg2}, begins by creating a copy $L'$ of the bare Vertex Edge Face Adjacency List $L$ and links from an edge node in $L'$
to its copy in $L$. (For simplicity, in the data structure given in Fig.~\ref{fig:node},
we have avoided a field for this link.) During the running of Algorithm~\ref{algo:alg2}, we traverse through $L'$ and each time a new
face node is created corresponding to a face $f_i$ in $G$ , we assign links from the newly created face node to the edge nodes (bounding edges of $f_i$)
in $L$ and also the reverse links.
This step can be done in constant time by utilizing the links from each edge nodes of $L'$ to its copy in $L$. 
During the running of Algorithm~\ref{algo:alg2}, some edge nodes of $L'$ gets deleted while their copies in $L$ are retained. Note that, this way of
deletion of edge nodes of $L'$ is necessary to obtain linear time complexity, as it was done in \cite{chiba1985arboricity}.
Hence, the linear time complexity of the construction follows from the analysis in~\cite{chiba1985arboricity}. 

  \begin{figure}
 \centering
  \includegraphics[scale=.95]{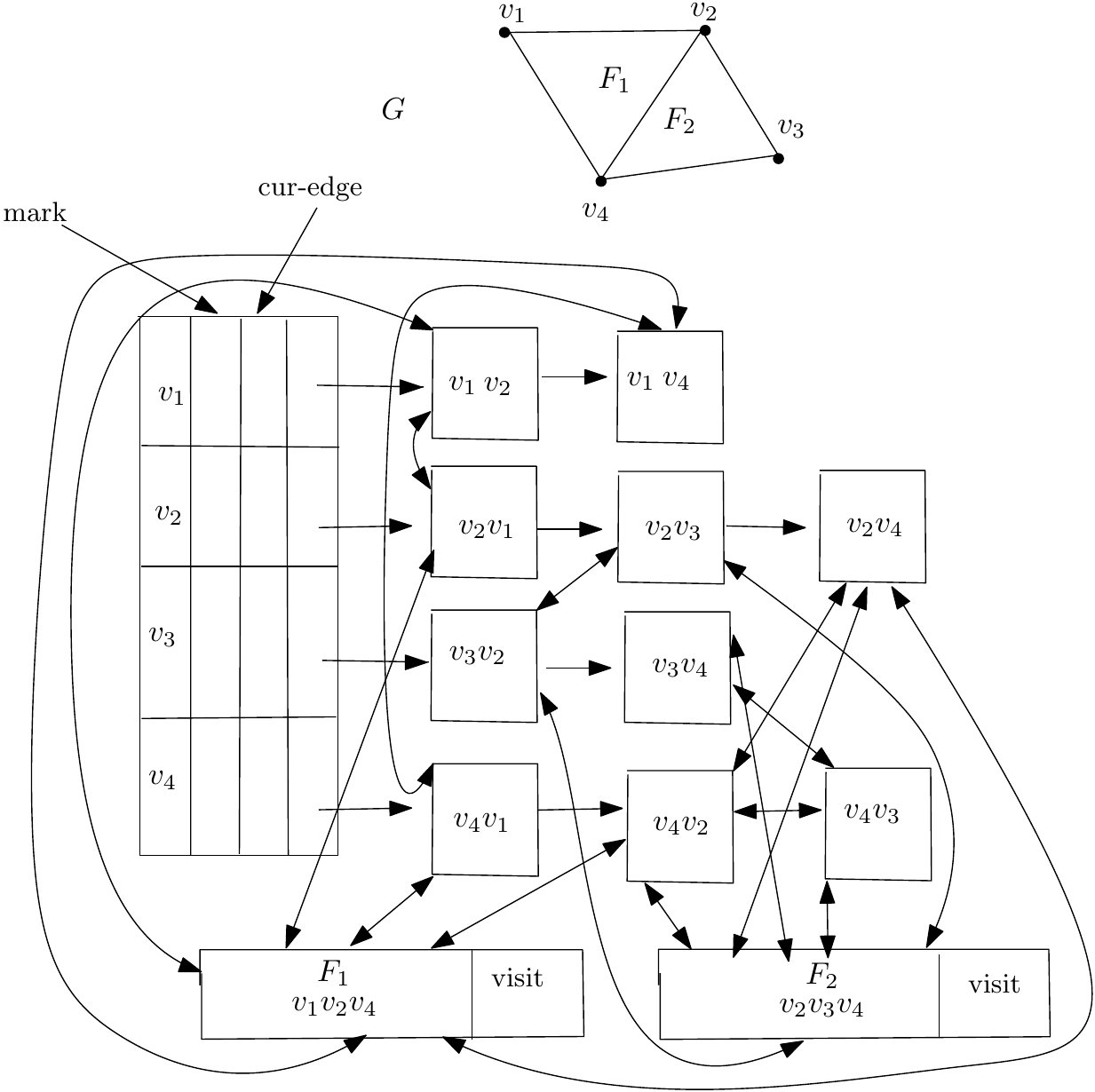}
  \caption{The Vertex Edge Face Adjacency List of a maximal outerplanar graph $G$ on four vertices and two faces. $F_1$ and $F_2$ denote the face nodes.}
   \label{fig:Example}
  \end{figure}
\end{document}